\DeclarePairedDelimiter{\floor}{\lfloor}{\rfloor}
\DeclarePairedDelimiter{\ceil}{\lceil}{\rceil}
\newcommand{\aggregate}[2]{\underset{#2}{\operatornamewithlimits{#1\ }}}
\newcommand{\Mod}[1]{\ \mathrm{mod}\ #1}
\def\BibTeX{{\rm B\kern-.05em{\sc i\kern-.025em b}\kern-.08em
    T\kern-.1667em\lower.7ex\hbox{E}\kern-.125emX}}
\newcommand{\divides}{\mid}
\newcommand{\notdivides}{\nmid}
\begin{document}

\title{Explicit Constructions of MBR and MSR Codes \\ for Clustered Distributed Storage\\
}

\author{Jy-yong~Sohn,~\IEEEmembership{Student Member,~IEEE,}
	Beongjun~Choi,~\IEEEmembership{Student Member,~IEEE,}
	and~Jaekyun~Moon,~\IEEEmembership{Fellow,~IEEE}
	\thanks{The material in this paper was presented in part at the 2018 IEEE International Symposium on Information Theory \cite{sohn2018class}. This work is in part supported by the National Research Foundation of Korea under Grant No.2019R1I1A2A02061135. The authors are with the School of Electrical Engineering, Korea Advanced
Institute of Science and Technology, Daejeon, 34141, Republic of Korea
(e-mail: jysohn1108@kaist.ac.kr, bbzang10@kaist.ac.kr, jmoon@kaist.edu).
}
}


\maketitle

\begin{abstract}
This paper considers capacity-achieving coding for the clustered form of distributed storage that reflects practical storage networks. 
To reflect the clustered structure with limited cross-cluster communication bandwidths, nodes in the same cluster are set to communicate $\beta_I$ symbols, while nodes in other clusters can communicate $\beta_c \leq \beta_I$ symbols with one another.  
We provide two types of exact regenerating codes which achieve the capacity of clustered distributed storage: the minimum-bandwidth-regenerating (MBR) codes and the minimum-storage-regenerating (MSR) codes.
First, we construct MBR codes for general parameter settings of clustered distributed storage. 
The suggested MBR code is a generalization of an existing code proposed by Rashmi \textit{et al.}, for scenarios where storage nodes are dispersed into $L>1$ clusters.
The proposed MBR code for the $\beta_c=0$ case 
requires a much smaller field size compared to existing local MBR codes. 
Secondly, we devise MSR codes for clustered distributed storage. Focus is given on two important cases: $\epsilon=0$ and $\epsilon \in [\frac{1}{n-k}, 1]$,
where $\epsilon=\beta_c/\beta_I$ is the ratio of the available cross- to intra-cluster repair bandwidths, $n$ is the total number of distributed nodes and $k$ is the number of contact nodes in data retrieval.
The former represents the scenario where cross-cluster communication is not allowed, while the latter corresponds to the case of minimum node storage overhead. 
For $\epsilon=0$, two existing locally repairable codes are proven to be MSR codes for the clustered model. For $\epsilon \in [\frac{1}{n-k}, 1]$, existing MSR codes for the non-clustered model are applicable to clustered scenarios with a simple modification.
Finally, under the settings of $\epsilon=\frac{1}{n-k}$ and $n=kL$, an MSR code is suggested which is based on simple MDS codes and requires a smaller field size for symbols than the existing code for $L\ge3$.
\end{abstract}

\begin{IEEEkeywords}
Clustered distributed storage, regenerating codes, node repair, network coding
\end{IEEEkeywords}

\section{Introduction}
Motivated by the need to handle the data deluge in modern networks,
large-scale distributed storage systems (DSSs) are now widely deployed.
With the aid of network coding, DSSs are highly tolerant to failure events, allowing users reliable access to the stored data.
The early work of \cite{dimakis2010network} obtained a closed-form expression for \textit{capacity} $\mathcal{C}(\alpha, \gamma)$, the maximum reliably storable file size, as a function of two important system parameters: the node capacity $\alpha$ and the bandwidth $\gamma$ for regenerating a failed node. The authors of \cite{dimakis2010network} also found a fundamental trade-off relationship between $\alpha$ and $\gamma$, to satisfy $\mathcal{C}(\alpha, \gamma) = \mathcal{M}$, i.e., in reliably storing a given file with size $\mathcal{M}$.
Based on the information-theoretic analysis of DSS in \cite{dimakis2010network}, several researchers \cite{rashmi2011optimal, rashmi2009explicit, shah2012TIT, suh2011exact, cadambe2013asymptotic, ernvall2014codes, Goparaju2017TIT, ye2017explicit} 
have developed explicit network coding schemes which achieve capacity of DSSs. All these works considered homogeneous setting where each node has identical storage capacity and communication bandwidth.

However, in the real world, data centers arrange their storage devices into multiple \textit{racks}, essentially forming clusters, where the available cross-rack communication bandwidth is considerably smaller than the available intra-rack bandwidth. In an effort to reflect this practical nature of data centers, several researchers recently considered the concept of clustered topologies in DSSs \cite{sohn2016capacity, prakash2017storage, hu2017optimal, sohn2018capacity,prakash2018TIT, choi2017secure, choi2019secure}, where each rack corresponds to a cluster containing multiple nodes.
Especially in \cite{sohn2016capacity}, the authors of the present paper considered clustered DSSs with $n$ storage nodes dispersed in $L$ clusters, where each node has storage capacity $\alpha$. To reflect the difference between intra-cluster and cross-cluster bandwidths, \cite{sohn2016capacity, sohn2018capacity} use two parameters for indicating repair bandwidths: $\beta_I$ for the repair bandwidth among nodes in the same cluster and $\beta_c$ for repair bandwidth between nodes in different clusters. 
Under this setting, storage capacity $\mathcal{C}(\alpha, \beta_I, \beta_c)$ of clustered DSSs -- the maximum reliably retrievable file size via a contact of arbitrary $k<n$ nodes -- has been obtained in \cite{sohn2018capacity}. 

\subsection{Main Contributions}
This paper designs explicit coding schemes which achieve capacity $\mathcal{C}(\alpha, \beta_I, \beta_c)$ of clustered DSSs computed in \cite{sohn2018capacity}.
Mainly, we focus on two types of exact-regenerating codes, the minimum-bandwidth-regenerating (MBR) code and the minimum-storage-regenerating (MSR) code, under the setting of maximal helper nodes. Both codes achieve capacity of clustered DSSs, while the former uses the minimum repair bandwidth and the latter assumes the minimum node storage overhead.
The MBR codes suggested in this paper cover arbitrary system parameter values of $n,k,L,\beta_I$ and $\beta_c$, while different types of code constructions are proposed depending on $\epsilon=\beta_c/\beta_I$, the ratio of cross- to intra-cluster repair bandwidths.
Moreover, when $\beta_I = \beta_c$, the suggested MBR code reduces to an existing MBR code in \cite{rashmi2009explicit}, i.e., the proposed scheme can be viewed as a generalization of the code construction in \cite{rashmi2009explicit}. 
Regarding MSR codes, we consider two important scenarios of $\epsilon=0$ and $\epsilon \in [\frac{1}{n-k}, 1]$.
The former represents the system where cross-cluster
communication is not possible. The latter corresponds
to the range of $\epsilon$ values that can achieve the minimum node storage overhead of $\alpha = \mathcal{M}/k$, where $\mathcal{M}=\mathcal{C}(\alpha, \beta_I, \beta_c)$ is the file size that we want to reliably store in a clustered DSS.
When $\epsilon=0$, it is shown that appropriate application of locally repairable codes suggested in \cite{papailiopoulos2014locally,tamo2016optimal} can be used as MSR
codes for general $n,k,L$ settings with the application rule
depending on the parameter setting.
For the $\epsilon \in [\frac{1}{n-k},1]$ 
case, we first show that existing MSR codes for non-clustered DSSs, e.g., \cite{suh2011exact, rashmi2011optimal}, can be used as MSR codes for clustered DSSs with arbitrary $n,k,L$, when the repair rule is modified to adjust the given $\epsilon$ setting. 
Moreover, a simple MSR code with a small required field size is suggested for $L\ge3$ under the conditions of $\epsilon=\frac{1}{n-k}$ and $n=kL$.
The proposed MBR and MSR coding schemes can be implemented in wireless storage networks having a clustered topology or data centers with multiple racks, depending on the network bandwidth constraints and the node storage overhead constraints.


\subsection{Related Works}

The MBR codes suggested in this paper for clustered DSSs are based on the MBR codes in \cite{rashmi2009explicit, shah2012TIT} for non-clustered DSSs with $\beta_I =\beta_c$.
However, critical contributions are added in our code constructions: 1) we modify the existing works to reflect the clustered nature of distributed storage having $\beta_I \geq \beta_c$, 2) a novel mathematical analysis (which includes applying the majorization theory \cite{vaidyanathan2010signal}) on the modified codes is developed to prove the exact regeneration property and the data reconstruction property for general $n,k,L,\epsilon$ settings.
Here we note that the code proposed in the present paper can be viewed as a generalization of the MBR code in \cite{rashmi2009explicit} for application to clustered DSSs; the code proposed in Section \ref{Sec:CodeDesign_nonzero_epsilon} reduces to the code in \cite{rashmi2009explicit} by setting $\beta_I = \beta_c$. 

Considering an extreme network scenario when a failed node can contact only a limited number of survived nodes, i.e., $\beta_c = 0$ in the present paper, the concept of locally repairable codes (LRCs) are considered in various papers \cite{gopalan2012locality, papailiopoulos2014locally, kamath2013ISIT, kamath2014TIT, rawat2014optimal, silberstein2018locality}. Some of these works were on codes with local regeneration, i.e., LRCs which minimize the bandwidth required in the local repair processes. Especially, within the class of codes with local regeneration, the authors of \cite{kamath2013ISIT, kamath2014TIT} provided local MBR codes where each local code is an MBR, while the authors of \cite{kamath2014TIT, rawat2014optimal} considered local MSR codes. 
Here we note that the construction rule of local MBR codes in \cite{kamath2014TIT, kamath2013ISIT} are similar to that of the code suggested in Section \ref{Sec:CodeDesign_zero_epsilon} of the present paper. However, 
the required field size is much less for the code in Section \ref{Sec:CodeDesign_zero_epsilon} compared to the existing local MBR codes.
A detailed comparison is provided in Section \ref{Sec:comparison_with_local_MBR}.

Regarding the clustered topology of storage nodes in multi-rack data centers, there have been some research \cite{tebbi2014code, hu2017optimal, sohn2018class, chen2019explicit, sahraei2017increasing, prakash2017storage} on designing regenerating codes for DSSs with multiple clusters, but to a limited extent.
The coding scheme suggested in \cite{tebbi2014code} is well suited for multi-rack systems, but not proven to be an MBR or an MSR code. 
The authors of \cite{hu2017optimal} focused on designing MSR codes 
under the assumption of maximum intra-cluster communication $\beta_I$. The focus is different from the present paper which designs MSR and MBR codes depending on given $\beta_I$ and $\beta_c$.
The authors of \cite{sahraei2017increasing} provided an explicit coding scheme which reduces the repair bandwidth of clustered DSSs under the condition that each failed node can be exactly regenerated by contacting any one of other clusters. 
The approach of \cite{sahraei2017increasing} is different from that of the present paper in the sense that it does not consider the scenario with unequal intra- and cross-cluster repair bandwidths.
Moreover, the coding scheme in \cite{sahraei2017increasing} is shown to be a minimum-bandwidth-regenerating (MBR) code for some limited parameter setting, while the present paper proposes MBR codes for arbitrary parameters.
An MSR code for clustered DSSs has been suggested in \cite{prakash2017storage}, but the data retrieval condition of \cite{prakash2017storage} is different from that of the present paper. The authors of \cite{prakash2017storage} considered the scenario where data can be collected by contacting arbitrary $k$ out of $n$ \emph{clusters}, while data is retrieved by contacting any $k$ out of $n$ \emph{nodes} in the present paper. Thus, the storage versus repair bandwidth tradeoff curves for the present paper and \cite{prakash2017storage} are different. In short, the code in \cite{prakash2017storage} and the code in the present paper achieve different tradeoff curves.
Another recent work \cite{chen2019explicit} suggested MSR codes for clustered DSS. On one hand, the authors of \cite{chen2019explicit} assumed that intra-cluster repair bandwidth does not incur any cost, and only concerned with minimizing the amount of information transmitted across different clusters.
On the other hand, the present paper considers both the intra- and cross-cluster repair burdens, and proposes MBR and MSR codes which minimize the overall repair burden for a given $\epsilon=\beta_c/\beta_I$, the ratio of cross- to intra-cluster repair bandwidth. Thus, the code in \cite{chen2019explicit} and the code in this paper achieve different tradeoff curves.
Compared to the conference version \cite{sohn2018class} of the current work, this paper 
adds an MSR code construction rule for $\epsilon \in [\frac{1}{n-k}, 1]$, and provides another MSR code for $\epsilon = \frac{1}{n-k}, n=kL$ having a much smaller field size. Moreover, MBR codes for general $n,k,L,\epsilon$ setting are added in the present paper, which were not included in \cite{sohn2018class}.

\newtheorem{theorem}{Theorem}
\newtheorem{lemma}{Lemma}
\newtheorem{corollary}{Corollary}
\newtheorem{definition}{Definition}
\newtheorem{construction}{Construction}
\newtheorem{condition}{Condition}
\newtheorem{prop}{Proposition}
\newtheorem{remark}{Remark}

\section{Problem Setup}

\subsection{Preliminaries on Clustered Distributed Storage}

\begin{figure}
	\centering
	\subfloat[][Encoding \& Distributing a file]{\includegraphics[width=65mm ]{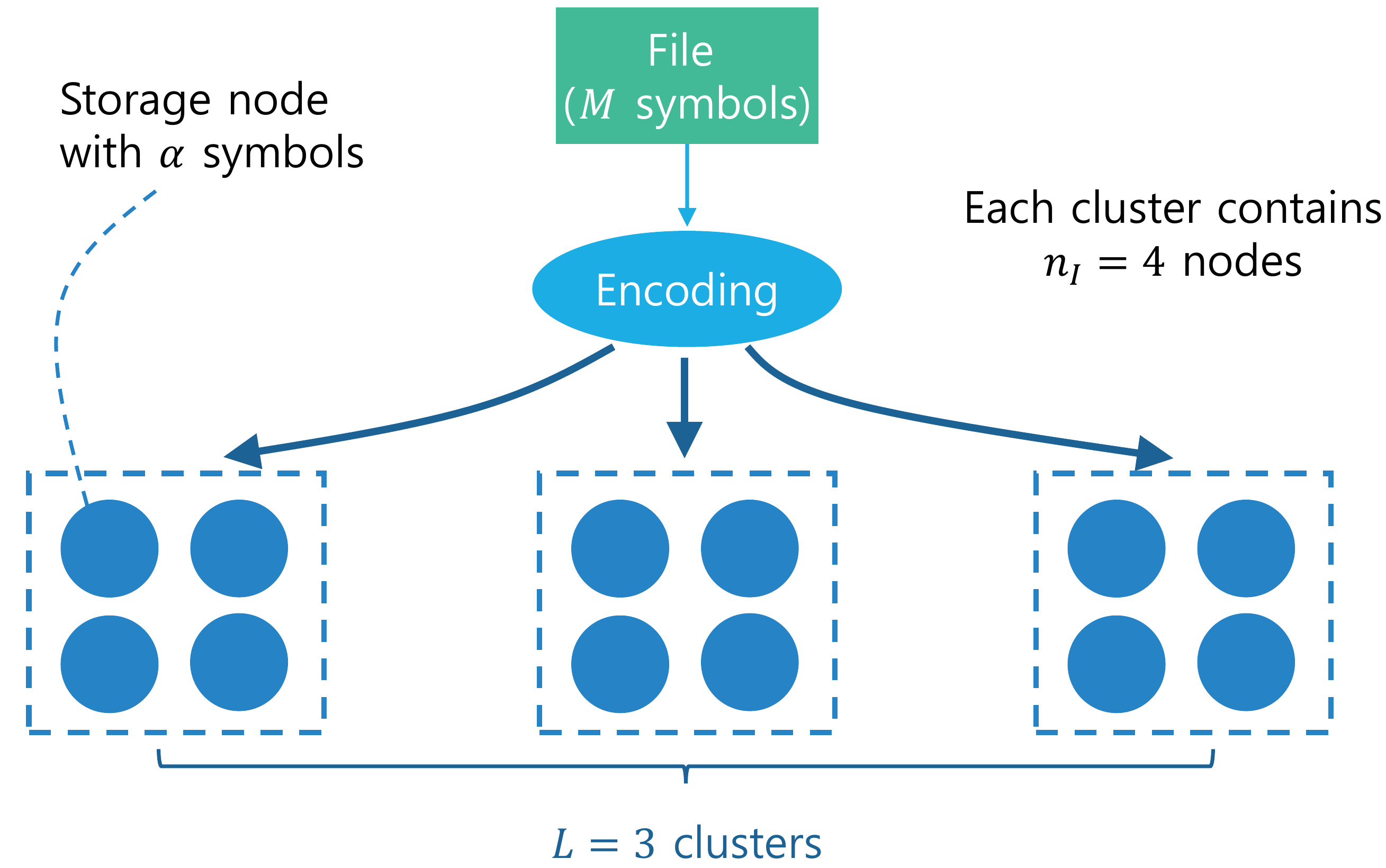}\label{Fig:Clustered_DSS}}
	\quad \quad
	\vspace{1mm}
	\subfloat[][Repairing a failed node]{\includegraphics[width=70mm]{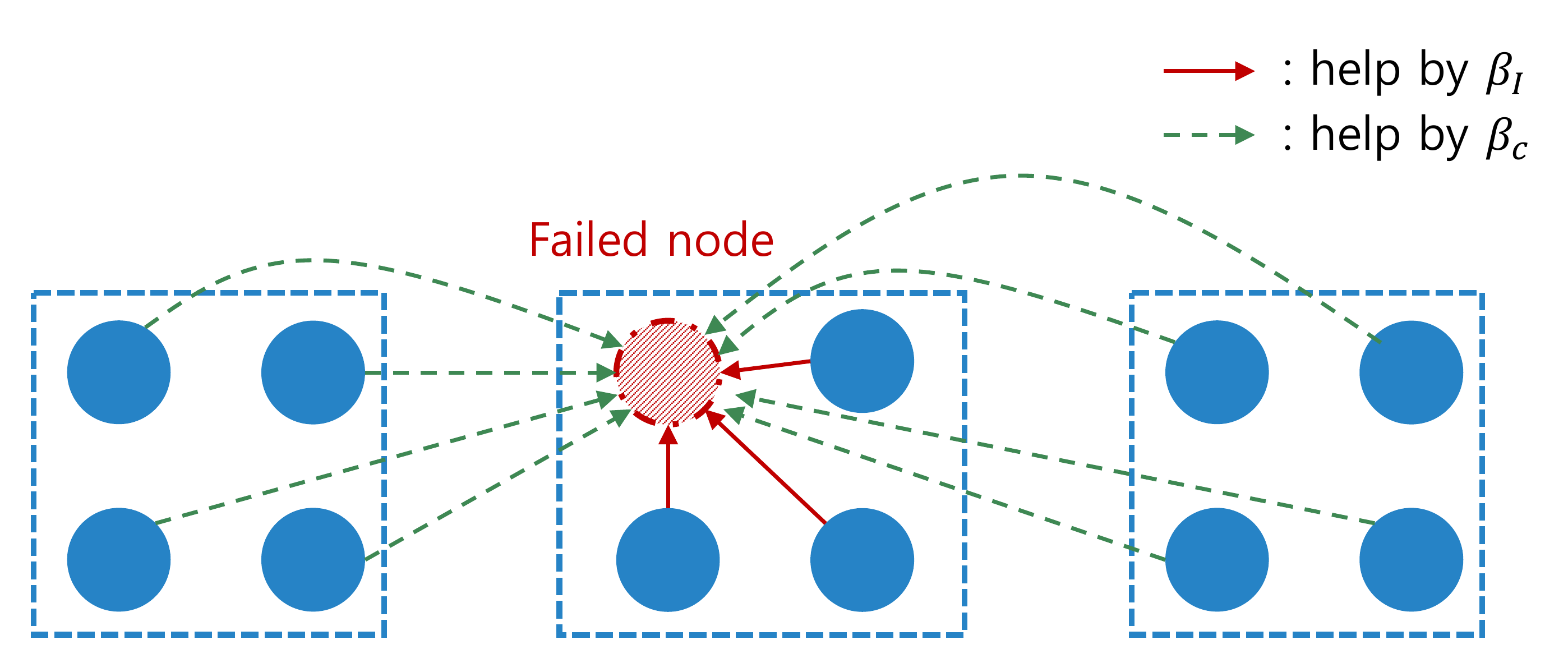}\label{Fig:repair_process}}
	\caption{Clustered DSS for $n=12, L=3, n_I=n/L = 4$. When a node fails, each helper node within the same cluster transmits $\beta_I$ symbols, while each helper node in other clusters sends $\beta_c$ symbols.}
	\label{Fig:MBR_coding_schemes}
\end{figure}

Here, we briefly summarize the clustered distributed storage system model, which is originally suggested in \cite{sohn2016capacity, sohn2018capacity}. The clustered model is motivated by the practical scenario where
data centers consist of storage nodes dispersed into multiple racks, with the cross-rack bandwidth small compared to the intra-rack bandwidth \cite{ahmad2014shufflewatcher}. However, the application is not limited to the multi-rack data centers; the clustered model can be naturally applied to general distributed storage systems (e.g. wireless sensor network) with clustered topology, where the amount of available cross-cluster bandwidth is different from that of the available intra-cluster bandwidth.  
Consider a file with $\mathcal{M}$ symbols to be stored. The file is encoded and distributed into $n$ storage nodes, which are uniformly dispersed into $L$ clusters. We use the notation $n_I=n/L$ to indicate the number of nodes in each cluster. Moreover, each node contains $\alpha$ symbols.
Fig. \ref{Fig:Clustered_DSS} provides the system model when $n=12, L=3$.

Consider the scenario where a data collector wants to retrieve the original file. Assume that it contacts arbitrary $k<n$ nodes in the system, irrespective of the cluster which contains the node.  
Moreover, when a node fails, it is regenerated by contacting $n_I-1$ survived nodes in the same cluster and $n-n_I$ nodes in other clusters. This implies that the total number of helper nodes are $n-1$, i.e., the regeneration process utilizes the maximum number of helper nodes. According to Proposition 1 of \cite{sohn2018capacity}, this maximum helper node setup is the capacity-maximizing choice. 
In the regeneration process, each node in the same cluster transmits $\beta_I$ symbols, while each node in other clusters sends $\beta_c$ symbols. Fig. \ref{Fig:repair_process} illustrates the regeneration process of a failed node.
Recall that the clustered model considers practical scenarios with limited cross-cluster communication bandwidth, compared to the abundant intra-cluster communication bandwidth. Thus, we follow the assumption $\beta_I \geq \beta_c$ used in \cite{sohn2018capacity}. Moreover, we use the parameter $\epsilon=\beta_c/\beta_I$, which is defined in \cite{sohn2018capacity} and represents the ratio of the cross-cluster repair bandwidth to the intra-cluster repair bandwidth. It is easily seen that $0 \leq \epsilon \leq 1$ holds. According to the regeneration process illustrated above, the \textit{repair bandwidth} $\gamma$ is expressed as
\begin{equation} \label{Eqn:gamma}
\gamma = (n_I-1)\beta_I + (n-n_I)\beta_c.
\end{equation}

In the clustered DSS with given parameters of $(n,k,L,\alpha, \beta_I, \beta_c)$, the authors of \cite{sohn2018capacity} defined \textit{capacity}, the maximum amount of data that is reliably retrievable by contacting arbitrary $k$ nodes.
The capacity expression for the clustered DSS is obtained in Theorem 1 of \cite{sohn2018capacity}:
\begin{equation}\label{Eqn:Capacity of clustered DSS_rev}
\mathcal{C}(\alpha, \beta_I, \beta_c)\hspace{-0.5mm}= \hspace{-1mm} \sum_{i=1}^{n_I}\hspace{-0.25mm} \sum_{j=1}^{g_i} \hspace{-0.25mm} \min \{\alpha, \hspace{-0.25mm}\rho_i\beta_I + (n-\rho_i - (\hspace{-0.25mm}\sum_{m=1}^{i-1}g_m) - j) \beta_c \},
\end{equation}
where
\begin{align}
\rho_i &= n_I - i, \nonumber\\ 
g_m &=
\begin{cases}
\lfloor k/n_I \rfloor + 1, & m \leq (k \text{ mod } n_I) \\
\lfloor k/n_I \rfloor, & otherwise.\label{Eqn:g_m}
\end{cases}
\end{align}


\begin{figure}[t]
	\centering
	\includegraphics[height=40mm]{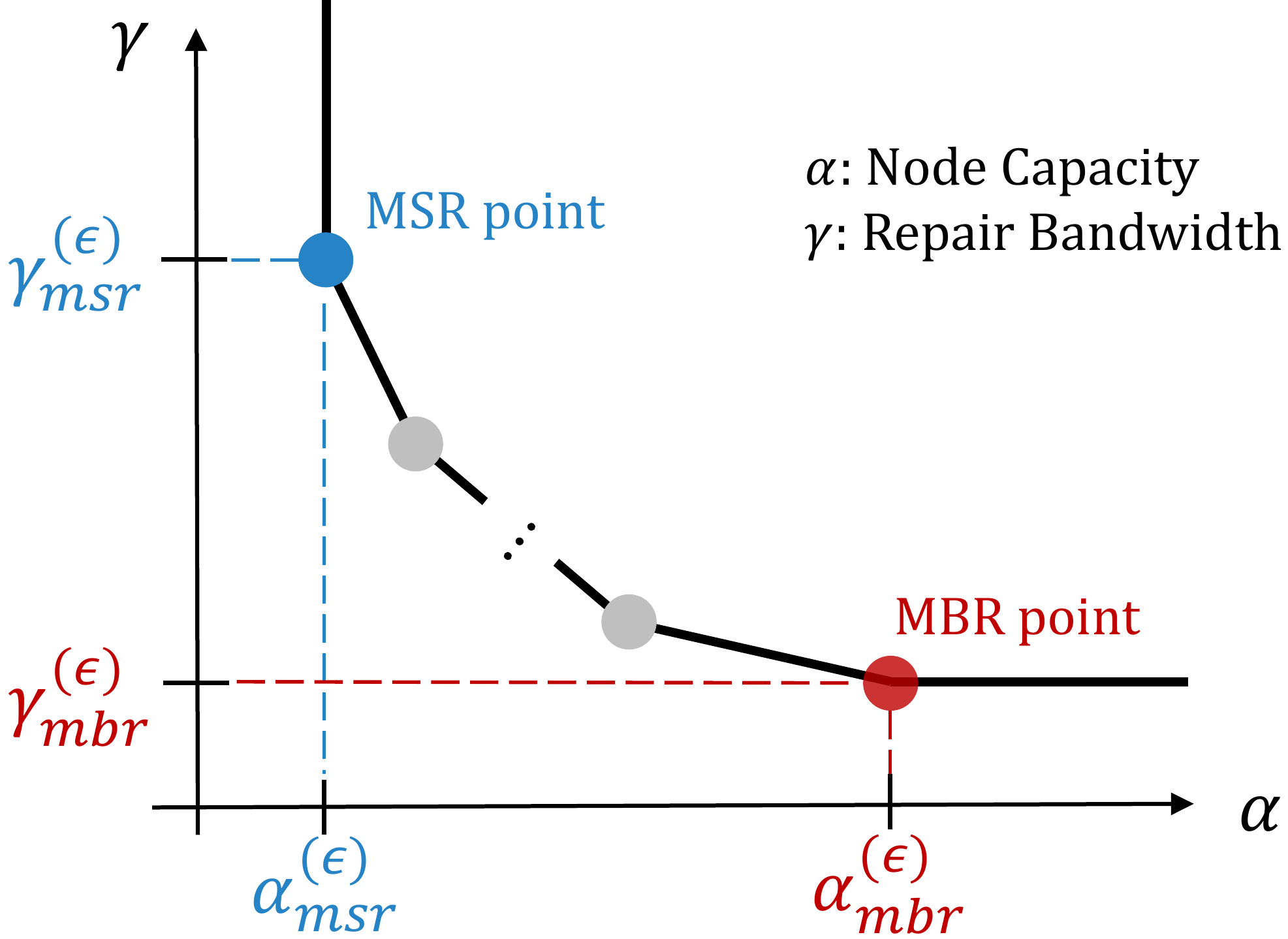}
	\caption{Set of feasible $(\alpha, \gamma)$ points for arbitrary given $\epsilon=\beta_c/\beta_I$. The feasible points achieve capacity $\mathcal{C}(\alpha, \gamma) = \mathcal{M}$ by using minimum resources. We focus on minimum-bandwidth-regenerating (MBR) point and minimum-storage-regenerating (MSR) point.}
	\label{Fig:MBR_MSR_points}
	\vspace{-0.15in}
\end{figure}

\subsection{Target Problem: Constructing MBR and MSR Codes for Clustered Distributed Storage}
We consider clustered distributed storage systems specified by parameters ($n,k,L,\alpha,\beta_I,\beta_c$). Recall that the maximum file size that can be reliably stored in this system is capacity $\mathcal{C}(\alpha, \beta_I, \beta_c)$ in \eqref{Eqn:Capacity of clustered DSS_rev}. For a given $\epsilon=\beta_c / \beta_I$, we denote the capacity as $\mathcal{C}(\alpha, \gamma)$ using two important system parameters: \textit{node storage capacity} $\alpha$ and \textit{repair bandwidth} $\gamma$ in \eqref{Eqn:gamma}.
Consider the target of reliably storing a file with size $\mathcal{M}$. The feasible 
$(\alpha,\gamma)$ points which satisfy $\mathcal{C}(\alpha,\gamma) = \mathcal{M}$
are obtained in Corollary 1 of \cite{sohn2018capacity}, and are illustrated in Fig. \ref{Fig:MBR_MSR_points} which show a tradeoff relationship.
In this figure, we focus on two important points $-$ MBR and MSR points $-$ which achieve capacity by using minimum system resources. Among the points having the minimum repair bandwidth $\gamma$, one with the smallest $\alpha$ is called the minimum-bandwidth-regenerating (MBR) point. The explicit regenerating coding schemes which achieve the MBR point is called MBR codes.
Also, among the points with the minimum node capacity $\alpha$, one having the smallest $\gamma$ is called the minimum-storage-regenerating (MSR) point. The regenerating codes which achieve the MSR point is called the MSR codes.
We denote the resource pair for an MBR code as $(\alpha, \gamma) = (\alpha_{\text{mbr}}^{(\epsilon)}, \gamma_{\text{mbr}}^{(\epsilon)})$. Similarly, an MSR code has the resource pair $(\alpha, \gamma) = (\alpha_{\text{msr}}^{(\epsilon)}, \gamma_{\text{msr}}^{(\epsilon)})$.
Note that the explicit form of the resource pairs are given in Section \ref{Section:ResourcePair}.

We aim at constructing MBR and MSR codes which achieve storage capacity $\mathcal{C}(\alpha, \gamma) = \mathcal{M}$.
Under the clustered system model, the necessary and sufficient condition for a code being MBR (or MSR) is given as below.
The validity of the suggested codes can be confirmed from Condition \ref{Condition:code_general}.
\begin{condition}[Necessary and sufficient condition for valid MBR/MSR codes]\label{Condition:code_general}
	Consider a clustered distributed storage system with parameters  
	($n,k,L,\alpha,\beta_I,\beta_c$), and denote $\epsilon=\beta_c/\beta_I$. Under this setting, a code is MBR if and only if it satisfies the following properties for $(\alpha, \gamma) = (\alpha_{\text{mbr}}^{(\epsilon)}, \gamma_{\text{mbr}}^{(\epsilon)})$:
	\begin{itemize}
		\item Each node contains $\alpha$ symbols.
		\item \textbf{(Exact regeneration)} Suppose a node fails. Then, each survived node within the cluster containing the failed node transmits $\beta_I$ symbols and each node residing in another cluster transmits $\beta_c =\epsilon \beta_I$ symbols. In total, $\gamma = (n_I-1)\beta_I + (n-n_I) \beta_c $ symbols are transmitted in the repair process.
		\item \textbf{(Data reconstruction)} Contacting arbitrary $k$ out of $n$ nodes suffices to recover the original file of size $\mathcal{M} = \mathcal{C} (\alpha, \gamma)$.
	\end{itemize}
Similarly, a code is MSR if and only if the three properties above hold for $(\alpha, \gamma) = (\alpha_{\text{msr}}^{(\epsilon)}, \gamma_{\text{msr}}^{(\epsilon)})$.
\end{condition}

In the first part of this paper, we design explicit MBR codes for clustered distributed storage with arbitrary system parameters of $(n,k,L,\alpha, \beta_I, \beta_c)$. 
The explicit code construction depends on the $\epsilon=\beta_c/\beta_I$ ratio. In Section \ref{Sec:CodeDesign_zero_epsilon}, an MBR code is constructed when $\epsilon=0$, or equivalently $\beta_c = 0$. Moreover, the MBR code construction for $0 < \epsilon \leq 1$ (i.e., $\beta_c \neq 0$) is provided in Section \ref{Sec:CodeDesign_nonzero_epsilon}.
In the second part of this paper, we design MSR codes for clustered distributed storage. According to Theorem 3 of \cite{sohn2018capacity}, node storage capacity of an MSR code satisfies
\begin{align}
\alpha_\text{msr}^{(\epsilon)} &= \mathcal{M}/k \quad \quad \text{if  } \frac{1}{n-k} \leq \epsilon \leq 1, \label{Eqn:alpha_MSR_large_epsilon_val} \\
\alpha_\text{msr}^{(\epsilon)} &> \mathcal{M}/k \quad \quad \text{if  } 0 \leq \epsilon < \frac{1}{n-k}. \label{Eqn:alpha_MSR_small_epsilon_val}
\end{align}
Recall that $\alpha = \mathcal{M}/k$ is the minimum storage overhead of each node in order to recover file size $\mathcal{M}$ by contacting $k$ nodes, 
as stated in \cite{dimakis2010network}. Thus, the minimum node storage overhead $\alpha=\mathcal{M}/k$ is achievable only for $\epsilon \in [\frac{1}{n-k}, 1]$.
Motivated by this fact, we focus on two cases: Section \ref{Section:MSR_epsilon_0} provides the explicit MSR code construction when $\epsilon=0$, and MSR codes for the $\epsilon \in [\frac{1}{n-k}, 1]$ case is given in Section \ref{Section:MSR_epsilon_positive}.

\subsection{Resource Pairs $(\alpha, \gamma)$ for MBR and MSR Codes}\label{Section:ResourcePair}

Here we recall explicit expressions for $(\alpha, \gamma)$ resource pairs of MBR and MSR codes for clustered distributed storage, which are stated in \cite{sohn2018capacity}. 
First, we define notations of
\begin{align}
q &= \left\lfloor\dfrac{k}{n_I}\right\rfloor, \label{Eqn:quotient} \\
r &= k \Mod{n_I} = k-qn_I, \label{Eqn:remainder}
\end{align}
which represent the quotient and the remainder of $k/n_I$.

\begin{prop}[A modified version of Corollary 2 of \cite{sohn2018capacity}]
	Consider a clustered distributed storage system \cite{sohn2018capacity}, which aims at storing file of size $\mathcal{M}$. For a given $\epsilon=\beta_c/\beta_I$ satisfying $0 \leq \epsilon \leq 1$, the resource pair $(\alpha, \gamma)$ for an MBR code is 
	\begin{equation}\label{Eqn:MBR_resource_pair}
	(\alpha_{\text{mbr}}^{(\epsilon)}, \gamma_{\text{mbr}}^{(\epsilon)}) = 
	(\mathcal{M}/s_0^{(\epsilon)}, \mathcal{M}/s_0^{(\epsilon)})
	\end{equation}
	where
	\begin{align}
	s_0^{(\epsilon)} &= \frac{\sum_{i=1}^k \big\{(n_I - h_i) + \epsilon (n-n_I-i+h_i) \big\}}{n_I - 1 + \epsilon (n-n_I)}, \\
	h_i &= \min \{t \in [n_I]: \sum_{m=1}^{t} g_m \geq i \}, \label{Eqn:h_i}
	\end{align}
	and $g_m$ is defined in \eqref{Eqn:g_m}. Moreover, the resource pair $(\alpha, \gamma)$ for an MSR code is expressed as 
	\begin{equation}\label{Eqn:MSR_resource_pair}
	(\alpha_{\text{msr}}^{(\epsilon)}, \gamma_{\text{msr}}^{(\epsilon)}) =
	\begin{cases}
	\left(\frac{\mathcal{M}}{k-q}, \frac{\mathcal{M}}{k-q} (n_I-1)\right), & \epsilon=0 \\
	\left(\frac{\mathcal{M}}{k}, \frac{\mathcal{M}}{k} \cdot \frac{n-n_I+(n_I-1)/\epsilon}{n-k}\right), & \frac{1}{n-k} \leq \epsilon \leq 1
	\end{cases}	
	\end{equation}
	where $q$ is in \eqref{Eqn:quotient}.
\end{prop}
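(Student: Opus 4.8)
The plan is to recover the two claimed pairs as the two extreme corners of the feasible $(\alpha,\gamma)$ tradeoff region, which for a fixed $\epsilon=\beta_c/\beta_I$ is the one cut out by the capacity formula \eqref{Eqn:Capacity of clustered DSS_rev}. Substituting $\beta_c=\epsilon\beta_I$ into \eqref{Eqn:Capacity of clustered DSS_rev} and using $\gamma=[(n_I-1)+\epsilon(n-n_I)]\beta_I$ from \eqref{Eqn:gamma}, $\mathcal{C}$ becomes a sum of $k=\sum_{i=1}^{n_I}g_i$ terms $\min\{\alpha,L_{i,j}\}$, where each $L_{i,j}=\rho_i\beta_I+(n-\rho_i-\sum_{m=1}^{i-1}g_m-j)\beta_c$ is linear in $\beta_I$ with a nonnegative coefficient. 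First I would reindex the $k$ terms by a single index $\ell\in[k]$, assigning term $\ell$ to block $i=h_\ell$ of \eqref{Eqn:h_i} at position $j=\ell-\sum_{m=1}^{h_\ell-1}g_m$; since then $\sum_{m=1}^{h_\ell-1}g_m+j=\ell$ and $\rho_{h_\ell}=n_I-h_\ell$, the $\ell$-th linear part equals $L_{(\ell)}=(n_I-h_\ell)\beta_I+(n-n_I+h_\ell-\ell)\beta_c$. Using $\beta_I\ge\beta_c\ge0$ one checks that $L_{(\ell)}$ is nonincreasing in $\ell$, so $\max_\ell L_{(\ell)}=L_{(1)}=(n_I-1)\beta_I+(n-n_I)\beta_c=\gamma$ and $\min_\ell L_{(\ell)}=L_{(k)}$.

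For the MBR pair, $\mathcal{C}(\alpha,\gamma)\le\sum_{\ell=1}^k L_{(\ell)}$ always, with equality precisely when $\alpha\ge\max_\ell L_{(\ell)}=\gamma$; since $\sum_\ell L_{(\ell)}$ increases with $\gamma$, the smallest $\gamma$ admitting an $\alpha$ with $\mathcal{C}(\alpha,\gamma)=\mathcal{M}$ is the one solving $\sum_{\ell=1}^k L_{(\ell)}=\mathcal{M}$, and at that $\gamma$ the smallest such $\alpha$ equals $\gamma$, which already gives $\alpha_{\text{mbr}}^{(\epsilon)}=\gamma_{\text{mbr}}^{(\epsilon)}$. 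It then remains to evaluate $\sum_{\ell=1}^k L_{(\ell)}=\beta_I\sum_{i=1}^k\{(n_I-h_i)+\epsilon(n-n_I-i+h_i)\}=s_0^{(\epsilon)}[(n_I-1)+\epsilon(n-n_I)]\beta_I=s_0^{(\epsilon)}\gamma$, so that $\mathcal{C}=\mathcal{M}$ forces $\gamma_{\text{mbr}}^{(\epsilon)}=\mathcal{M}/s_0^{(\epsilon)}$, i.e. \eqref{Eqn:MBR_resource_pair}. The only step requiring attention here is the bookkeeping in this reindexing identity; once the map $\ell\mapsto(h_\ell,j)$ is pinned down it is routine.

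For the MSR pair, $\mathcal{C}$ is a sum of $k$ terms each at most $\alpha$, so $\alpha\ge\mathcal{M}/k$. When $\epsilon\in[\tfrac{1}{n-k},1]$, \eqref{Eqn:alpha_MSR_large_epsilon_val} already supplies $\alpha_{\text{msr}}^{(\epsilon)}=\mathcal{M}/k$, so only the minimal $\gamma$ is left: $\mathcal{C}=\mathcal{M}=k\alpha$ forces every term to equal $\alpha$, hence $L_{(k)}\ge\alpha$; with $h_k=n_I$ this reads $(n-k)\epsilon\beta_I\ge\alpha=\mathcal{M}/k$, and minimizing $\gamma=[(n_I-1)+\epsilon(n-n_I)]\beta_I$ under this constraint gives exactly the second line of \eqref{Eqn:MSR_resource_pair}. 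When $\epsilon=0$, the $g_{n_I}=q$ terms lying in block $i=n_I$ have linear part $\rho_{n_I}\beta_I=0$, so they contribute $0$ regardless of $\alpha$; thus $\mathcal{C}\le(k-q)\alpha$, forcing $\alpha_{\text{msr}}^{(0)}=\mathcal{M}/(k-q)$, and then $\mathcal{C}=\mathcal{M}$ forces the remaining $k-q$ terms to equal $\alpha$, whose tightest constraint (from block $i=n_I-1$) is $1\cdot\beta_I\ge\alpha$, so minimizing $\gamma=(n_I-1)\beta_I$ yields the first line of \eqref{Eqn:MSR_resource_pair}. The two spots I expect to need genuine care are the reindexing identity for $s_0^{(\epsilon)}$ and the vanishing of the block-$n_I$ terms at $\epsilon=0$; the role of the threshold $\epsilon=\tfrac{1}{n-k}$ -- namely that it is exactly the condition making $\alpha=\mathcal{M}/k$ reachable once the physical constraint $\beta_I\le\alpha$ is imposed -- is already handed to us by \eqref{Eqn:alpha_MSR_large_epsilon_val} and \eqref{Eqn:alpha_MSR_small_epsilon_val}, so the complementary regime need not be revisited.
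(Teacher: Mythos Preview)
Your approach is correct and takes a genuinely different route from the paper's. The paper does not rederive the corner points from the capacity formula at all: for the MBR pair it simply invokes Corollary~2 of \cite{sohn2018capacity} verbatim, and for the two MSR cases it again quotes the general MSR expression from that corollary (in the form $(\alpha,\gamma)=\bigl(\tfrac{\mathcal M}{\tau+\sum_{i>\tau}z_i},\,\tfrac{\mathcal M}{\tau+\sum_{i>\tau}z_i}\cdot\tfrac{\sum_{i>\tau}z_i}{s_\tau}\bigr)$) and then evaluates the auxiliary quantities $\tau,z_t,s_t$ in each regime, showing $\tau=k-q$ when $\epsilon=0$ and plugging in $s_{k-1}=\tfrac{(n-k)\epsilon}{(n_I-1)+\epsilon(n-n_I)}$ when $\epsilon\ge\tfrac{1}{n-k}$. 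Your argument instead goes back to the capacity expression \eqref{Eqn:Capacity of clustered DSS_rev} itself and reads off the two corners directly: the reindexing $\ell\mapsto(h_\ell,j)$ makes the monotonicity of the $L_{(\ell)}$ transparent, the identities $L_{(1)}=\gamma$ and $\sum_\ell L_{(\ell)}=s_0^{(\epsilon)}\gamma$ explain \emph{why} the MBR corner sits at $\alpha=\gamma=\mathcal M/s_0^{(\epsilon)}$, and the vanishing of the block $i=n_I$ at $\epsilon=0$ explains the $k-q$ in the denominator of $\alpha_{\text{msr}}^{(0)}$. This is more self-contained and more illuminating than the paper's citation-plus-bookkeeping, at the cost of re-proving facts that \cite{sohn2018capacity} already packages.

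One small caveat, shared with the paper: your step ``with $h_k=n_I$'' in the $\epsilon\ge\tfrac{1}{n-k}$ case, and your identification of block $i=n_I-1$ as the binding constraint in the $\epsilon=0$ case, both presuppose $q=\lfloor k/n_I\rfloor\ge 1$ (equivalently $k\ge n_I$). When $q=0$ one has $g_{n_I}=0$, the last nonempty block is $i=r=k<n_I$, and $h_k=k<n_I$, so those two steps need adjustment. The paper's own simplifications in its appendices carry the same tacit assumption (e.g.\ its derivation $\tau=k-q$ would give $\tau=k\notin\{0,\dots,k-1\}$ when $q=0$), so this is not a defect of your route relative to theirs, but it is worth flagging if you want the argument to cover all $k$.
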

\begin{proof}
The resource pair of an MBR code can be obtained directly from Corollary 2 of \cite{sohn2018capacity}. Regarding an MSR code, the proofs for $\epsilon=0$ and $\frac{1}{n-k} \leq \epsilon \leq 1$ cases are given in Appendices \ref{Section:proof_of_prop_param_small_epsilon} and \ref{Section:proof_of_prop_param_large_epsilon}, respectively.
\end{proof}
Note that the storage node capacity is equal to the repair bandwidth, i.e., 
\begin{equation} \label{Eqn:alpha_gamma_MBR}
\alpha_{\text{mbr}}^{(\epsilon)} = \gamma_{\text{mbr}}^{(\epsilon)},
\end{equation}
for MBR codes with $0 \leq \epsilon \leq 1$.

\subsection{Notations}
Throughout the paper, we use some useful additional notations. For a positive integer $n$, we denote $\{1,2,\cdots, n\}$ as $[n]$. 
For positive integers $a$ and $b$, we use the notation $a \divides b$ if $a$ divides $b$. Similarly, we write $a \notdivides b$ if $a$ cannot divide $b$.
Moreover, a clustered distributed storage system having parameters of $n,k,L$ is denoted as an $[n,k,L]-$clustered DSS.
Note that this clustered system can be expressed as a two-dimensional representation, as in Fig. \ref{Fig:2dim_representation}. In this structure, we denote the $j^{th}$ storage node in the $l^{th}$ cluster as $N{(l,j)}$. 
A vector is denoted as $\mathbf{v}$ using a bold-faced lower case letter. 
For positive integers $m$ and $n$, the set $\{y_m, y_{m+1}, \cdots, y_n\}$ is represented as $\{y_i\}_{i=m}^{n}$.
The binomial coefficient $\frac{n!}{k! (n-k)!}$ is written as $\binom{n}{k}$.

Finally, we recall definitions on the locally repairable codes (LRCs) in \cite{papailiopoulos2014locally, tamo2016optimal}.  As defined in \cite{tamo2016optimal}, an $(n,k,r)-$LRC represents a code of length  $n$, which is encoded from $k$ information symbols. Every coded symbol of the $(n,k,r)-$LRC can be regenerated by accessing at most $r$ other symbols. As defined in \cite{papailiopoulos2014locally}, an $(n,r,d,\mathcal{M},\alpha)-$LRC takes a file of size $\mathcal{M}$ and encodes it into $n$ coded symbols, where each symbol is composed of $\alpha$ bits. Moreover, any coded symbol can be regenerated by contacting at most $r$ other symbols, and the code has the minimum distance of $d$.

\begin{figure}[!t]
	\centering
	\includegraphics[height=25mm]{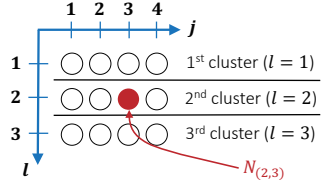}
	\caption{Two-dimensional representation of clustered distributed storage ($n=12, L=3, n_I = n/L = 4$)}
	\label{Fig:2dim_representation}
	\vspace{-0.15in}
\end{figure}

\section{MBR Code Design for $\epsilon = 0$}\label{Sec:CodeDesign_zero_epsilon}

We construct an MBR code and confirm that the proposed code satisfies Condition \ref{Condition:code_general} and achieves capacity of clustered distributed storage. This section considers scenarios with zero cross-cluster repair bandwidth, i.e., $\beta_c = 0$ or $\epsilon=0$.

\subsection{Parameter Setting for MBR Code with $\epsilon=0$}

Without a loss of generality, we set $\beta_I = 1$. Then, from \eqref{Eqn:alpha_gamma_MBR} and \eqref{Eqn:gamma}, the resource pair $(\alpha, \gamma) = (\alpha_{\text{mbr}}^{(0)}, \gamma_{\text{mbr}}^{(0)})$ is given as
\begin{equation}\label{Alpha_MBR_GammaC0}
\alpha_{\text{mbr}}^{(0)} = \gamma_{\text{mbr}}^{(0)} = (n_I-1)\beta_I + (n-n_I)\beta_c = n_I - 1.
\end{equation}
Moreover, combining (\ref{Eqn:MBR_resource_pair}) and (\ref{Alpha_MBR_GammaC0}), the maximum reliably storable file size $\mathcal{M}$ is expressed as
\begin{equation}\label{Eqn:Capacity for gamma_c = 0}
\mathcal{M}=s_0^{(0)} \gamma_{\text{mbr}}^{(0)} = s_0^{(0)} (n_I - 1) = \sum_{i=1}^k (n_I-  h_i).
\end{equation}

\subsection{Code Construction}\label{Section:Code design for zero gammac}

We now propose an explicit coding scheme with parameters $\alpha = \gamma =  n_I - 1, \beta_I =1, \beta_c = 0$, satisfying the following:
\begin{itemize}
	\item A failed node is repaired within the cluster, by receiving $\beta_I = 1$ symbol from each node in the same cluster.
	\item Contacting any $k$ out of $n$ nodes can recover file size $\mathcal{M}$.
\end{itemize}
The suggested coding scheme is based on the repair-by-transfer (RBT) scheme devised in \cite{rashmi2009explicit, shah2012TIT}. Compared to these existing works on non-clustered (homogeneous) DSSs, the present paper applies the RBT scheme by reflecting the clustered nature of DSSs. 
Before specifying the code construction rule, we define a matrix $V_t$ as follows.

\begin{definition}
Consider a fully connected graph $G_t$ with $t$ vertices. Then, $V_t$ is defined as the incidence matrix of $G_t$, which is a $t \times {t \choose 2}$ matrix given by:
\begin{equation}
V_t(j,i)=
\begin{cases}
1, & \text{if } i^{th} \text{ edge is connected to } j^{th} \text{ node} \\
0, & \text{otherwise}.
\end{cases}
\end{equation}
\end{definition}
Note that Fig. \ref{Fig:incidence_matrix} gives an example of graph $G_t$ and its incidence matrix $V_t$ for $t=4$. Using this definition, we provide an explicit code construction rule suitable for the $\epsilon=0$ case in Algorithm \ref{Algo:MBR_code_zero}, under the setting of arbitrary $n,k,L$. 
For given $\mathcal{M}$ source symbols $\mathbf{s}= [s_1, \cdots, s_{\mathcal{M}}]$, the algorithm specifies 1) the encoding rule and 2) the rule for distributing coded symbols to $n$ nodes $\{N(l,j)\}_{l \in [L], j \in [n_I]}$ in $L$ clusters.

\begin{algorithm}[!t]
	\caption{MBR code construction for $\epsilon =0$}
	\label{Algo:MBR_code_zero}
	\begin{algorithmic}
		\REQUIRE System parameters $n, k, L$ and $\mathcal{M}$ in \eqref{Eqn:Capacity for gamma_c = 0},
		\\ \hspace{6.5mm} Source symbol vector $\mathbf{s}= [s_1, \cdots, s_{\mathcal{M}}]^T$. 
		\ENSURE Symbols stored on nodes $\{N(l,j)\}_{l \in [L], j \in [n_I]}$
		\STATE \textbf{Step 1.} Generate encoded symbols $\{c_1, \cdots, c_{\theta}\}$: \STATE \hspace{0mm}Apply a $(\theta, \mathcal{M})-$MDS code to source symbol vector $\mathbf{s}$, resulting in $\mathbf{c}=[c_1, \cdots, c_{\theta}]^T$ . Here, we have
		\begin{align}\label{Eqn:theta_gammac0}
		 \theta = {n_I\choose 2} L .
		 \end{align}
		\STATE \textbf{Step 2.} Distribute encoded symbols to nodes under the following rule: 
		\STATE \begin{itemize}
			\item Node $N(l,j)$ stores symbol $c_{(l-1){n_I \choose 2}+i}$ if and only if $V_{n_I}(j,i) = 1$. Here, the ranges of parameters are $l\in [L],  j\in [n_I], $ and $ i\in \left[ {n_I \choose 2} \right].$
		\end{itemize}
	\end{algorithmic}
\end{algorithm}

The suggested coding scheme has the following properties, which are useful for proving Theorem \ref{Thm:CodeZero}.

\begin{lemma}\label{Prop:MBR for zero gammac}
Suppose the code in Algorithm \ref{Algo:MBR_code_zero} is applied to an $[n,k,L]-$clustered DSS with $\epsilon=0$. Then, the system satisfies all of the following:
	\begin{enumerate}[label=(\alph*)]
		\item Each coded symbol $c_i$ is stored in exactly two different storage nodes. \label{Prop:MBR for zero gammac_first}
		\item Nodes in different clusters do not share any coded symbols. \label{Prop:MBR for zero gammac_second}
		\item Nodes in the same cluster share exactly one coded symbol.\label{Prop:MBR for zero gammac_third}
		\item Each node contains $\alpha = n_I - 1$ coded symbols.\label{Prop:MBR for zero gammac_fourth}
	\end{enumerate}
\end{lemma}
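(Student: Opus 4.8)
The plan is to verify each of the four claims directly from the construction in Algorithm~\ref{Algo:MBR_code_zero}, translating the combinatorial structure of the incidence matrix $V_{n_I}$ of the complete graph $G_{n_I}$ into statements about which coded symbols land on which nodes. The central observation is that the columns of $V_{n_I}$ are indexed by edges of $G_{n_I}$, and each column has exactly two nonzero entries, located at the two endpoints of that edge. The distribution rule says node $N(l,j)$ holds $c_{(l-1)\binom{n_I}{2}+i}$ exactly when $V_{n_I}(j,i)=1$, so the set of symbols stored in cluster $l$ is exactly $\{c_{(l-1)\binom{n_I}{2}+1},\dots,c_{l\binom{n_I}{2}}\}$, and within that cluster the assignment of symbols to nodes mirrors the vertex--edge incidence of $G_{n_I}$.

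For part~\ref{Prop:MBR for zero gammac_first}, I would argue that since each column of $V_{n_I}$ (an edge of $G_{n_I}$) is incident to exactly two vertices, each symbol $c_{(l-1)\binom{n_I}{2}+i}$ is stored in exactly the two nodes $N(l,j)$ with $j$ an endpoint of edge $i$; moreover the index blocks $(l-1)\binom{n_I}{2}+[\binom{n_I}{2}]$ are disjoint across $l$, so no symbol is shared between clusters, which simultaneously gives part~\ref{Prop:MBR for zero gammac_second}. For part~\ref{Prop:MBR for zero gammac_third}, two distinct nodes $N(l,j)$ and $N(l,j')$ in the same cluster share symbol $c_{(l-1)\binom{n_I}{2}+i}$ iff edge $i$ is incident to both vertices $j$ and $j'$ in $G_{n_I}$; since $G_{n_I}$ is simple and complete, there is exactly one such edge (the edge $\{j,j'\}$), giving exactly one shared symbol. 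For part~\ref{Prop:MBR for zero gammac_fourth}, the number of symbols stored at $N(l,j)$ equals the number of $1$'s in row $j$ of $V_{n_I}$, i.e.\ the degree of vertex $j$ in the complete graph $G_{n_I}$, which is $n_I-1$; this also matches $\alpha=n_I-1$ from \eqref{Alpha_MBR_GammaC0}.

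None of these steps is technically hard — they are all immediate from the fact that $G_{n_I}$ is the complete simple graph on $n_I$ vertices and $V_{n_I}$ its incidence matrix. The only point requiring a word of care is making explicit that distinct clusters use disjoint blocks of encoded-symbol indices (so the intra-cluster analysis transfers verbatim to each cluster and the cross-cluster claims follow), and confirming that the count $\theta=\binom{n_I}{2}L$ in \eqref{Eqn:theta_gammac0} is exactly what the distribution rule consumes. I expect the write-up to be short: state the block-disjointness, then read off each of (a)--(d) from the edge/vertex incidence structure of $G_{n_I}$.
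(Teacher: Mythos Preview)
Your proposal is correct and follows essentially the same approach as the paper: the paper first records the four standard properties of the incidence matrix $V_t$ (each column has two $1$'s, each row has $t-1$ ones, any two rows overlap in exactly one column) and then, using the unique decomposition $s=(l_0-1)\binom{n_I}{2}+i_0$ to separate clusters, reads off (a)--(d) exactly as you do. Your write-up is phrased in graph-theoretic language (edges, endpoints, degrees) rather than matrix language, but the arguments are identical.
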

\begin{proof}
	See Appendix \ref{Proof:Properties of suggested MBR codes gammac=0 case}.
\end{proof}

\begin{figure}[!t]
	\centering
	\includegraphics[height=35mm]{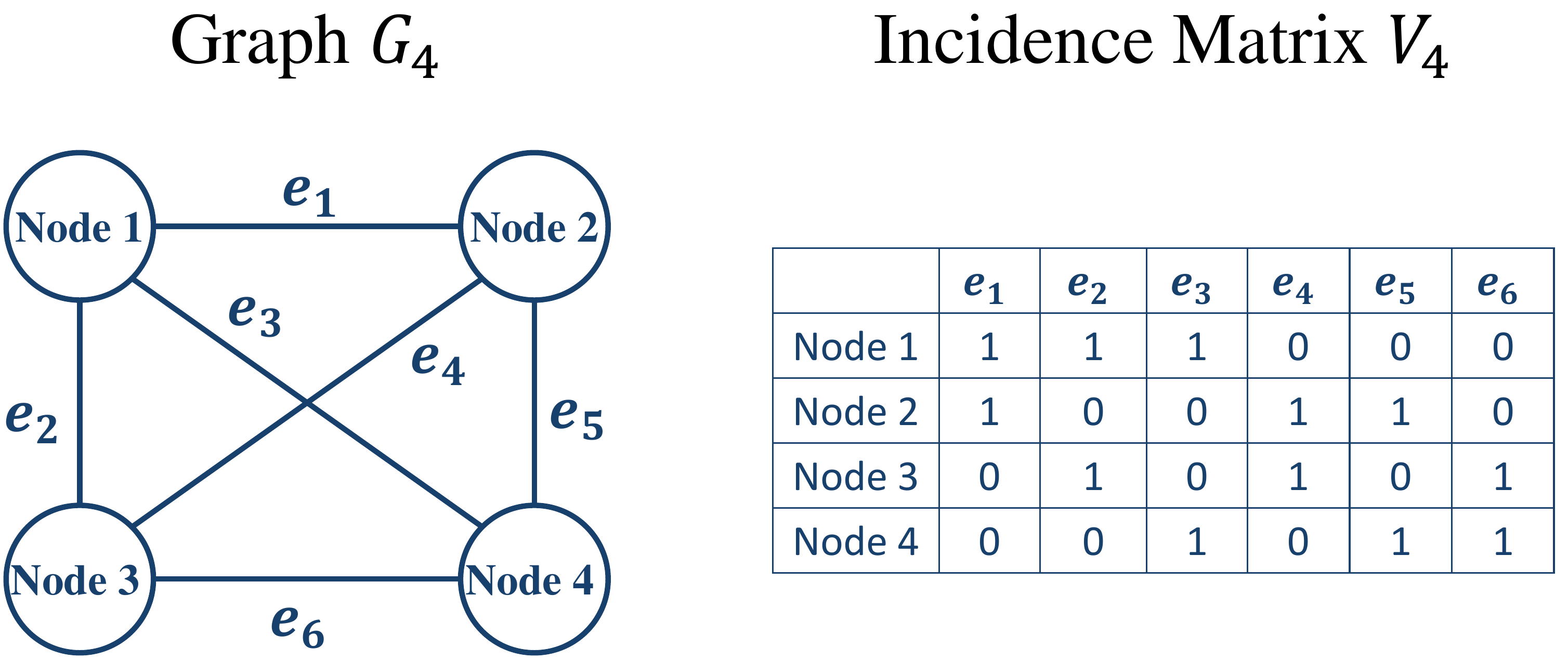}
	\caption{Incidence matrix $V_{t}$ of a fully connected graph $G_t$ with $t =4$.}
	\label{Fig:incidence_matrix}
	\vspace{-0.1in}
\end{figure} 

Using this lemma, we show that Algorithm \ref{Algo:MBR_code_zero} provides a valid MBR code for any $n,k,L$ setting with $\epsilon=0$. 

\begin{theorem}\label{Thm:CodeZero}	
The code suggested in Algorithm \ref{Algo:MBR_code_zero} is an MBR code for any $[n,k,L]-$clustered DSS with $\epsilon=0$. In other words, it satisfies all requirements stated in Condition \ref{Condition:code_general}:
\begin{itemize}
	\item 
	Each node contains $\alpha_{\text{mbr}}^{(0)}=n_I-1$ coded symbols.
	\item \textbf{(Exact regeneration)} When a node fails, it can be exactly regenerated by using the intra-cluster repair bandwidth of $\beta_I=1$ and the cross-cluster repair bandwidth of $\beta_c=0$. Thus, it has the total repair bandwidth of $\gamma_{\text{mbr}}^{(0)}=n_I-1$ coded symbols.
	\item \textbf{(Data reconstruction)} Contacting any $k$ out of $n$ nodes can retrieve vector $\mathbf{s}$ consisting of $\mathcal{M}$ source symbols.
\end{itemize}
\end{theorem}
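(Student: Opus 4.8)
The plan is to verify, one property at a time, the three requirements that Condition \ref{Condition:code_general} imposes on an MBR code, with Lemma \ref{Prop:MBR for zero gammac} as the main structural input. The storage-size requirement is immediate from part \ref{Prop:MBR for zero gammac_fourth} of the lemma: every node holds exactly $\alpha_{\text{mbr}}^{(0)} = n_I - 1$ coded symbols, matching \eqref{Alpha_MBR_GammaC0}.

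For exact regeneration, consider a failed node $N(l,j)$. Its content consists of the symbols indexed by the $n_I-1$ edges of $G_{n_I}$ incident to vertex $j$, one edge $\{j,j'\}$ for each of the other nodes $N(l,j')$ in cluster $l$. By parts \ref{Prop:MBR for zero gammac_first}--\ref{Prop:MBR for zero gammac_third}, the symbol of edge $\{j,j'\}$ sits in exactly the two nodes $N(l,j)$ and $N(l,j')$, and no symbol of $N(l,j)$ appears outside cluster $l$. Hence, repairing by transfer, each of the $n_I-1$ surviving nodes of cluster $l$ forwards the one symbol it shares with $N(l,j)$ (so $\beta_I = 1$), every node outside the cluster sends nothing ($\beta_c = 0$), and the received symbols are precisely the original content of $N(l,j)$, so the regeneration is exact. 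The resulting bandwidth is $\gamma = (n_I-1)\beta_I + (n-n_I)\beta_c = n_I-1 = \gamma_{\text{mbr}}^{(0)}$, as required.

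For data reconstruction, suppose a data collector contacts an arbitrary set of $k$ nodes, with $k_l$ of them in cluster $l$, so that $\sum_{l=1}^{L} k_l = k$ and $0 \le k_l \le n_I$. Inside cluster $l$, every pair of contacted nodes shares exactly one symbol and, by part \ref{Prop:MBR for zero gammac_first}, no symbol lies in three or more nodes, so inclusion--exclusion gives exactly $k_l(n_I-1) - \binom{k_l}{2}$ distinct symbols from that cluster; since distinct clusters share no symbols (part \ref{Prop:MBR for zero gammac_second}), the collector obtains $N = \sum_{l=1}^{L}\big(k_l(n_I-1)-\binom{k_l}{2}\big)$ distinct coded symbols in all. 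The scalar map $x \mapsto x(n_I-1) - \binom{x}{2}$ is concave, so this sum is Schur-concave in $(k_1,\dots,k_L)$ and, subject to the fixed total $k$ and the box constraints $0\le k_l\le n_I$, is minimized by the most majorizing occupancy vector, namely $q$ clusters full and one cluster with $r$ nodes, where $q,r$ are as in \eqref{Eqn:quotient}--\eqref{Eqn:remainder}. I would then check that this minimum value, $q\binom{n_I}{2} + rn_I - \binom{r+1}{2}$, equals $\mathcal{M}$ by rewriting \eqref{Eqn:Capacity for gamma_c = 0} as $\mathcal{M} = \sum_{i=1}^{k}(n_I-h_i) = \sum_{t=1}^{n_I} g_t(n_I-t)$ through the definitions \eqref{Eqn:h_i} and \eqref{Eqn:g_m}. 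Consequently $N \ge \mathcal{M}$ for every choice of $k$ nodes, and since the $\theta = L\binom{n_I}{2} \ge \mathcal{M}$ coded symbols are produced by a $(\theta,\mathcal{M})$-MDS code, any $\mathcal{M}$ of them recover $\mathbf{s}$; in particular the $\ge \mathcal{M}$ symbols gathered from the $k$ contacted nodes do.

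I expect the data-reconstruction step to be the main obstacle, for two reasons. First, the inclusion--exclusion count inside a cluster must be justified carefully, and it is precisely part \ref{Prop:MBR for zero gammac_first} (each symbol in exactly two nodes) that kills the higher-order terms. Second, and more delicate, is the optimization: concavity/majorization pins the worst case down to the ``$q$ full clusters plus one partial cluster'' configuration, but one still has to carry out the bookkeeping that identifies $q\binom{n_I}{2} + rn_I - \binom{r+1}{2}$ with the capacity expression $\sum_{t=1}^{n_I} g_t(n_I-t)$ \emph{exactly}, so that the code is shown to meet capacity rather than merely approach it.
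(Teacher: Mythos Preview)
Your proposal is correct and follows essentially the same route as the paper: the storage and exact-regeneration parts lean on Lemma~\ref{Prop:MBR for zero gammac} exactly as the paper does, and your data-reconstruction argument (count distinct symbols as $k\alpha-\sum_l\binom{k_l}{2}$, minimize via majorization/Schur-concavity at the ``$q$ full clusters plus one partial'' vector, then identify the minimum with $\mathcal{M}=\sum_{t}g_t(n_I-t)$ and invoke the MDS property) is precisely the content of the paper's Lemma~\ref{Prop:Lower_bound on the number of retrievable coded symbols for zero gammac} and its proof. The only cosmetic difference is that the paper packages the $n(\boldsymbol\omega)\ge\mathcal{M}$ bound as a separate lemma, whereas you fold it into the main argument.
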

\begin{proof}
See Appendix \ref{Proof:Thm:CodeZero}.
\end{proof}

\subsection{Example}

\begin{figure}[!t]
	\centering
	\includegraphics[height=50mm]{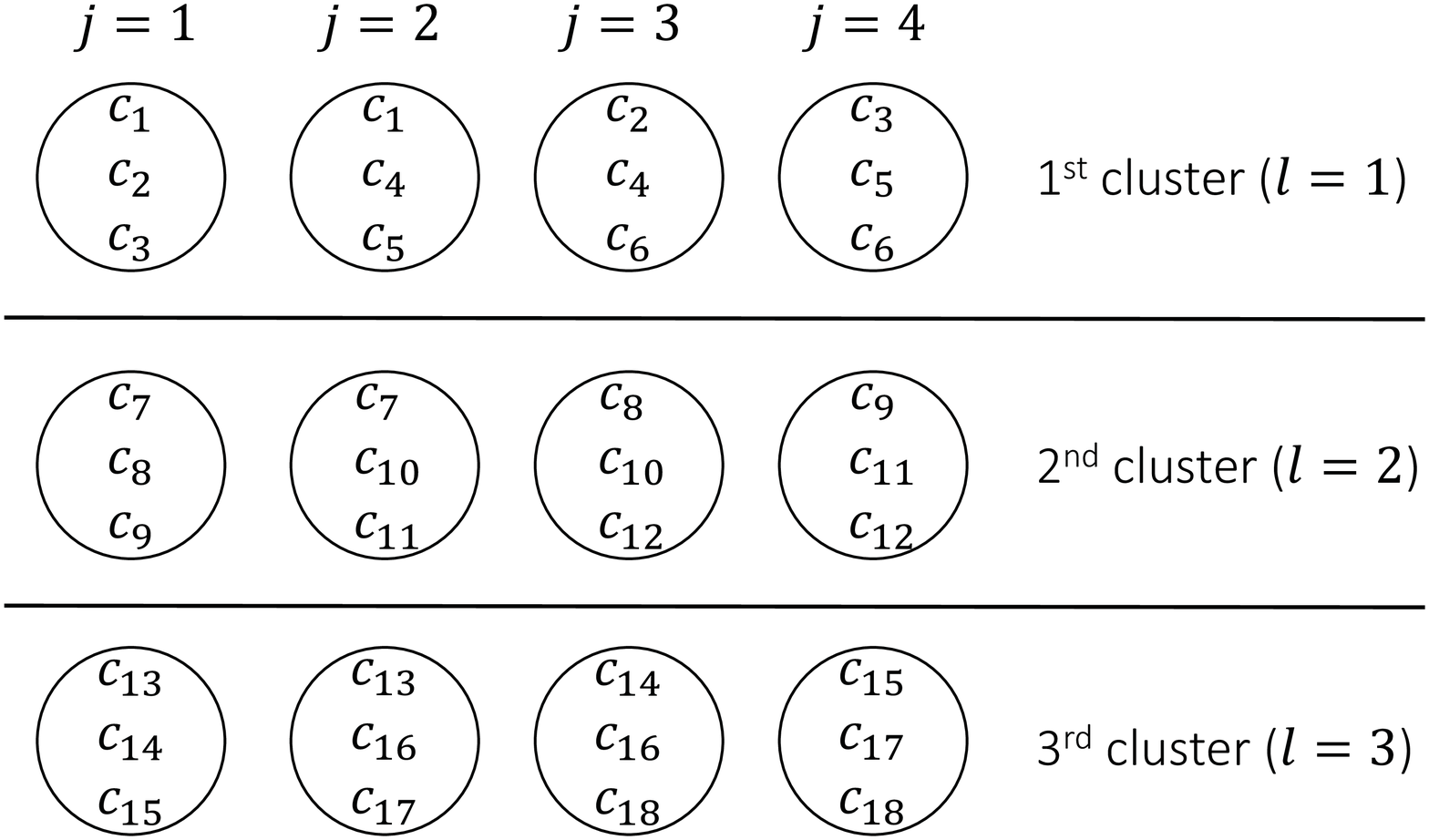}
	\caption{MBR coding scheme example for $n=12, L=3, \epsilon=0$}
	\label{Fig:MBR_code_example_gammac0}
	\vspace{-0.1in}
\end{figure} 
Consider a clustered distributed storage system with $n=12, k=6$ and $L=3$, as in Fig. \ref{Fig:MBR_code_example_gammac0}. This implies $n_I = \frac{n}{L}= 4,  \theta=18$ and $\mathcal{M}=11$ from \eqref{Eqn:Capacity for gamma_c = 0} and \eqref{Eqn:theta_gammac0}. 
Suppose the code in Algorithm \ref{Algo:MBR_code_zero} is applied to the system, for given 
$\mathcal{M}=11$ source symbols $\mathbf{s}=[s_1, s_2, \cdots, s_{11}]^T$.
As a first step, $[\theta, \mathcal{M}]=[18,11]-$MDS code is applied for encoding, which results in $\theta = 18$ coded symbols $\mathbf{c} = [c_1, c_2, \cdots, c_{18}]^T$.
Next, the coded symbols are allocated into different nodes as follows: node $N(l,j)$ stores $c_{6(l-1)+i}$ if and only if $V_4(j,i) = 1$. Here, the ranges of parameters are $l \in [3], j \in [4]$ and $i \in [6]$. Using matrix $V_4$ in Fig. \ref{Fig:incidence_matrix}, the coded symbols are distributed as in  
Fig. \ref{Fig:MBR_code_example_gammac0}. Each circle represents a storage node, which stores $\alpha = 3$ coded symbols. As an example of $(l,j) = (2,3)$, node $N{(2,3)}$ stores three coded symbols: $c_8, c_{10}$ and $c_{12}$.

\textit{Exact Regeneration:} Suppose node $N(2,3)$ fails. Then, node $N(2,1)$ gives $c_8$, node $N(2,2)$ gives $c_{10}$, and node $N(2,4)$ gives $c_{12}$. These three symbols are stored in the new storage node for replacing $N(2,3)$. This regeneration process satisfies $\beta_I = 1$ and $\beta_c = 0$. The same argument holds for arbitrary single-node failure events.

\textit{Data Reconstruction:}
Suppose that the data collector connects to $k=6$ nodes, say, $N(1,1)$, $N(1,2)$, $N(1,3)$, $N(1,4)$, $N(2,1),$ and $N(2,2)$. It is easy to see that this process allows the retrieval of the eleven coded symbols $\{c_i\}_{i=1}^{11}$. By using the MDS property of the $[\theta,\mathcal{M}]=[18,11]$ code used in Algorithm \ref{Algo:MBR_code_zero}, we can easily obtain $\{s_i\}_{i=1}^{11}$ from the retrieved coded symbols $\{c_i\}_{i=1}^{11}$. 
The same can be shown for any arbitrary choice of $k=6$ nodes.

\subsection{Comparison with existing local MBR codes}\label{Sec:comparison_with_local_MBR}

Recall that when $\epsilon=0$, no cross-cluster repair bandwidths are allowed. This scenario reminds one of a class of codes called \textit{locally repairable codes (LRCs)} \cite{papailiopoulos2014locally}, which regenerate a failed node by contacting some limited number of nodes. There have been extensive works on LRCs, with some \cite{kamath2013ISIT,kamath2014TIT} especially focusing on constructing LRCs with the MBR property. These codes are called  \textit{local MBR codes}.
Here we compare our MBR code suggested in Algorithm \ref{Algo:MBR_code_zero} with the local MBR codes previously designed by other researchers. 
In Construction 6.3 of \cite{kamath2014TIT}, the authors proposed a local MBR code based on the repair-by-transfer (RBT) scheme, similar to the code suggested in the current paper. However, the code in \cite{kamath2014TIT} requires a finite field with size  $\binom{n\alpha/2}{\mathcal{M}}=\binom{n(n_I-1)/2}{\mathcal{M}}$, while the code in the current paper requires a much smaller field size of $\theta = \binom{n_I}{2} L = n(n_I-1)/2$, when RS code is used for $[\theta, \mathcal{M}]-$MDS encoding. 
In addition, the local MBR code suggested in Construction II.1 of \cite{kamath2013ISIT}
requires a finite field with size much larger than that of the MBR code suggested in the current paper. To be specific, the required field size of the code in \cite{kamath2013ISIT} is $q^m$ for some prime number $q\geq 2$ and $m \geq L K_{\text{local}}$, where $K_{\text{local}}$ is the scalar dimension of a local code satisfying $K_{\text{local}} = \alpha (n_I - 1) - \binom{n_I-1}{2} \geq \frac{(n_I-1)^2}{2} \simeq \frac{n_I (n_I-1)}{2}$ . 
Thus, one gets $q^m \geq  q^{LK_{\text{local}}} \simeq q^{\theta}$, which is far greater than $\theta$, the field size of our MBR code.


\section{MBR Code Design for $0 < \epsilon \leq 1$ case}\label{Sec:CodeDesign_nonzero_epsilon}

We now construct an MBR code assuming that repair bandwidths across other clusters are allowed, i.e., $\beta_c \neq 0$. Using the definition of $\epsilon$, this setting corresponds to the case of $0 < \epsilon \leq 1$.

\subsection{Parameter Setting for MBR Code}\label{Section:parameter for MBR}

For a given $0 < \epsilon \leq 1 $, we consider MBR codes with resource pair $(\alpha, \gamma)=(\alpha_{\text{mbr}}^{(\epsilon)}, \gamma_{\text{mbr}}^{(\epsilon)})$, which reliably store file size $\mathcal{M} = \mathcal{C}(\alpha, \gamma)$. Let us assume that 
$\chi\coloneqq 1/\epsilon =\beta_I/\beta_c$
is a positive integer. Without losing generality, we set $\beta_I = \chi$ and $\beta_c = 1$.
From \eqref{Eqn:alpha_gamma_MBR} and \eqref{Eqn:gamma}, we have
\begin{align}\label{Eqn:resource_for_epsilon_nonzero}
\alpha_{\text{mbr}}^{(\epsilon)} = \gamma_{\text{mbr}}^{(\epsilon)} &= (n_I - 1)\chi + (n-n_I) \\
&= (n_I-1)/\epsilon + (n-n_I). \nonumber
\end{align}
Under this setting, the following proposition provides a simplified form of the capacity expression.
\begin{prop}\label{Prop:MBR_for_nonzero_beta_c}
	Consider an MBR point having a resource pair of $(\alpha, \gamma) = (\alpha_{\text{mbr}}^{(\epsilon)}, \gamma_{\text{mbr}}^{(\epsilon)})$ with $0 < \epsilon \leq 1$. When repair bandwidths are set to $\beta_c = 1$ and $\beta_I = \chi$ for some positive integer $\chi = 1/\epsilon$, the capacity of clustered distributed storage can be expressed as
	\begin{equation}\label{Eqn:Capacity for MBR}
	\mathcal{M} = \mathcal{C}(\alpha, \gamma) =  k\alpha  - \frac{1}{2}(\chi-1)(qn_I^2 + r^2 - k) - \frac{k(k-1)}{2}
	\end{equation}
	where $q$ and $r$ are defined in (\ref{Eqn:quotient}) and (\ref{Eqn:remainder}), respectively.
\end{prop}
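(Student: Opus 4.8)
The plan is to compute $\mathcal{M}=\mathcal{C}(\alpha_{\text{mbr}}^{(\epsilon)},\gamma_{\text{mbr}}^{(\epsilon)})$ in closed form and then regroup it into \eqref{Eqn:Capacity for MBR}. The most economical route avoids the raw min-cut sum \eqref{Eqn:Capacity of clustered DSS_rev} altogether and starts from the preceding proposition: since $\alpha_{\text{mbr}}^{(\epsilon)}=\gamma_{\text{mbr}}^{(\epsilon)}$ and $(\alpha_{\text{mbr}}^{(\epsilon)},\gamma_{\text{mbr}}^{(\epsilon)})=(\mathcal{M}/s_0^{(\epsilon)},\mathcal{M}/s_0^{(\epsilon)})$, we have $\mathcal{M}=s_0^{(\epsilon)}\gamma_{\text{mbr}}^{(\epsilon)}$. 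With $\beta_c=1$ and $\beta_I=\chi=1/\epsilon$ one gets $\gamma_{\text{mbr}}^{(\epsilon)}=(n_I-1)/\epsilon+(n-n_I)=\big[(n_I-1)+\epsilon(n-n_I)\big]/\epsilon$, which is precisely $1/\epsilon$ times the denominator appearing in $s_0^{(\epsilon)}$. Hence that denominator cancels, and $\mathcal{M}=\tfrac1\epsilon\sum_{i=1}^k\{(n_I-h_i)+\epsilon(n-n_I-i+h_i)\}=\sum_{i=1}^k\{\chi(n_I-h_i)+n-n_I-i+h_i\}=(\chi-1)\sum_{i=1}^k(n_I-h_i)+\sum_{i=1}^k(n-i)$.

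The remaining task is to evaluate $\sum_{i=1}^k h_i$. Writing $G_t=\sum_{m=1}^t g_m$ (so $G_{n_I}=k$ by \eqref{Eqn:g_m}), the definition \eqref{Eqn:h_i} says $h_i=t$ exactly for the $g_t$ indices $i\in(G_{t-1},G_t]$, so $\sum_{i=1}^k h_i=\sum_{t=1}^{n_I}t\,g_t$. Splitting this sum at $t=r$, where $g_t$ drops from $q+1$ to $q$, gives $\sum_{t=1}^{n_I}t\,g_t=q\cdot\tfrac{n_I(n_I+1)}{2}+\tfrac{r(r+1)}{2}=\tfrac12(qn_I^2+r^2+k)$, using $k=qn_I+r$. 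Substituting this, together with $\sum_{i=1}^k(n_I-h_i)=kn_I-\tfrac12(qn_I^2+r^2+k)$ and $\sum_{i=1}^k(n-i)=kn-\tfrac{k(k+1)}{2}$, yields $\mathcal{M}=(\chi-1)\big(kn_I-\tfrac12(qn_I^2+r^2+k)\big)+kn-\tfrac{k(k+1)}{2}$. Finally I would insert $\alpha_{\text{mbr}}^{(\epsilon)}=(n_I-1)\chi+(n-n_I)$ (equivalently $k\alpha_{\text{mbr}}^{(\epsilon)}=k(n_I-1)(\chi-1)+k(n-1)$) and verify that the $(\chi-1)$-terms and the remaining terms regroup exactly into $k\alpha-\tfrac12(\chi-1)(qn_I^2+r^2-k)-\tfrac{k(k-1)}{2}$, which is \eqref{Eqn:Capacity for MBR}.

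As a consistency check (and an alternative proof) one can instead work directly from \eqref{Eqn:Capacity of clustered DSS_rev}: the essential observation is that at the MBR point the inner minimum is never the $\alpha$-term. Indeed, with $\beta_I=\chi$, $\beta_c=1$, $\rho_i=n_I-i$ and $\alpha=(n_I-1)\chi+(n-n_I)$, the inequality $\rho_i\beta_I+(n-\rho_i-G_{i-1}-j)\beta_c\le\alpha$ rearranges to $(i-1)\chi\ge i-G_{i-1}-j$, and since $j\ge1$, $G_{i-1}\ge0$ and $\chi=1/\epsilon\ge1$ the right side is $\le i-1\le(i-1)\chi$. Dropping the $\min$, carrying out the inner sum over $j$ (which contributes a $-\tfrac12 g_i(g_i+1)$), and using $\big(\sum_i g_i\big)^2=\sum_i g_i^2+2\sum_i g_i G_{i-1}$ to eliminate $\sum_i g_i G_{i-1}$ — whereupon the $\sum_i g_i^2$ contributions cancel — leads to the same closed form.

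The computation is essentially bookkeeping: once one notes that the storage constraint $\alpha$ is inactive at the MBR point, no delicate estimation remains. The single step that deserves care is the evaluation $\sum_{t=1}^{n_I}t\,g_t=\tfrac12(qn_I^2+r^2+k)$, which relies on the two-level structure of $g_t$ in \eqref{Eqn:g_m} and the identity $k=qn_I+r$; everything else is summing arithmetic series and collecting like terms.
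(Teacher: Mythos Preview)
Your proof is correct. The paper's own argument works directly from the double-sum capacity expression \eqref{Eqn:Capacity of clustered DSS_rev}: it cites an external result to justify dropping the $\min$, then establishes three auxiliary identities --- $\sum_i g_i=k$, $\sum_i ig_i=\tfrac12(qn_I^2+r^2+k)$, and $\sum_{i}\sum_{j=1}^{g_i}(G_{i-1}+j)=\tfrac{k(k+1)}{2}$ --- and substitutes. This is exactly what you sketch as your ``consistency check.'' Your main route is a mild streamlining: by starting from $\mathcal{M}=s_0^{(\epsilon)}\gamma_{\text{mbr}}^{(\epsilon)}$ via \eqref{Eqn:MBR_resource_pair} (which has already absorbed the removal of the $\min$ and collapsed the inner $j$-sum into the $h_i$ notation), you need only the middle identity $\sum_t t\,g_t=\tfrac12(qn_I^2+r^2+k)$ and never touch the third one. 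Both routes rest on the same arithmetic; yours simply does a bit less of it.
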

\begin{proof}
	See Appendix \ref{Proof:Capacity for MBR}.
\end{proof}

\subsection{Code Construction}

\begin{algorithm}[!t]
	\caption{MBR code construction for $0 < \epsilon \leq 1$}
	\label{Algo:CodeNonzero}
	\begin{algorithmic}
		\REQUIRE System parameters $n, k, L, \chi = \beta_I/\beta_c$ and $\mathcal{M}$ in \eqref{Eqn:Capacity for MBR},
		\\ \hspace{6.5mm} Source symbol vector $\mathbf{s} = [s_1, \cdots, s_{\mathcal{M}}]^T$.
		\ENSURE Symbols stored on nodes $\{N(l,j)\}_{l \in [L], j \in [n/L]}$
		\STATE \textbf{Step 1.} Generate encoded symbols $\{c_1, \cdots, c_{\theta}\}$: \STATE \hspace{0mm}Apply a $(\theta, \mathcal{M})-$MDS code to source symbol vector $\mathbf{s}$, resulting in $\mathbf{c}=[c_1, \cdots, c_{\theta}]^T$. Here, we have
		\begin{equation}\label{Eqn:theta}
		\theta = (\chi-1)  {n_I\choose 2} L + {n \choose 2}.
		\end{equation}
		\STATE \textbf{Step 2.} Distribute encoded symbols to nodes under the following rules. Among $\theta$ coded symbols, the first $\binom{n}{2}$ symbols are called \textit{global} symbols, and the rest $ (\chi-1)  {n_I\choose 2} L$ symbols are called \textit{local} symbols. We have different allocation rules for global/local symbols as below.
		\STATE \begin{itemize}
			\item Node $N(l,j)$ stores a \textit{global} symbol $c_{i_1}$ if and only if $V_n (n_I(l-1) +j,i_1) =1$.
			\item For every $t \in [\chi-1]$, follow the rule: 
			Node $N(l,j)$ stores a \textit{local} symbol $c_{{n \choose 2} + (\chi l - \chi - l + t){n_I \choose 2}+i_2}$ if and only if $V_{n_I}(j,i_2)=1$. 
			\item Here, the ranges of the parameters are $l \in [L], j\in [n_I], i_1\in [ n(n-1)/2 ] $, and $i_2\in [ n_I(n_I-1)/2 ] .$
		\end{itemize}
	\end{algorithmic}
\end{algorithm}


In Algorithm \ref{Algo:CodeNonzero}, we propose an MBR code for $0 < \epsilon \leq 1$. 
In the symbol allocation rule in Step 2 of Algorithm \ref{Algo:CodeNonzero}, coded symbols are classified as \textit{global symbols} and \textit{local symbols}.
Note that \textit{global symbols} are shared either among nodes in the same cluster or among nodes in other clusters, while \textit{local symbols} are shared among nodes in the same cluster only. An example of code construction in Fig. \ref{Fig:MBR_code_example} with an explanation in Section \ref{Sec:CodeExample_nonzero} provides an insight on global/local symbols.
Now we provide properties of the code in Algorithm \ref{Algo:CodeNonzero}, which are useful for proving Theorem \ref{Thm:CodeNonzero}.

\begin{lemma}\label{Prop:MBR}
Suppose the code in Algorithm \ref{Algo:CodeNonzero} is applied to an $[n,k,L]-$clustered DSS with $0 < \epsilon \leq 1$. Then, the system satisfies the following:
	\begin{enumerate}[label=(\alph*)]
		\item Each coded symbol $c_i$ is stored in exactly two different storage nodes. \label{Prop:MBR_first}
		\item Nodes in different clusters share one coded symbol. \label{Prop:MBR_second}
		\item Nodes in the same cluster share $\chi$ coded symbols. \label{Prop:MBR_third}
		\item Each node contains $\alpha = (n_I-1)\chi + (n-n_I)$ coded symbols. \label{Prop:MBR_fourth}
	\end{enumerate}
\end{lemma}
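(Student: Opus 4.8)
The plan is to verify the four properties of Lemma \ref{Prop:MBR} by carefully tracking how the global and local symbols are distributed according to the two allocation rules in Step 2 of Algorithm \ref{Algo:CodeNonzero}, reusing the structure of the incidence matrix $V_t$ and the fact (already exploited for Lemma \ref{Prop:MBR for zero gammac}) that in $V_t$ each column has exactly two $1$'s (one edge touches two vertices) and each pair of rows shares exactly one column in which both entries are $1$ (two vertices span exactly one edge).

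First I would handle the global symbols. These are the first $\binom{n}{2}$ coded symbols, and node $N(l,j)$ — which corresponds to global index $n_I(l-1)+j \in [n]$ — stores $c_{i_1}$ iff $V_n(n_I(l-1)+j, i_1)=1$. Since $V_n$ is the incidence matrix of the complete graph on $n$ vertices, each global symbol lies in exactly two nodes (the two endpoints of the corresponding edge), and any two distinct nodes (whether in the same cluster or not) share exactly one global symbol (the edge joining them). So the global part alone already contributes: each global symbol in exactly two nodes; every pair of nodes — same cluster or different — shares exactly one global symbol; and each node stores exactly $n-1$ global symbols (its degree in $K_n$).

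Next I would handle the local symbols. For each $t\in[\chi-1]$, the rule places a block of $\binom{n_I}{2}$ local symbols into the nodes of cluster $l$ exactly as an $\epsilon=0$-style within-cluster code: $N(l,j)$ stores the $t$-th-block local symbol indexed by $i_2$ iff $V_{n_I}(j,i_2)=1$. I must check that the offsets $\binom{n}{2} + (\chi l - \chi - l + t)\binom{n_I}{2}$, as $l$ ranges over $[L]$ and $t$ over $[\chi-1]$, hit $(\chi-1)L$ consecutive, disjoint blocks of size $\binom{n_I}{2}$; indeed $\chi l - \chi - l + t = (\chi-1)(l-1) + t$, so as $t$ runs through $1,\dots,\chi-1$ for fixed $l$ we get blocks $(\chi-1)(l-1)+1,\dots,(\chi-1)l$, and these are disjoint across $l$ and exhaust $\{1,\dots,(\chi-1)L\}$. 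So local symbols never cross clusters, each local symbol lies in exactly two nodes of its cluster (property via $V_{n_I}$), and within a single cluster each ordered pair of nodes shares exactly one local symbol per block $t$, hence $\chi-1$ local symbols total. Each node picks up $n_I-1$ local symbols per block, i.e. $(\chi-1)(n_I-1)$ local symbols.

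Assembling the pieces gives all four claims: (a) every symbol, global or local, sits in exactly two nodes; (b) two nodes in different clusters share only their one common global symbol (no local symbol is shared across clusters), so the count is $1$; (c) two nodes in the same cluster share one global symbol plus $\chi-1$ local symbols, for a total of $\chi$; (d) each node stores $(n-1)$ global $+ (\chi-1)(n_I-1)$ local $= (n_I-1)\chi + (n-n_I)$ symbols, matching $\alpha_{\text{mbr}}^{(\epsilon)}$ from \eqref{Eqn:resource_for_epsilon_nonzero}. The only mildly delicate point — and the one I would write out most carefully — is the bookkeeping on the local-symbol index offsets, namely that $\chi l-\chi-l+t$ enumerates the blocks $1,\dots,(\chi-1)L$ bijectively with no overlap and no gap, so that the total symbol count is exactly $\theta=(\chi-1)\binom{n_I}{2}L+\binom{n}{2}$ as in \eqref{Eqn:theta} and each symbol is used; everything else is a direct application of the two combinatorial facts about $V_t$.
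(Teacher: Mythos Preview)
Your proposal is correct and follows essentially the same approach as the paper's proof: the paper likewise partitions the coded symbols into the global block $S_1=\{1,\dots,\binom{n}{2}\}$ and the local block $S_2$, handles each via the incidence-matrix properties of $V_n$ and $V_{n_I}$ respectively, and then combines the counts. The bookkeeping you flag as delicate --- that $\chi l-\chi-l+t=(\chi-1)(l-1)+t$ bijectively enumerates the $(\chi-1)L$ local blocks --- is exactly what the paper spells out (in slightly more detail) by constructing a one-to-one $(l,t,i_2)$ representation for each $s\in S_2$.
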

\begin{proof}
	See Appendix \ref{Proof:Properties of suggested MBR codes}.
\end{proof}

Based on this Lemma, the code in Algorithm \ref{Algo:CodeNonzero} is shown to be an MBR code for any $n,k,L$ setting with $0 < \epsilon \leq 1$, as stated in the following Theorem.

\begin{theorem}\label{Thm:CodeNonzero}
The code suggested in Algorithm \ref{Algo:CodeNonzero} is an MBR code for any $[n,k,L]-$clustered DSS with $0 < \epsilon \leq 1$. In other words, it satisfies all requirements stated in Condition \ref{Condition:code_general}:
\begin{itemize}
	\item 
	Each node contains $\alpha_{\text{mbr}}^{(\epsilon)}=(n_I-1)/\epsilon + (n-n_I)$ coded symbols.
	\item \textbf{(Exact regeneration)} When a node fails, it can be exactly regenerated by using the intra-cluster repair bandwidth of $\beta_I=\chi$ and the cross-cluster repair bandwidth of $\beta_c=1$. Thus, it has the total repair bandwidth of $\gamma_{\text{mbr}}^{(\epsilon)}=(n_I-1)/\epsilon + (n-n_I)$ coded symbols.
	\item \textbf{(Data reconstruction)} Contacting any $k$ out of $n$ nodes can retrieve vector $\mathbf{s}$ consisting of $\mathcal{M}$ source symbols.
\end{itemize}
\end{theorem}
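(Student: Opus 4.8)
The plan is to check, one at a time, the three requirements of Condition~\ref{Condition:code_general} at the resource pair $(\alpha_{\text{mbr}}^{(\epsilon)},\gamma_{\text{mbr}}^{(\epsilon)})$ of \eqref{Eqn:resource_for_epsilon_nonzero}, using Lemma~\ref{Prop:MBR} for the combinatorial structure and Proposition~\ref{Prop:MBR_for_nonzero_beta_c} for the value of $\mathcal{M}$. Throughout, write $C_v$ for the set of coded symbols stored on a node $v$. The node-size property is immediate from Lemma~\ref{Prop:MBR}\ref{Prop:MBR_fourth}, since $(n_I-1)\chi+(n-n_I)=(n_I-1)/\epsilon+(n-n_I)=\alpha_{\text{mbr}}^{(\epsilon)}$.

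For \textbf{exact regeneration}, let node $N(l,j)$ fail. By Lemma~\ref{Prop:MBR}\ref{Prop:MBR_first} every symbol of $C_{N(l,j)}$ sits in exactly one other node, hence the intersections $C_{N(l,j)}\cap C_v$ over the $n-1$ helpers $v$ are pairwise disjoint subsets of $C_{N(l,j)}$; by Lemma~\ref{Prop:MBR}\ref{Prop:MBR_second},\ref{Prop:MBR_third} each of the $n_I-1$ intra-cluster helpers contributes $\chi$ of them and each of the $n-n_I$ cross-cluster helpers contributes $1$, and $(n_I-1)\chi+(n-n_I)=\alpha$ shows these disjoint pieces partition $C_{N(l,j)}$. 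Therefore a pure repair-by-transfer step — each intra-cluster helper sends its $\chi=\beta_I$ shared symbols, each cross-cluster helper sends its $1=\beta_c$ shared symbol — rebuilds $N(l,j)$ exactly, with total download $\gamma=(n_I-1)\chi+(n-n_I)=\gamma_{\text{mbr}}^{(\epsilon)}$.

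For \textbf{data reconstruction}, I would bound the number of distinct coded symbols a collector sees when it contacts an arbitrary set $S$ of $k$ nodes. Suppose $S$ contains $k_l$ nodes of cluster $l$, so $\sum_{l=1}^{L} k_l=k$ and $0\le k_l\le n_I$. Since Lemma~\ref{Prop:MBR}\ref{Prop:MBR_first} forbids any symbol from lying in three nodes, inclusion–exclusion collapses to
\[
\Big|\bigcup_{v\in S}C_v\Big| \;=\; k\alpha \;-\; \sum_{\{u,v\}\subseteq S}\bigl|C_u\cap C_v\bigr|,
\]
and, by Lemma~\ref{Prop:MBR}\ref{Prop:MBR_second},\ref{Prop:MBR_third}, the pairwise overlap is $\chi$ inside a cluster and $1$ across clusters, so the right-hand side equals $k\alpha-(\chi-1)\sum_{l}\binom{k_l}{2}-\binom{k}{2}$. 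As $\chi\ge1$, the smallest possible number of visible symbols corresponds to the largest possible $\sum_l\binom{k_l}{2}$; by convexity of $x\mapsto\binom{x}{2}$ — a majorization / Schur-convexity argument over the vectors $(k_1,\dots,k_L)$ with fixed sum $k$ and entries in $[0,n_I]$ — this maximum is attained at the cluster-filling vector $(\underbrace{n_I,\dots,n_I}_{q},\,r,\,0,\dots,0)$, feasible since $k<n$ forces $q=\lfloor k/n_I\rfloor<L$, and equals $\tfrac12(qn_I^2+r^2-k)$. Plugging in, any $k$ nodes expose at least $k\alpha-\tfrac12(\chi-1)(qn_I^2+r^2-k)-\tfrac{k(k-1)}{2}$ distinct symbols, which is exactly $\mathcal{M}$ by \eqref{Eqn:Capacity for MBR}; the $(\theta,\mathcal{M})$-MDS code used in Step~1 of Algorithm~\ref{Algo:CodeNonzero} then recovers $\mathbf{s}$ from those $\mathcal{M}$ symbols. (Equality here confirms the code sits exactly at the MBR point, as it must.)

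I expect the combinatorial optimization in the last paragraph — pinning down the worst-case cluster occupancy $(n_I,\dots,n_I,r,0,\dots,0)$ and justifying it by majorization — to be the only real obstacle; the other two properties are essentially bookkeeping with Lemma~\ref{Prop:MBR}. Two minor points to settle along the way are the feasibility of that extremal vector (i.e.\ $L\ge q+1$ when $r>0$, which holds since $q<L$) and the inequality $\mathcal{M}\le\theta$ for $\theta$ as in \eqref{Eqn:theta}, so that the MDS code of Algorithm~\ref{Algo:CodeNonzero} is well defined.
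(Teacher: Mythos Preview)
Your proposal is correct and follows essentially the same approach as the paper: the node-size and exact-regeneration parts are the same bookkeeping with Lemma~\ref{Prop:MBR}, and for data reconstruction the paper likewise computes $n(\boldsymbol{\omega})=k\alpha-\binom{k}{2}-(\chi-1)\sum_l\binom{\omega_l}{2}$ and then uses a majorization/Schur-convexity argument to identify the cluster-filling vector as the worst case, comparing the resulting bound to \eqref{Eqn:Capacity for MBR}. Your explicit mention of why inclusion--exclusion truncates (no triple intersections, by Lemma~\ref{Prop:MBR}\ref{Prop:MBR_first}) and of the feasibility check $q<L$ for the extremal vector are nice touches that the paper leaves implicit.
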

\begin{proof}
	See Appendix \ref{Proof:Thm:CodeNonzero}.
\end{proof}

\subsection{Example}\label{Sec:CodeExample_nonzero}

Consider the scenario where $n=6, k=3, L=2, \beta_I = 3$ and $\beta_c = 1$ (i.e., $\chi=3$ or $\epsilon=1/3$), as in Fig. \ref{Fig:MBR_code_example}. This implies $n_I=n/L=3$ and $\alpha=(n_I-1)\chi + (n-n_I) = 9$. Moreover, $\mathcal{M}=18$ and $\theta=27$ from \eqref{Eqn:Capacity for MBR}, \eqref{Eqn:theta}. First, we apply a $[\theta,\mathcal{M}]=[27,18]-$MDS code to the original source symbol $\bm{s}=[s_1, s_2, \cdots, s_{18}]^T$, which results in $\bm{c}=[c_1, c_2, \cdots, c_{27}]^T$.
Then, the coded symbols are distributed as follows: node $N(l,j)$ stores 
\begin{itemize}
	\item $c_{i_1}$ if and only if $V_6(3(l-1)+j,i_1)=1$.
	\item $c_{15 + 3(2l+t-3)+i_2}$  
	if and only if $V_{3}(j,i_2)=1$, for $t=1,2$
\end{itemize}
Here, $l \in [2] , j \in [3], i_1 \in [15] , i_2 \in [3] $. Using the following $V_6$ and $V_3$ matrices, 
\begin{equation*}
V_6 = \begin{bmatrix}
1 & 1 & 1 & 1 & 1 & 0 & 0 & 0 & 0 & 0 & 0 & 0 & 0 & 0 & 0  \\         
1 & 0 & 0 & 0 & 0 & 1 & 1 & 1 & 1 & 0 & 0 & 0 & 0 & 0 & 0 	\\	
0 & 1 & 0 & 0 & 0 & 1 & 0 & 0 & 0 & 1 & 1 & 1 & 0 & 0 & 0	\\	
0 & 0 & 1 & 0 & 0 & 0 & 1 & 0 & 0 & 1 & 0 & 0 & 1 & 1 & 0     \\    
0 & 0 & 0 & 1 & 0 & 0 & 0 & 1 & 0 & 0 & 1 & 0 & 1 & 0 & 1       \\   
0 & 0 & 0 & 0 & 1 & 0 & 0 & 0 & 1 & 0 & 0 & 1 & 0 & 1 & 1         
\end{bmatrix} \\
\end{equation*}
\begin{equation*}
V_3 = \begin{bmatrix}
1 & 1 & 0 \\
1 & 0 & 1 \\
0 & 1 & 1
\end{bmatrix}
\end{equation*}
the coded symbols are distributed as in  
Fig. \ref{Fig:MBR_code_example}. 
Each rectangle represents a storage node, which stores $\alpha = 9$ coded symbols. The numbers in a rectangle represent the indices of the coded symbols contained in the node. For example, when $(l,j) = (1,2)$, node $N(l,j) = N(1,2)$ stores $\{c_i\}$ for $i=1,6,7,8,9,16,18,19$, and $21$. 
Here, note that each of the $\{c_i\}_{i=1}^{15}$ \textit{global symbols} is shared either among the nodes within the same cluster or among the nodes in different clusters. Moreover, $\{c_i\}_{i=16}^{27}$ are \textit{local symbols}: $\{c_i\}_{i=16}^{21}$ are shared among the nodes in the $1^{st}$ cluster only, while $\{c_i\}_{i=22}^{27}$ are shared among the nodes in the $2^{nd}$ cluster only.

\begin{figure}[!t]
	\centering
	\includegraphics[width=75mm]{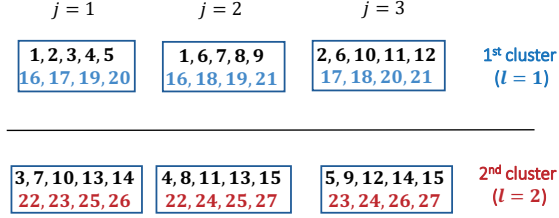}
	\caption{MBR coding scheme example for $n=6, L=2, \chi=3$. Each rectangle represents a node containing $\alpha=9$ coded symbols. To avoid cluttered notation, we just wrote the indices of the coded symbols in each rectangle.}
	\label{Fig:MBR_code_example}
	\vspace{-0.1in}
\end{figure} 

\textit{Exact Regeneration:} Suppose node $N(1,2)$ is broken. Then, $N(1,1)$ provides $c_1, c_{16}, c_{19}$, while $N(1,3)$ gives $c_6, c_{18}, c_{21}$. The nodes $N(2,1), N(2,2), N(2,3)$ transmit $c_7, c_8, c_9$, respectively. This completely regenerates the failed node  $N(1,2)$. Any single node failure can be repaired in a similar way. 

\textit{Data Reconstruction:} Consider an arbitrary contact of $k=3$ nodes, say, three nodes in the $1^{st}$ cluster in Fig. \ref{Fig:MBR_code_example}. Note that this contact can retrieve $18$ coded symbols $c_i$ for $i \in \{1, 2, \cdots, 12\} \cup \{16, 17, \cdots, 21\}$. By using the MDS property of the $[\theta, \mathcal{M}] = [27,18]$ code used in Algorithm \ref{Algo:CodeNonzero}, we obtain $\{s_i\}_{i=1}^{18}$ from the retrieved $18$ coded symbols. The same can be shown for any arbitrary contact of $k=3$ nodes.

\section{MSR Code Design for $\epsilon = 0$}\label{Section:MSR_epsilon_0}

In this section, we propose MSR codes for $\epsilon=0$, \textit{i.e.}, $\beta_c = 0$. Under this setting, no cross-cluster communication is allowed in the node repair process.
First, the system parameters for the MSR point are examined. Second, two types of locally repairable codes (LRCs) suggested in \cite{papailiopoulos2014locally, tamo2016optimal} are proven to achieve the MSR point, under the settings of $n_I \divides k$ and $n_I \notdivides k$, respectively. 

\subsection{Parameter Setting for the MSR Point}

We consider the MSR point $(\alpha, \gamma)=(\alpha_\text{msr}^{(\epsilon)}, \gamma_\text{msr}^{(\epsilon)})$ which can reliably store $\mathcal{M} = \mathcal{C}(\alpha, \gamma)$.
The following property specifies the system parameters for the $\epsilon=0$ case.

\begin{prop}\label{Prop:parameter_for_small_epsilon}
	Consider an $[n,k,L]$ clustered DSS to reliably store file size $\mathcal{M}$. The MSR point $(\alpha, \gamma) = (\alpha_\text{msr}^{(0)}, \gamma_\text{msr}^{(0)})$ for $\epsilon=0$ is
	\begin{equation}\label{Eqn:parameters_for_small_epsilon}
	(\alpha_\text{msr}^{(0)}, \gamma_\text{msr}^{(0)}) = \left(\frac{\mathcal{M}}{k-q}, \frac{\mathcal{M}}{k-q} (n_I-1)\right),
	\end{equation}
	where $q$ is defined in (\ref{Eqn:quotient}).
	This point satisfies $\alpha=\beta_I$.
\end{prop}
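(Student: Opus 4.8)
The plan is to derive the MSR point $(\alpha_\text{msr}^{(0)}, \gamma_\text{msr}^{(0)})$ directly from the capacity formula \eqref{Eqn:Capacity of clustered DSS_rev} with $\epsilon=0$, together with the general characterization of feasible $(\alpha,\gamma)$ pairs from Corollary 1 of \cite{sohn2018capacity}. Setting $\beta_c=0$ in \eqref{Eqn:Capacity of clustered DSS_rev} makes each inner summand $\min\{\alpha,\rho_i\beta_I\}$, which depends on $i$ but not on $j$; recalling $\rho_i=n_I-i$, the double sum collapses to $\sum_{i=1}^{n_I} g_i \min\{\alpha,(n_I-i)\beta_I\}$. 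The MSR point is the feasible pair with the smallest $\alpha$, which is achieved when $\alpha$ is no larger than the smallest positive coefficient $(n_I-i)\beta_I$ appearing with a nonzero $g_i$, so that every min evaluates to $\alpha$ (save for the terms where $n_I-i=0$, which contribute nothing).

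First I would identify exactly which indices $i\in[n_I]$ contribute. By \eqref{Eqn:g_m}, $g_i=q+1$ for $i\le r$ and $g_i=q$ for $i>r$, where $q=\lfloor k/n_I\rfloor$ and $r=k\bmod n_I$ as in \eqref{Eqn:quotient}–\eqref{Eqn:remainder}; note $\sum_{i=1}^{n_I} g_i=k$. The term $i=n_I$ has $\rho_{n_I}=0$ and so contributes $g_{n_I}\cdot\min\{\alpha,0\}=0$; since $g_{n_I}=q$ (assuming $r<n_I$, i.e. $n_I\nmid k$) or more carefully, in all cases the $i=n_I$ block contributes nothing. Hence the effective number of contributing coded-symbol "slots" is $\sum_{i=1}^{n_I-1} g_i = k - g_{n_I}= k-q$ when $n_I\nmid k$; when $n_I\mid k$ one has $r=0$, $g_{n_I}=q$, and again the count is $k-q$. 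So in the MSR regime where every surviving min equals $\alpha$, capacity becomes $\mathcal{M}=(k-q)\alpha$, giving $\alpha_\text{msr}^{(0)}=\mathcal{M}/(k-q)$.

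Next I would pin down $\gamma$. With $\epsilon=0$, the repair bandwidth \eqref{Eqn:gamma} is $\gamma=(n_I-1)\beta_I$. To get the MSR point we take the minimum $\gamma$ compatible with $\alpha=\mathcal{M}/(k-q)$; the feasibility/tradeoff analysis of Corollary 1 of \cite{sohn2018capacity} forces $\beta_I$ down to exactly $\alpha$ at this corner (this is the $\alpha=\beta_I$ assertion), so that $\gamma_\text{msr}^{(0)}=(n_I-1)\alpha=\frac{\mathcal{M}}{k-q}(n_I-1)$, matching \eqref{Eqn:parameters_for_small_epsilon}. The claim $\alpha=\beta_I$ then follows: if $\beta_I>\alpha$ some min would saturate at $\alpha$ with slack, wasting bandwidth, and if $\beta_I<\alpha$ the smallest contributing coefficient $(n_I-(n_I-1))\beta_I=\beta_I<\alpha$ would make that min equal $\beta_I$, dropping capacity below $(k-q)\alpha$ and contradicting $\mathcal{M}=\mathcal{C}(\alpha,\gamma)$; so equality must hold at the MSR corner. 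I would also do the short consistency check that this agrees with \eqref{Eqn:MSR_resource_pair} for $\epsilon=0$, which it does by construction.

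The main obstacle I anticipate is the careful bookkeeping at the boundary index $i=n_I$ and the split between the $n_I\mid k$ and $n_I\nmid k$ cases — in particular verifying that the "$-g_{n_I}$" correction to the slot count is precisely $q$ in both cases and that no other $i$ with $\rho_i=0$ or with $g_i=0$ slips in (this requires $k<n$, hence $q\le n_I-1$ when... actually $q$ can equal $n_I-1$ only in edge cases, so I should confirm $g_i>0$ for all relevant $i$, or handle $q$ large separately). The second delicate point is justifying rigorously that $\alpha=\beta_I$ at the MSR corner rather than merely plausibly; the cleanest route is to invoke the explicit feasible-region description in Corollary 1 of \cite{sohn2018capacity} and read off the corner coordinates, rather than re-deriving the tradeoff from scratch. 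Everything else is routine substitution into \eqref{Eqn:Capacity of clustered DSS_rev} and \eqref{Eqn:gamma}.
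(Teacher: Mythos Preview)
Your approach is correct in spirit but genuinely different from the paper's. The paper does not work directly from the capacity expression \eqref{Eqn:Capacity of clustered DSS_rev}; instead it invokes Corollary~2 of \cite{sohn2018capacity}, which already gives the MSR point in the parametric form
\[
(\alpha,\gamma)=\left(\frac{\mathcal{M}}{\tau+\sum_{i=\tau+1}^{k}z_i},\ \frac{\mathcal{M}}{\tau+\sum_{i=\tau+1}^{k}z_i}\cdot\frac{\sum_{i=\tau+1}^{k}z_i}{s_\tau}\right),
\]
and then computes the auxiliary quantities: it shows $h_t=n_I$ exactly when $t>k-q$, hence $z_t=0$ there and $\tau=k-q$, so the denominator is $k-q$; plugging $\epsilon=0$ into $s_\tau$ yields the factor $(n_I-1)$ in $\gamma$. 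The claim $\alpha=\beta_I$ then falls out purely algebraically from $\gamma=(n_I-1)\alpha$ together with $\gamma=(n_I-1)\beta_I$. Your route---collapsing the capacity to $\sum_i g_i\min\{\alpha,(n_I-i)\beta_I\}$, counting the $k-q$ surviving slots, and then arguing the minimum $\beta_I$ by looking at which $\min$ saturates first---is more self-contained and gives better intuition for \emph{why} the answer is $k-q$, whereas the paper's proof is shorter because it offloads the tradeoff analysis to \cite{sohn2018capacity}.

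One point to tighten: your argument that $\beta_I=\alpha$ hinges on the index $i=n_I-1$ actually contributing, i.e.\ $g_{n_I-1}>0$. Since $g_{n_I-1}=q$ whenever $r<n_I-1$, this fails precisely when $q=0$ and $k\le n_I-2$; in that regime the binding constraint is at a smaller $i$ and your sandwich argument as written does not force $\beta_I=\alpha$. You flag edge cases in your obstacle paragraph, but you identify the risk as ``$q$ large'' when the actual danger is $q=0$. The cleanest fix is exactly the fallback you mention: read $\gamma$ off the explicit corner description in \cite{sohn2018capacity} (the paper uses Corollary~2, not Corollary~1) and deduce $\beta_I=\alpha$ from $\gamma=(n_I-1)\alpha$ and \eqref{Eqn:gamma}, rather than arguing it from the $\min$ structure.
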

\begin{proof}
	See Appendix \ref{Section:proof_of_prop_param_small_epsilon}.
\end{proof}

\subsection{Code Construction for $n_I \divides k$}

We now examine how to construct an MSR code when $n_I$ divides $k$, i.e., $n_I \divides k$ holds. 
The following theorem shows that a locally repairable code constructed in \cite{papailiopoulos2014locally} with locality $r=n_I-1$ is a valid MSR code for $n_I \divides k$.

\begin{theorem}[MSR Code Construction for $\epsilon=0, n_I \divides k$]\label{Thm:LRC1_achieves_MSR}
	Let $\mathds{C}$ be the $(n,r,d,\mathcal{M},\alpha)-$LRC constructed in \cite{papailiopoulos2014locally} for locality $r=n_I-1$. 
	Consider allocating coded symbols of $\mathds{C}$ in an $[n,k,L]-$clustered DSS, where $r+1=n_I$ nodes within the same repair group of $ \mathds{C}$ are located in the same cluster.	
	Then, the code $\mathds{C}$ is an MSR code for the $[n,k,L]-$clustered DSS under the conditions of $\epsilon=0$ and $n_I \divides k$.
\end{theorem}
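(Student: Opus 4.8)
The plan is to verify, one at a time, the three defining properties of Condition~\ref{Condition:code_general} for the resource pair $(\alpha,\gamma)=(\alpha_{\text{msr}}^{(0)},\gamma_{\text{msr}}^{(0)})$ supplied by Proposition~\ref{Prop:parameter_for_small_epsilon}. First I would pin down the parameters of the Papailiopoulos--Dimakis LRC $\mathds{C}$: set $\alpha=\alpha_{\text{msr}}^{(0)}=\mathcal{M}/(k-q)$, so that $\mathds{C}$ has dimension $K\coloneqq\mathcal{M}/\alpha=k-q$ when measured in node-symbols of size $\alpha$, and take its locality to be $r=n_I-1$. Since $n=n_I L$, the number of disjoint repair groups of $\mathds{C}$ equals $n/(r+1)=L$, and by hypothesis each such group is placed in its own cluster. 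The node-capacity bullet of Condition~\ref{Condition:code_general} is then immediate, since every node of $\mathds{C}$ stores exactly $\alpha=\alpha_{\text{msr}}^{(0)}$ symbols.

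For exact regeneration I would invoke the locality of $\mathds{C}$: the $r+1=n_I$ nodes of a repair group form an $[n_I,n_I-1]$ MDS code coordinate-wise, so any failed node is a deterministic function of the other $n_I-1$ nodes in its group, i.e.\ of the other $n_I-1$ nodes in its cluster. Each of these helpers simply transmits all of its $\alpha$ symbols; because the MSR point satisfies $\alpha=\beta_I$ (Proposition~\ref{Prop:parameter_for_small_epsilon}), this is exactly $\beta_I$ symbols per intra-cluster helper, while nothing at all is fetched from other clusters, so $\beta_c=0$. The total repair download is $(n_I-1)\beta_I=(n_I-1)\alpha=\gamma_{\text{msr}}^{(0)}$, matching \eqref{Eqn:parameters_for_small_epsilon}.

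The substantive step is data reconstruction: any $k$ of the $n$ nodes must recover the whole file. Here I would use that the Papailiopoulos--Dimakis construction attains the optimal LRC minimum distance $d=n-K-\lceil K/r\rceil+2$. The hypothesis $n_I\divides k$ is exactly what makes the evaluation clean: then $q=k/n_I$, so $K=k-q=q(n_I-1)=qr$ is an integer multiple of $r$, hence $\lceil K/r\rceil=q$ and $d=n-qr-q+2=n-q(r+1)+2=n-k+2$. A (vector) code of minimum distance $d$ is decodable from any $n-d+1=k-1\le k$ of its node-coordinates, so any $k$ contacted nodes determine the codeword and therefore the $\mathcal{M}$ source symbols. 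Since $\mathcal{M}=\mathcal{C}(\alpha,\gamma)$ at this resource pair by Proposition~\ref{Prop:parameter_for_small_epsilon}, all three bullets of Condition~\ref{Condition:code_general} hold and $\mathds{C}$ is an MSR code for the $[n,k,L]$-clustered DSS.

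I expect the only delicate point to be the bookkeeping in passing between base-field symbols and node-units in the LRC parametrization: one must confirm that $K=\mathcal{M}/\alpha=k-q$ is a positive integer and, crucially, that $r\divides K$ — which is precisely where $n_I\divides k$ enters — so that the optimal-distance formula evaluates to $d=n-k+2$ and not to a strictly smaller value (as can happen when $r\notdivides K$, i.e.\ $n_I\notdivides k$, a case deferred to the next subsection). Once the parameters are matched, the remaining items (node capacity, the local-group MDS property giving exact intra-cluster repair, and the implication ``minimum distance $d$ $\Rightarrow$ decodable from any $n-d+1$ nodes'') are routine.
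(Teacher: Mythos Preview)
Your approach mirrors the paper's: check node capacity, use the local $(n_I,n_I-1)$-MDS structure for exact intra-cluster repair with $\beta_I=\alpha$ and $\beta_c=0$, and invoke the minimum distance of the Papailiopoulos--Dimakis code to conclude reconstruction from any $k$ nodes. The only discrepancy is that the paper cites $d=n-k+1$ directly from the PD parametrization (and uses $n_I\divides k$ to match $\alpha=\tfrac{r+1}{r}\tfrac{\mathcal{M}}{k}$ with $\tfrac{\mathcal{M}}{k-q}$), whereas you derive $d=n-k+2$ by assuming the PD code meets the Singleton-type LRC bound with equality; either value yields $n-d+1\le k$ and hence suffices here, but be aware that the cited reference only states $d=n-k+1$, so your stronger claim would need separate justification.
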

\begin{proof}
	See Appendix \ref{Section:proof_of_LRC1_achieves_MSR}. 
\end{proof}

Fig. \ref{Fig:MSR_for_epsilon0_divisible} illustrates an example of the MSR code for the $\epsilon=0$ and $n_I \divides k$ case, which is constructed using the LRC in \cite{papailiopoulos2014locally}. In the $[n,k,L]=[6,3,2]-$clustered DSS scenario, the parameters are set to
\begin{align*}
\alpha &= n_I = n/L = 3, \\
\mathcal{M} &= (k-q) \alpha = (k-\floor{k/n_I}) \alpha = 6.
\end{align*}
Thus, each storage node contains $\alpha=3$ symbols, while the $[n,k,L]$ clustered DSS aims to reliably store a file of size $\mathcal{M}=6$.
This code has two properties, 1) \textit{exact regeneration} and 2) \textit{data reconstruction}:
\begin{enumerate}
	\item Any failed node can be exactly regenerated by contacting $n_I-1 = 2$ nodes in the same cluster,
	\item Contacting any $k=3$ nodes can recover the original file $\{x_i^{(j)}: i \in [3], j \in [2] \}$ of size $\mathcal{M}=6$.
\end{enumerate}

The first property is obtained from the fact that $y_i^{(1)}, y_i^{(2)}$ and $s_i = y_i^{(1)}+ y_i^{(2)}$ form a $(3,2)$ MDS code for $i \in [6]$. The second property is obtained as follows. For contacting  arbitrary $k=3$ nodes, three distinct coded symbols $\{y_{i_1}^{(1)}, y_{i_2}^{(1)}, y_{i_3}^{(1)}\}$ having superscript one and three distinct coded symbols $\{y_{j_1}^{(2)}, y_{j_2}^{(2)}, y_{j_3}^{(2)}\}$ having superscript two can be obtained for some 
$i_1, i_2, i_3 \in [6]$ and 
$j_1, j_2, j_3 \in [6]$. 
From Fig. \ref{Fig:MDS_Precoding}, the information $\{y_{i_1}^{(1)}, y_{i_2}^{(1)}, y_{i_3}^{(1)}\}$ suffice to recover $x_1^{(1)}, x_2^{(1)}, x_3^{(1)}$. Similarly, the information $\{y_{j_1}^{(2)}, y_{j_2}^{(2)}, y_{j_3}^{(2)}\}$ suffice to recover $x_1^{(2)}, x_2^{(2)}, x_3^{(2)}$. This completes the proof for the second property.  
Note that this coding scheme is already suggested by the authors of \cite{papailiopoulos2014locally}, while the present paper proves that this code also achieves the MSR point of the $[n,k,L]-$clustered DSS, in the case of $\epsilon=0$ and $n_I \divides k$.

\begin{figure}
	\centering
	\subfloat[][MDS precoding]{\includegraphics[width=80mm ]{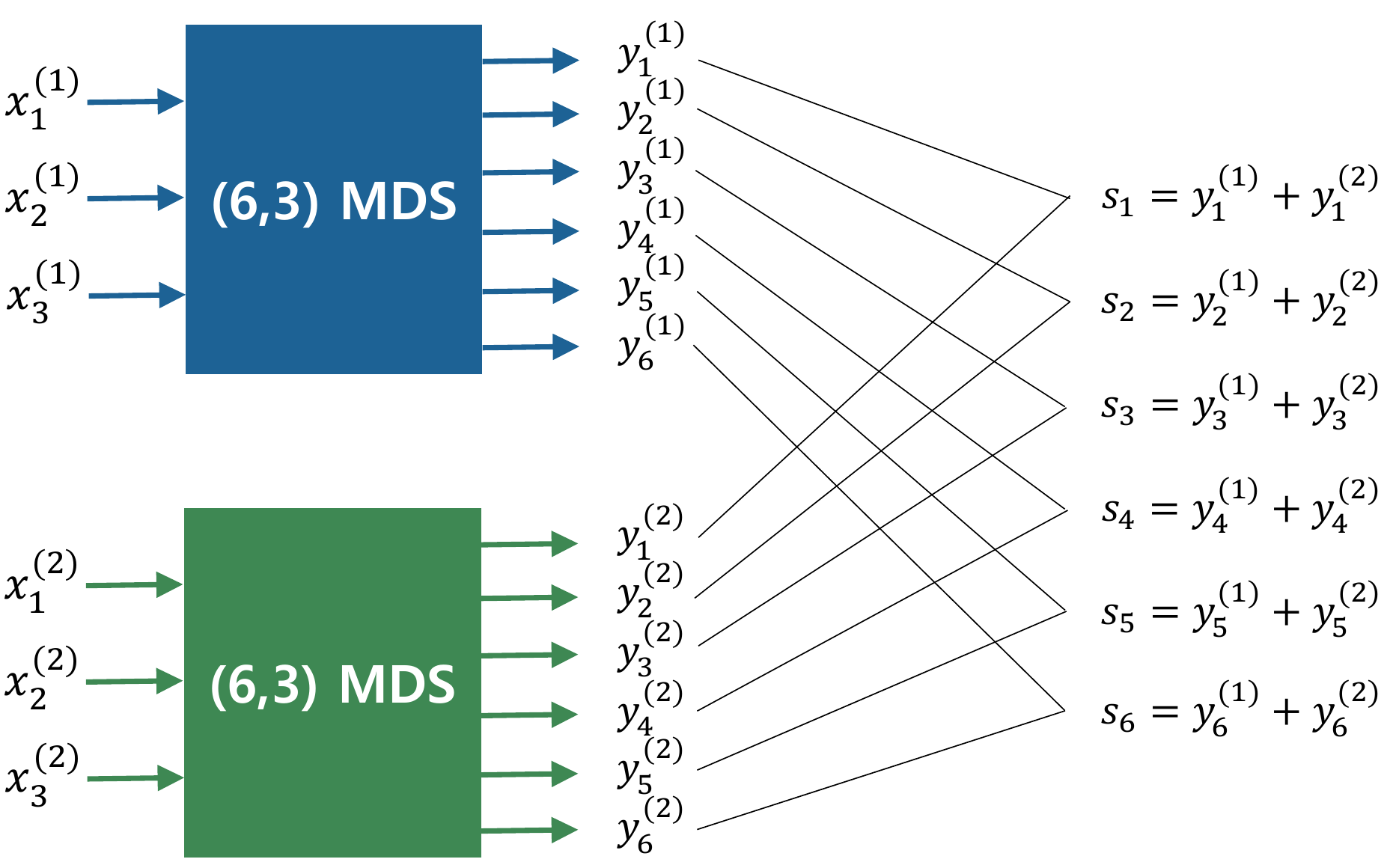}\label{Fig:MDS_Precoding}}
	\quad \quad
	\subfloat[][Allocation of coded symbols into $n$ nodes]{\includegraphics[width=90mm ]{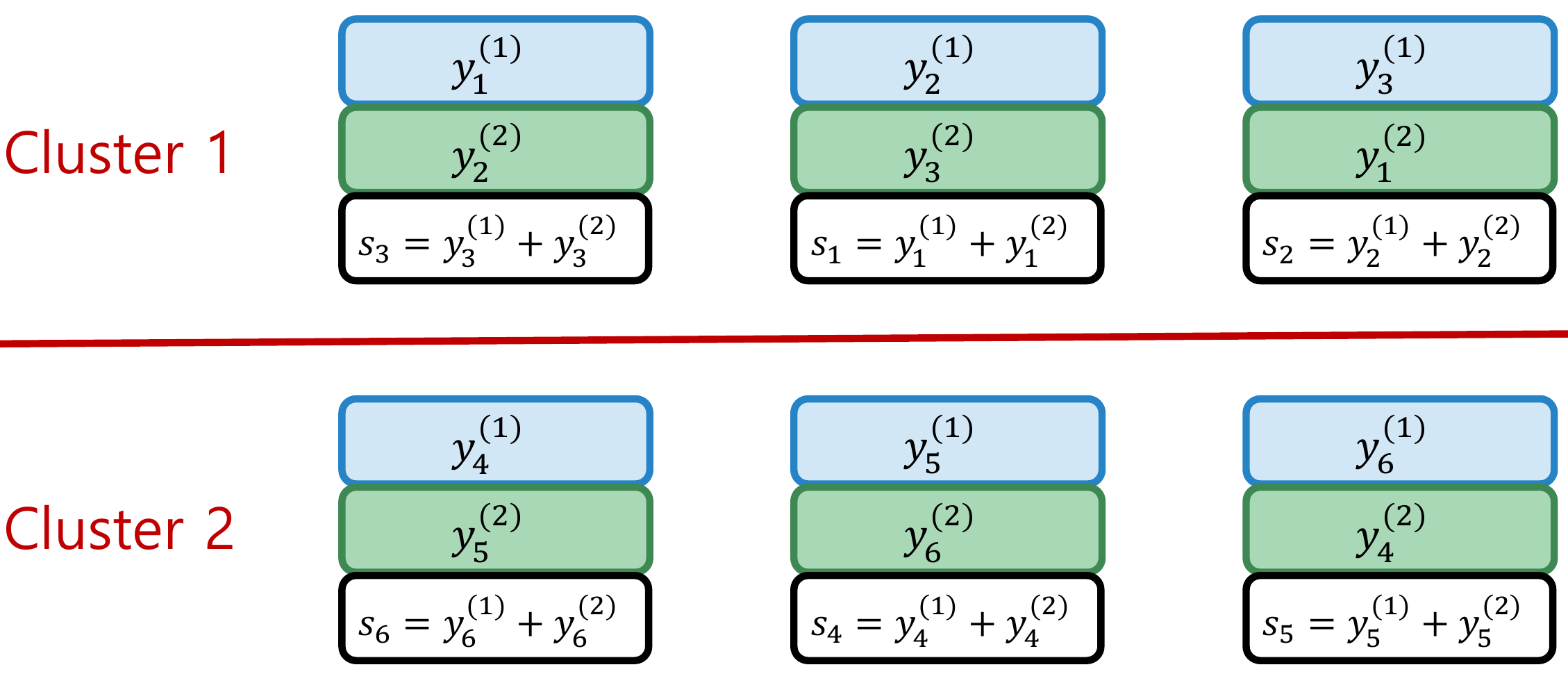}\label{Fig:Alloc_nodes}}
	\caption{MSR code for $\epsilon=0$ with $n_I \divides k$ for $n=6, k=3, L=2$. The construction rule follows the instruction in \cite{papailiopoulos2014locally}, with the concept of the \textit{repair group} in \cite{papailiopoulos2014locally} interpreted as the \textit{cluster} of the present paper.} 
	\label{Fig:MSR_for_epsilon0_divisible}
\end{figure}

\subsection{Code Construction for $n_I \notdivides k$}

Here we construct an MSR code when the given system parameters satisfy $n_I \notdivides k$. The theorem below shows that the optimal $(n,k-q,n_I-1)-$LRC designed in \cite{tamo2016optimal} is a valid MSR code when $n_I \notdivides k$ holds.

\begin{theorem}[MSR Code Construction for $\epsilon=0, n_I \notdivides k$]\label{Thm:LRC2_achieves_MSR}
	Let $\mathds{C}$ be the $(n_0,k_0,r_0)-$LRC constructed in \cite{tamo2016optimal} for $n_0=n, k_0=k-q$ and $r_0=n_I-1$.
	Consider allocating the coded symbols of $\mathds{C}$ in a $[n,k,L]-$clustered DSS, where $r+1=n_I$ nodes within the same repair group of $ \mathds{C}$ are located in the same cluster.
	Then, $\mathds{C}$ is an MSR code for the $[n,k,L]-$clustered DSS under the conditions of $\epsilon=0$ and $n_I \notdivides k$.
\end{theorem}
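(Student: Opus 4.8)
The plan is to verify the three defining properties of Condition \ref{Condition:code_general} directly, leveraging the optimality guarantees already built into the $(n_0,k_0,r_0)$-LRC of \cite{tamo2016optimal}. First I would pin down the parameters: with $k_0 = k-q$ and $r_0 = n_I-1$, the $r_0+1 = n_I$ symbols in each repair group fill exactly one cluster, and the per-symbol size is $\alpha = \mathcal{M}/k_0 = \mathcal{M}/(k-q)$, which by Proposition \ref{Prop:parameter_for_small_epsilon} is precisely $\alpha_{\text{msr}}^{(0)}$. So the ``each node contains $\alpha$ symbols'' requirement and the storage value are immediate from the construction of \cite{tamo2016optimal}.

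Next I would handle exact regeneration. The LRC of \cite{tamo2016optimal} has all-symbol locality $r_0 = n_I-1$, meaning any coded symbol is a function of $r_0$ other symbols in its repair group; since the repair group is the host cluster, a failed node is rebuilt by downloading $\beta_I = 1$ symbol from each of the $n_I-1$ surviving nodes in its cluster and $\beta_c = 0$ from elsewhere, giving total repair bandwidth $\gamma = n_I-1 = (n_I-1)\alpha$ in units where $\alpha=1$ — matching $\gamma_{\text{msr}}^{(0)}$ after scaling. (One subtlety: \cite{tamo2016optimal} guarantees locality but I should confirm the local recovery uses the \emph{same} $r_0$ helpers regardless of which symbol in the group failed, or at least that $r_0$ helpers always suffice; the linearity of the Tamo–Barg construction makes each local group an MDS-like code, so any $r_0$ of the $n_I$ symbols in a cluster determine the remaining one, which is exactly what RBT-style regeneration needs.)

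Then comes data reconstruction, which I expect to be the crux. I need to show that contacting any $k$ of the $n$ nodes recovers all $\mathcal{M}$ symbols, i.e., that the code has minimum distance $d \ge n-k+1$ in the sense that any $k$ coordinates are an information set — but that is generally \emph{false} for an LRC, whose distance is $d = n - k_0 - \lceil k_0/r_0\rceil + 2$ by the Singleton-type bound, strictly less than $n-k_0+1$. The resolution must be that the \emph{clustered} capacity $\mathcal{M} = \mathcal{C}(\alpha,\gamma)$ at this operating point equals $k_0\alpha = (k-q)\alpha$ rather than $k\alpha$, so one only needs any $k$ nodes to contain $k-q$ \emph{independent} symbols. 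The key combinatorial fact to establish: among any $k$ nodes, the worst case spreads them over clusters so as to maximize intra-cluster redundancy, and with locality $n_I-1$ each fully-contacted cluster of $n_I$ nodes contributes only $n_I-1$ independent symbols — losing exactly one per complete cluster. Since $k$ nodes can fully occupy at most $q = \lfloor k/n_I\rfloor$ clusters, at most $q$ symbols are redundant, so at least $k - q = k_0$ independent symbols are always retrieved, and the information-set / dual-distance property of the Tamo–Barg code (any set of coordinates meeting every repair group in at most $r_0$ positions, plus enough total, is an information set) then recovers everything. I would formalize this by checking that the optimal LRC's generator matrix has the property that a coordinate set is an information set whenever it has size $\ge k_0$ and intersects each repair group in $\le r_0$ coordinates — a standard feature of these constructions — and then arguing the worst-case $k$-node pattern indeed realizes this. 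The main obstacle is precisely this last step: marrying the clustered-capacity accounting (why $\mathcal{M}=(k-q)\alpha$ suffices and is tight) with the precise information-set structure of the code in \cite{tamo2016optimal}, and ruling out a bad $k$-subset that somehow grabs fewer than $k-q$ independent symbols.
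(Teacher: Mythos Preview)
Your handling of the node-storage count and exact regeneration is fine and matches the paper. The issue is data reconstruction, where you take a detour around a one-line computation that does all the work.

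You write down the Tamo--Barg minimum distance $d = n_0 - k_0 - \lceil k_0/r_0\rceil + 2$ but then only compare it to $n-k_0+1$ and conclude there is a gap to close by other means. What you never check is what $d$ actually equals after substituting $n_0=n$, $k_0=k-q$, $r_0=n_I-1$ and invoking the hypothesis $n_I \nmid k$. Writing $k = qn_I + r$ with $1 \le r \le n_I-1$ gives $k_0 = q(n_I-1)+r$, hence $\lceil k_0/r_0\rceil = q+1$, and therefore
\[
d \;=\; n - (k-q) - (q+1) + 2 \;=\; n - k + 1.
\]
So any $k$ coordinates already form an information set, and reconstruction is immediate. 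This is exactly the paper's proof, and it is precisely where the condition $n_I \nmid k$ enters (it forces $r\ge 1$, making the ceiling $q+1$ rather than $q$).

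Your proposed workaround---drop to a $(k-q)$-subset meeting each repair group in at most $r_0$ positions and appeal to a ``standard feature'' that such subsets are information sets---is unnecessary, and the feature you cite is not something you can simply quote: it is not implied by the Singleton-type distance bound and would need its own proof for the specific construction. In the layered view (outer RS, then per-cluster $(n_I,n_I-1)$ MDS), a cluster contributing fewer than $n_I-1$ nodes hands you only block-local linear combinations of the RS symbols, and showing those compose with the RS generator to full rank is a genuine extra argument. The direct distance computation both finishes the proof cleanly and dissolves the ``main obstacle'' you flagged.
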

\begin{proof}
	See Appendix \ref{Section:proof_of_LRC2_achieves_MSR}.
\end{proof}

\begin{figure}
	\centering
	\subfloat[][Encoding structure]{\includegraphics[width=40mm ]{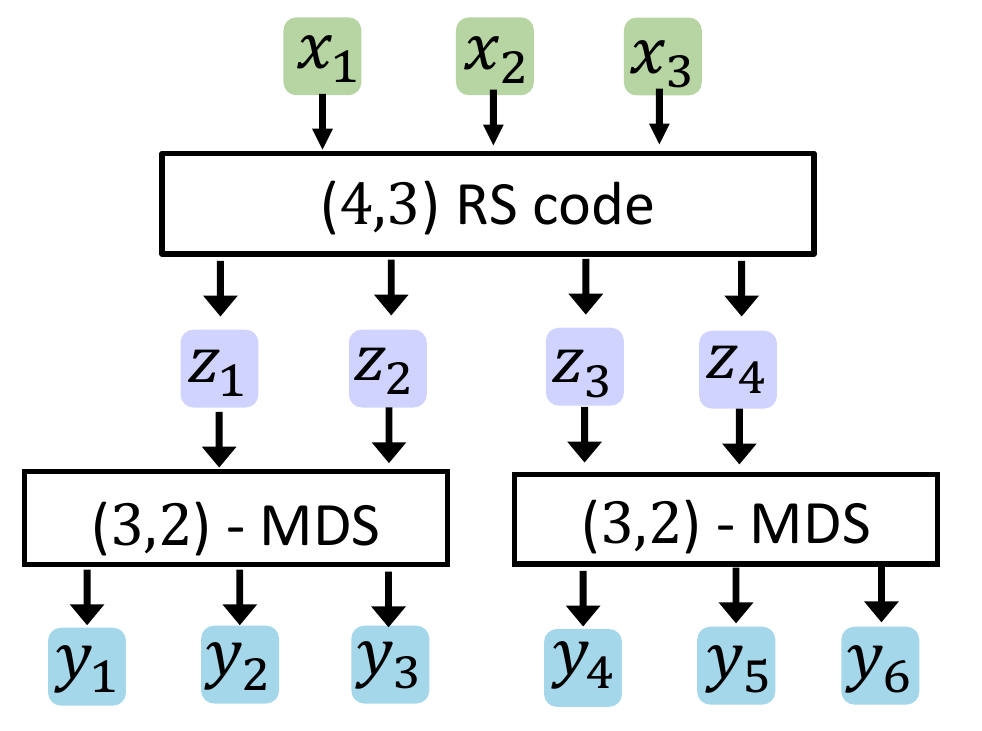}\label{Fig:Epsilon0_nondivisible_ENC}}
	\quad \quad
	\subfloat[][Allocation of coded symbols into $n$ nodes]{\includegraphics[width=40mm ]{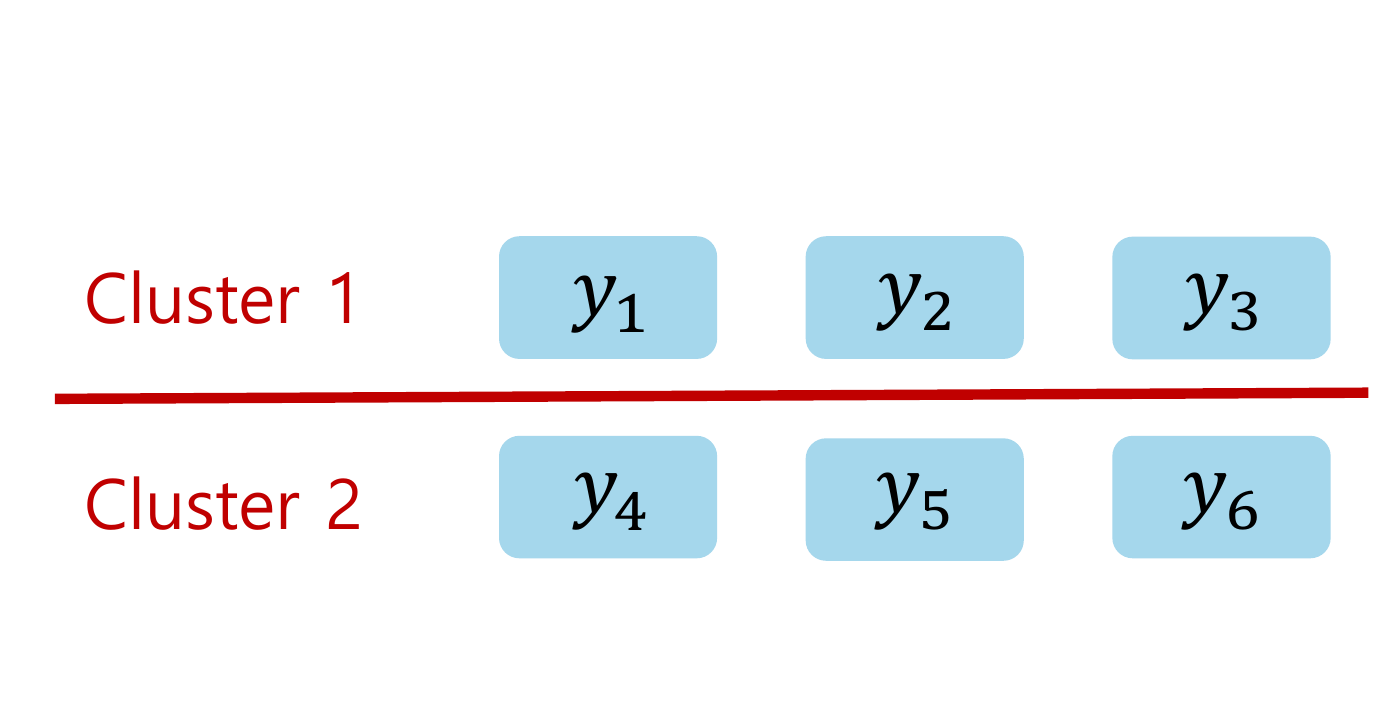}\label{Fig:Epsilon0_nondivisible_alloc}}
	\caption{MSR code for $\epsilon=0$ with $n_I \notdivides k$ case ($n=6, k=4, L=2$). The encoding structure follows from the instruction in \cite{tamo2016optimal}, which constructed $[n_0,k_0,r_0]-LRC$. This paper utilizes $[n,k-q,n_I-1]-LRC$ to construct MSR code for $[n,k,L]$ clustered DSS, in the case of $\epsilon=0$ with $n_I \notdivides k$.}
	\label{Fig:Epsilon0_nondivisible_construct}
\end{figure}

Fig. \ref{Fig:Epsilon0_nondivisible_construct} illustrates an example code construction for the $n_I \notdivides k$ case. Without losing generality, we consider the $\alpha=1$ case; applying this code for $\alpha$ times in a parallel manner achieves the MSR point for the general case with arbitrary positive integer $\alpha$.
In the $[n=6,k=4,L=2]$ clustered DSS with $\epsilon=0$, system parameters are
\begin{align*}
[n_0,k_0,r_0]&=[n,k-q,n_I-1]=[6,3,2],\\
\alpha&=1, \\
\mathcal{M}&=(k-q)\alpha=(k-\floor{k/n_I})=3,
\end{align*}
from Proposition \ref{Prop:parameter_for_small_epsilon}.
The code in Fig. \ref{Fig:Epsilon0_nondivisible_construct} satisfies 1) \textit{exact regeneration} and 2) \textit{data reconstruction} properties, i.e., 
\begin{enumerate}
	\item Any failed node can be exactly regenerated by contacting $n_I-1 = 2$ nodes in the same cluster,
	\item Contacting any $k=4$ nodes can recover the original file $\{x_i: i \in [3]\}$ of size $\mathcal{M}=3$,
\end{enumerate} 
which can be shown as below.
Note that $\{y_i\}_{i=1}^3$ in Fig. \ref{Fig:Epsilon0_nondivisible_construct} is a set of coded symbols generated by a $(3,2)-$MDS code, and this statement also holds for $\{y_i\}_{i=4}^6$. This proves the first property. The second property is directly from the result of \cite{tamo2016optimal}, which states that the minimum distance of the $[n_0,k_0,r_0]-LRC$ is
\begin{align}
d&=n_0-k_0-\left\lceil\dfrac{k_0}{r_0}\right\rceil+2 =6-3-\ceil{3/2}+2=3. 
\end{align}
Note that the $[n_0,k_0,r_0]-LRC$ is already suggested by the authors of \cite{tamo2016optimal}, while the present paper proves that applying this code with $n_0=n, k_0=k-q, r_0=n_I-1$ achieves the MSR point of the $[n,k,L]-$clustered DSS, in the case of $\epsilon=0$ and $n_I \notdivides k$.

\section{MSR Code Design for $ \frac{1}{n-k} \leq \epsilon \leq 1 $}\label{Section:MSR_epsilon_positive}

In the proposition below, we first provide an explicit form of the system parameters when $\frac{1}{n-k} \leq \epsilon \leq 1$. 
Without losing generality, we set the cross-cluster repair bandwidth as $\beta_c=1$. 
In general cases for arbitrary positive integers $\beta_c > 1$, we can apply the code for $\beta_c = 1$ in a parallel manner.
Moreover, we assume that $\beta_I = 1/\epsilon$ is a positive integer.

\begin{prop}\label{Prop:parameter_for_large_epsilon}
	The MSR point $(\alpha, \gamma) = (\alpha_{\text{msr}}^{(\epsilon)}, \gamma_{\text{msr}}^{(\epsilon)})$ for $\frac{1}{n-k}\leq \epsilon \leq 1$ is 
	\begin{equation}\label{Eqn:parameters_for_large_epsilon}
	(\alpha_\text{msr}^{(\epsilon)}, \gamma_\text{msr}^{(\epsilon)}) = \left(\frac{\mathcal{M}}{k}, \frac{\mathcal{M}}{k} \cdot \frac{n-n_I+(n_I-1)/\epsilon}{n-k}\right).
	\end{equation}
	This point satisfies
	$\alpha=n-k$, $\beta_I = 1/\epsilon$ and $\mathcal{M} = k(n-k)$. 
\end{prop}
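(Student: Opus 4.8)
The plan is to derive \eqref{Eqn:parameters_for_large_epsilon} directly from the general MBR/MSR machinery, just as Proposition 1 was deduced from Corollary 2 of \cite{sohn2018capacity}, and then verify the three consequences $\alpha = n-k$, $\beta_I = 1/\epsilon$, and $\mathcal{M} = k(n-k)$ by plugging in the specialized parameters. First I would recall from \eqref{Eqn:alpha_MSR_large_epsilon_val} that for $\epsilon \in [\frac{1}{n-k},1]$ the MSR point attains the minimum storage $\alpha_{\text{msr}}^{(\epsilon)} = \mathcal{M}/k$; this gives the first coordinate immediately. For the second coordinate, I would substitute $\alpha = \mathcal{M}/k$ into the capacity formula \eqref{Eqn:Capacity of clustered DSS_rev} and argue that at the MSR point every term in the double sum achieves the value $\alpha$ (i.e.\ the $\min$ is attained by $\alpha$ rather than by the linear-in-$\beta$ expression), so that $\mathcal{M} = k\alpha$ holds; solving the boundary equation where the smallest such term equals $\alpha$ yields the relation between $\gamma$ and $\alpha$, and substituting $\gamma = (n_I-1)\beta_I + (n-n_I)\beta_c$ with $\beta_c = 1$, $\beta_I = 1/\epsilon$ reproduces the stated $\gamma_{\text{msr}}^{(\epsilon)} = \frac{\mathcal{M}}{k}\cdot\frac{n-n_I+(n_I-1)/\epsilon}{n-k}$.

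Concretely, the key steps in order are: (i) cite \eqref{Eqn:alpha_MSR_large_epsilon_val} to fix $\alpha_{\text{msr}}^{(\epsilon)} = \mathcal{M}/k$; (ii) identify, from the structure of \eqref{Eqn:Capacity of clustered DSS_rev}, the ``bottleneck'' term — the one with the smallest coefficient, corresponding to $i = n_I$ (so $\rho_i = 0$) and $j = g_{n_I}$ — and set $\alpha$ equal to that term's linear expression, which after simplification with $\sum_{m=1}^{n_I} g_m = k$ gives $\alpha = (n-k)\beta_c$; (iii) combine $\alpha = (n-k)\beta_c$ with $\alpha = \mathcal{M}/k$ and $\beta_c = 1$ to obtain $\alpha = n-k$ and hence $\mathcal{M} = k(n-k)$; (iv) read off $\beta_I = 1/\epsilon$ from the normalization $\beta_c = 1$ and the definition $\epsilon = \beta_c/\beta_I$; (v) substitute these into $\gamma = (n_I-1)\beta_I + (n-n_I)\beta_c = (n_I-1)/\epsilon + (n-n_I)$ and rewrite as $\frac{\mathcal{M}}{k}\cdot\frac{n-n_I+(n_I-1)/\epsilon}{n-k}$ using $\mathcal{M}/k = n-k$. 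Most of this is bookkeeping, so I would likely relegate the detailed verification to an appendix (indeed the excerpt points to Appendix~\ref{Section:proof_of_prop_param_large_epsilon}).

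The main obstacle is step (ii): one must justify rigorously that at the MSR point \emph{all} $k$ terms of the double sum in \eqref{Eqn:Capacity of clustered DSS_rev} simultaneously equal $\alpha$, equivalently that the bottleneck term $\rho_{n_I}\beta_I + (n - \rho_{n_I} - (\sum_{m=1}^{n_I-1}g_m) - g_{n_I})\beta_c = (n-k)\beta_c$ is the minimum over all $(i,j)$ and that setting $\alpha$ to this common value is exactly what $\epsilon \ge \frac{1}{n-k}$ guarantees. This requires checking that the coefficient $\rho_i\beta_I + (n - \rho_i - (\sum_{m=1}^{i-1}g_m) - j)\beta_c$ is non-decreasing as $(i,j)$ ranges over the summation in the natural order, which uses $\beta_I \ge \beta_c$ (equivalently $\epsilon \le 1$) together with the structure of the $g_m$'s; and conversely that $\epsilon < \frac{1}{n-k}$ would force some term below $\alpha = \mathcal{M}/k$, matching \eqref{Eqn:alpha_MSR_small_epsilon_val}. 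Once this monotonicity/threshold argument is in place, everything else is a short substitution. Alternatively — and this may be cleaner — one can bypass re-deriving the bottleneck by quoting the $(\alpha,\gamma)$ tradeoff from Corollary~1 of \cite{sohn2018capacity} directly and simply evaluating it at the MSR corner, which is the route I would take in the appendix to keep the argument self-contained relative to the cited results.
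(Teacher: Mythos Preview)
Your proposal is correct, and your closing alternative is essentially what the paper does. The paper's appendix simply invokes Corollary~2 of \cite{sohn2018capacity}, which already packages the MSR corner as $(\alpha,\gamma)=\bigl(\mathcal{M}/k,\ (\mathcal{M}/k)\cdot 1/s_{k-1}\bigr)$, computes $s_{k-1}=(n-k)\epsilon/\bigl((n_I-1)+\epsilon(n-n_I)\bigr)$ from the definition of the $\{s_t\}$ sequence, and reads off \eqref{Eqn:parameters_for_large_epsilon}; then it equates this $\gamma$ with $(n_I-1)/\epsilon+(n-n_I)$ to extract $\mathcal{M}=k(n-k)$ and $\alpha=n-k$. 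Your primary route instead re-derives the MSR corner from scratch by locating the bottleneck term $(i,j)=(n_I,g_{n_I})$ in \eqref{Eqn:Capacity of clustered DSS_rev} and arguing that at the MSR point every summand collapses to $\alpha$; this is more self-contained and makes transparent \emph{why} the threshold $\epsilon=1/(n-k)$ appears (it is exactly the condition $\alpha\le(n-k)\beta_c$ with $\alpha=\mathcal{M}/k$), at the cost of the monotonicity check you flag in step~(ii). The paper's route is shorter because that monotonicity is already baked into the cited corollary; your route buys insight but duplicates work done in \cite{sohn2018capacity}.
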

\begin{proof}
	See Appendix \ref{Section:proof_of_prop_param_large_epsilon}.
\end{proof}

Under this setting, we design MSR codes as below.

\begin{theorem}[MSR Code Construction for $\frac{1}{n-k} \leq \epsilon \leq 1$]\label{Thm:MSR_large_epsilon}
	Let $\mathds{C}$ be an existing MSR code for $[n,k,d=n-1]$ non-clustered DSS\footnote{Here, $[n,k,d]$ non-clustered DSS represents a conventional DSS \cite{dimakis2010network} with $\beta_I=\beta_c$ (i.e. $\epsilon=1$) satisfying the following: 
	contacting any $k$ out of $n$ nodes suffices to recover the source symbols, and any failed node can be regenerated by contacting arbitrary $d$ helper nodes.}.
	Consider a coding scheme $\mathds{C}'$ (a modified version of $\mathds{C}$) defined as follows, which can be applied to an $[n,k,L]-$clustered DSS: 
	\begin{itemize}
		\item First, apply the encoding rule of $\mathds{C}$ to the given source symbols. 
		\item Second, follow the rule for allocating coded symbols into $n$ nodes as specified in $\mathds{C}$. Here, we do not care about which cluster each node resides in.
		\item Consider regenerating a failed node in the $l^{th}$ cluster. Regarding the helper nodes in other clusters, follow the repair rule of $\mathds{C}$.
		As for the helper node in the $l^{th}$ cluster, each node sends the symbol specified in $\mathds{C}$, repeatedly  for $\beta_I=1/\epsilon $ times.
	\end{itemize}
	
Then, applying code $\mathds{C}'$ 
to $[n,k,L]-$clustered DSS achieves the MSR point for $\frac{1}{n-k} \leq \epsilon \leq 1$.
\end{theorem}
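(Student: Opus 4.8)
The plan is to verify that the modified code $\mathds{C}'$ meets the three requirements of Condition~\ref{Condition:code_general} at the resource pair of Proposition~\ref{Prop:parameter_for_large_epsilon}, namely $\alpha=n-k$, $\beta_I=1/\epsilon$, $\beta_c=1$, $\mathcal{M}=k(n-k)$, and $\gamma_{\text{msr}}^{(\epsilon)}=(n_I-1)/\epsilon+(n-n_I)$. The pivotal preliminary observation is that an exact MSR code $\mathds{C}$ for an $[n,k,d=n-1]$ non-clustered DSS storing a file of size $\mathcal{M}=k(n-k)$ operates with $\alpha=\mathcal{M}/k=n-k$ and per-helper repair bandwidth $\beta=\alpha/(d-k+1)=(n-k)/(n-k)=1$; that is, in $\mathds{C}$ each of the $d=n-1$ surviving nodes contributes exactly one symbol, and those $n-1$ symbols exactly regenerate the failed node. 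Existence of such a $\mathds{C}$ for all $n,k$ is supplied by known constructions, e.g.\ \cite{suh2011exact, rashmi2011optimal}.

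Node capacity and data reconstruction are then immediate. Because $\mathds{C}'$ applies exactly the encoding map and the symbol-to-node allocation of $\mathds{C}$ and merely reinterprets the $n$ nodes as grouped into $L$ clusters, each node of $\mathds{C}'$ stores $\alpha=n-k=\alpha_{\text{msr}}^{(\epsilon)}$ symbols, and any $k$ of the $n$ nodes jointly determine the $\mathcal{M}=k(n-k)$ source symbols, inheriting the reconstruction property of $\mathds{C}$, which does not depend on the cluster partition. Moreover, by Proposition~\ref{Prop:parameter_for_large_epsilon} this $\mathcal{M}$ equals the clustered capacity $\mathcal{C}(\alpha_{\text{msr}}^{(\epsilon)},\gamma_{\text{msr}}^{(\epsilon)})$, so the reconstruction threshold matches the target file size exactly.

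For exact regeneration one must produce the clustered download profile $(\beta_I,\beta_c)=(1/\epsilon,1)$. When a node in cluster $l$ fails, $\mathds{C}'$ lets all $n-1$ surviving nodes help: each of the $n-n_I$ nodes outside cluster $l$ sends the single symbol prescribed by the repair rule of $\mathds{C}$, i.e.\ $\beta_c=1$ symbol; each of the $n_I-1$ nodes inside cluster $l$ sends that same prescribed symbol, transmitted $1/\epsilon$ times, i.e.\ $\beta_I=1/\epsilon$ symbols. Discarding the duplicate copies leaves precisely the $n-1$ symbols that the repair function of $\mathds{C}$ consumes, so the replacement node is regenerated exactly. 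The aggregate download is $(n_I-1)\beta_I+(n-n_I)\beta_c=(n_I-1)/\epsilon+(n-n_I)=\gamma_{\text{msr}}^{(\epsilon)}$, and the split into intra- and cross-cluster traffic is exactly $(\beta_I,\beta_c)=(1/\epsilon,1)$. Here it is used that $\mathds{C}$ is taken with $d=n-1$, the maximum-helper setting, so that the (unique) helper set of $\mathds{C}$ coincides with the $n-1$ survivors of the clustered system; and since $\epsilon\ge 1/(n-k)$ we have $\beta_I=1/\epsilon\le n-k=\alpha$, consistent with the regime in which the MSR point attains $\alpha=\mathcal{M}/k$.

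There is no substantive difficulty in the argument; the only step needing care is the bandwidth bookkeeping of the previous paragraph — confirming that the non-clustered MSR code contributes exactly one symbol per helper at the operating point $\mathcal{M}=k(n-k)$, so that padding each intra-cluster transmission up to $1/\epsilon$ copies produces exactly $\beta_I=1/\epsilon$ and the total coincides with $\gamma_{\text{msr}}^{(\epsilon)}$ of Proposition~\ref{Prop:parameter_for_large_epsilon}. Everything else follows by transferring the reconstruction and exact-repair guarantees of $\mathds{C}$ verbatim.
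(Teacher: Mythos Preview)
Your proposal is correct and follows essentially the same route as the paper's own proof: both verify Condition~\ref{Condition:code_general} at the parameters of Proposition~\ref{Prop:parameter_for_large_epsilon} by observing that the underlying $[n,k,d=n-1]$ non-clustered MSR code has $\alpha=n-k$ and per-helper bandwidth $\beta=1$, so storage and reconstruction are inherited verbatim and exact regeneration is obtained by padding each intra-cluster helper's single symbol with $1/\epsilon-1$ redundant copies. Your write-up is slightly more explicit in the bandwidth bookkeeping (computing $\beta=\alpha/(d-k+1)=1$ and noting $\beta_I=1/\epsilon\le\alpha$), but the argument is the same.
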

\begin{proof}
	See Appendix \ref{Section:proof_of_MSR_large_epsilon}.
\end{proof}

\begin{remark}\label{Rmk:NC_MSR_ex}
Note that in the code construction introduced in Theorem \ref{Thm:MSR_large_epsilon}, we can use any existing MSR code $\mathds{C}$ for $[n,k,d=n-1]$ non-clustered DSS. For example, the MSR codes suggested in \cite{rashmi2011optimal, suh2011exact} can be utilized in the construction.
\end{remark}
	

An intuitive explanation on the result of Theorem \ref{Thm:MSR_large_epsilon} is as follows. Note that the maximum reliably storable file size $\mathcal{M}=k(n-k)$, node storage capacity $\alpha = n-k$ and cross-cluster repair bandwidth $\beta_c = 1$ are invariant to $\epsilon$ with $\frac{1}{n-k} \leq \epsilon \leq 1$,
and only $\beta_I = 1/\epsilon$ varies as $\epsilon$ changes. 
Thus, an existing MSR code for non-clustered DSS with $\epsilon=1$ can be used in the construction of an MSR code for clustered DSS with 
$\frac{1}{n-k} \leq \epsilon \leq 1$; the only modification is needed in $\beta_I$, increasing $\beta_I = 1$ to $\beta_I = 1/\epsilon$. This modification can be done by sending redundant information in the intra-cluster communication link with redundancy $1/\epsilon$.  


%
%
%
%
%

\subsection{Code Construction for $\epsilon = \frac{1}{n-k}, n=kL$}

Here we provide another MSR code construction in Algorithm 
\ref{Algo:MSR_nkL}, which requires a smaller\footnote{A detailed comparison on the required field size is 
	given in Remark \ref{Rmk:MSR_FieldSizeCompare}. }
 field size compared to the construction in Theorem \ref{Thm:MSR_large_epsilon}. 
Note that the code suggested in Algorithm 
\ref{Algo:MSR_nkL} is applicable when $\epsilon = \frac{1}{n-k}$ and $n=kL$ hold.
Here, the system parameters are set to
\begin{align*}
\mathcal{M} &= k(n-k), \quad \quad \quad \quad
\alpha = n-k, \\
\beta_I &= 1/\epsilon = n-k, 
\quad \quad \ \beta_c = 1,
\end{align*}
according to Proposition \ref{Prop:parameter_for_large_epsilon}.
Moreover, the $j^{th}$ node in the $l^{th}$ cluster is denoted as $N_{(l-1)n_I + j}$ in this algorithm, i.e., we have $N_{(l-1)n_I + j} = N(l,j)$ for $l \in [L], j \in [n_I]$.



\begin{algorithm}
	\caption{MSR code construction for $\epsilon = \frac{1}{n-k}, n=kL$}
	\label{Algo:MSR_nkL}
	\begin{algorithmic}
		\REQUIRE System parameters $n, k, L$ 
		\\ \hspace{6.5mm} Source symbol vector $\mathbf{s} = [s_1, \cdots, s_{k(n-k)}]$ 
		\ENSURE Symbols stored on nodes $N_1, \cdots, N_n$
		\STATE \textbf{Step 1.} Generate encoded symbols $c_1, \cdots, c_{n(n-k)}$:
		\FOR{ $i=1,\cdots,n-k $}
		\STATE \hspace{2mm} Apply an $(n,k)-$MDS code (denoted as $\mathds{C}_i$) to $k$ source symbols $s_{k(i-1)+1}, \cdots, s_{ki}$, which generates $n$ coded symbols $c_{n(i-1)+1}, \cdots, c_{ni}$
		\ENDFOR
		
		\STATE \textbf{Step 2.} Distribute coded symbols $\{c_u\}_{u=1}^{n(n-k)}$ 
		to $n$ nodes:
		\STATE \hspace{2mm} Coded symbol $c_u$ is stored in node $N_{mod(u-1,n)+1}$.
		
	\end{algorithmic}
\end{algorithm}

The code suggested in Algorithm \ref{Algo:MSR_nkL} is illustrated in Fig. \ref{Fig:codekL}. 
This code has the following property.

\begin{figure}[!t]
	\centering
	\includegraphics[height=40mm]{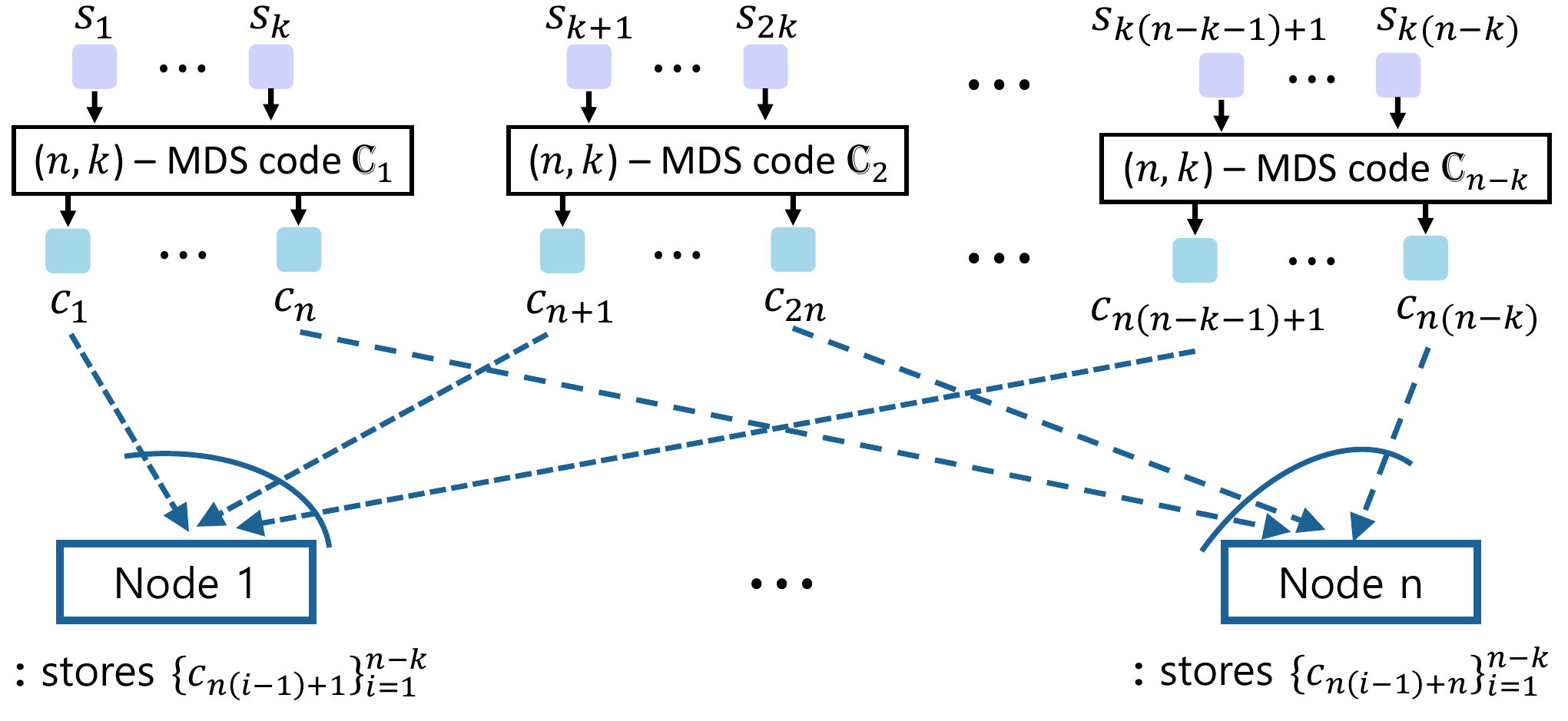}
	\caption{MSR code construction suggested in Algorithm \ref{Algo:MSR_nkL}. }
	\label{Fig:codekL}
\end{figure}

\begin{remark}\label{Rmk:MSRkL}
The code suggested in Algorithm \ref{Algo:MSR_nkL} satisfies the following, which can be confirmed from Fig. \ref{Fig:codekL}:
\begin{itemize}
	\item Each node contains $\alpha=n-k$ coded symbols, which are generated from distinct $(n,k)-$MDS codes. To be specific, node $N_t$ stores $(n-k)$ coded symbols $\{c_{n(i-1)+t}\}_{i=1}^{n-k}$, where $c_{n(i-1)+t}$ is generated from $(n,k)-$MDS code $\mathds{C}_i$.
\end{itemize}
\end{remark}

Moreover, the code suggested in Algorithm \ref{Algo:MSR_nkL} is a valid MSR code when $\epsilon=\frac{1}{n-k}$ and $n=kL$, as stated below.

\begin{theorem}
	\label{Thm:MSR_nkL}
	The code in Algorithm \ref{Algo:MSR_nkL} is an MSR code for $[n,k,L]-$clustered DSS with $\epsilon=\frac{1}{n-k}$ and $n=kL$. In other words, it satisfies all requirements stated in Condition \ref{Condition:code_general}:
	\begin{itemize}
		\item 
		Each node contains $\alpha=n-k$ coded symbols.
		\item \textbf{(Exact regeneration)} When a node fails, it can be exactly regenerated by using the intra-cluster repair bandwidth of $\beta_I=n-k$ and the cross-cluster repair bandwidth of $\beta_c=1$. 
		\item \textbf{(Data reconstruction)} Contacting any $k$ out of $n$ nodes can retrieve vector $\mathbf{s}$ with $\mathcal{M}=k(n-k)$ source symbols.
	\end{itemize}
%
\end{theorem}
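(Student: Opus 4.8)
The plan is to check, one by one, the three bullet points of Condition~\ref{Condition:code_general} at the MSR resource pair given by Proposition~\ref{Prop:parameter_for_large_epsilon}. Specializing that proposition to $\epsilon=\frac1{n-k}$ and $n=kL$ yields $n_I=k$, $\alpha=n-k$, $\beta_I=1/\epsilon=n-k$, $\beta_c=1$, and $\mathcal{M}=k(n-k)$, and these are the target values. The node-content claim is immediate from the construction: by Remark~\ref{Rmk:MSRkL}, node $N_t$ stores exactly the $t$-th coded symbol $c_{n(i-1)+t}$ of each of the $n-k$ MDS codes $\mathds{C}_1,\dots,\mathds{C}_{n-k}$, i.e.\ $\alpha=n-k$ symbols. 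So only data reconstruction and exact regeneration remain.

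For data reconstruction I would let a data collector contact an arbitrary set of $k$ nodes $N_{t_1},\dots,N_{t_k}$ with distinct global indices $t_1,\dots,t_k\in[n]$. For each fixed $i\in[n-k]$ this set reveals the $k$ coded symbols $c_{n(i-1)+t_1},\dots,c_{n(i-1)+t_k}$ of $\mathds{C}_i$, sitting at $k$ distinct coordinates; since $\mathds{C}_i$ is an $(n,k)$-MDS code, these $k$ symbols determine its $k$ information symbols $s_{k(i-1)+1},\dots,s_{ki}$. Ranging over $i$ recovers all $k(n-k)=\mathcal{M}$ source symbols, and hence the whole file.

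For exact regeneration, suppose node $N(l,j)$ fails, with global index $t=(l-1)k+j$; it stored $\{c_{n(i-1)+t}\}_{i=1}^{n-k}$. Because $n=kL$, the failed node's cluster contains $n_I-1=k-1$ other (intra-cluster) helpers, and there are exactly $n-n_I=n-k$ cross-cluster helpers — the same count as the number of constituent MDS codes. I would fix any bijection $\pi$ from the set of $n-k$ cross-cluster helpers onto $[n-k]$, have every intra-cluster helper transmit its entire content (that is $\beta_I=n-k$ symbols), and have the cross-cluster helper $N_{t'}$ transmit the single symbol $c_{n(\pi(N_{t'})-1)+t'}$ (that is $\beta_c=1$ symbol); this already matches the prescribed repair bandwidths and gives $\gamma=(k-1)(n-k)+(n-k)=k(n-k)=\gamma_{\text{msr}}^{(\epsilon)}$. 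Then for each $i\in[n-k]$ the newcomer has received $k-1$ coded symbols of $\mathds{C}_i$ from the intra-cluster helpers together with one more from the unique cross-cluster helper mapped to $i$ by $\pi$; these $k$ symbols lie at $k$ distinct coordinates (the $k-1$ intra-cluster indices plus the cross-cluster index $t'$, all different from $t$), so the MDS property of $\mathds{C}_i$ reconstructs the whole codeword of $\mathds{C}_i$, in particular the lost symbol $c_{n(i-1)+t}$. Doing this for all $i$ exactly regenerates $N(l,j)$.

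The argument is mostly bookkeeping once the construction is unpacked; the one place that needs care is the counting coincidence that makes the repair tight, namely that there are precisely $n-k$ cross-cluster helpers and precisely $n-k$ constituent MDS codes, which is exactly what lets each $\mathds{C}_i$ receive one cross-cluster symbol on top of the $k-1$ intra-cluster ones and thereby reach the $k$ symbols needed to invoke the MDS property while spending only $\beta_c=1$ per cross-cluster node. I would close by noting that a single $(n,k)$ Reed--Solomon code may serve as all of $\mathds{C}_1,\dots,\mathds{C}_{n-k}$, which is the source of the field-size advantage over the construction of Theorem~\ref{Thm:MSR_large_epsilon} advertised in Remark~\ref{Rmk:MSR_FieldSizeCompare}.
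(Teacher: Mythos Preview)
Your proposal is correct and follows essentially the same approach as the paper's proof: verify the node-storage count via Remark~\ref{Rmk:MSRkL}, observe that contacting any $k$ nodes yields $k$ distinct coordinates of each constituent $(n,k)$-MDS code $\mathds{C}_i$ for data reconstruction, and for repair have the $n_I-1=k-1$ intra-cluster helpers send their full contents while the $n-n_I=n-k$ cross-cluster helpers each contribute one symbol, bijectively assigned to the $n-k$ codes $\mathds{C}_1,\dots,\mathds{C}_{n-k}$, so that every $\mathds{C}_i$ receives $k$ symbols at distinct coordinates and can be decoded. Your write-up is slightly more explicit (the bijection $\pi$, the verification that $\gamma=k(n-k)$), but the argument is identical in substance.
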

\begin{proof}
The proof is in Appendix \ref{Section:proof_of_MSR_nkL}.
\end{proof}

Fig. \ref{codekL} illustrates an example of the code suggested in Algorithm \ref{Algo:MSR_nkL}, when $n=6$, $k=2$, $L=3$, and $\epsilon=\frac{1}{n-k}=\frac{1}{4}$.
In order to store $\mathcal{M}=k(n-k)=8$ source symbols $\{u_i\}_{i=1}^8$, four $(6,2)$ MDS codes are used to generate $n(n-k)=24$ encoded symbols $\{c_i\}_{i=1}^{24}$. Afterwards, the coded symbols are allocated as described in Algorithm \ref{Algo:MSR_nkL}.
Now, we show that this code satisfies the properties in Theorem \ref{Thm:MSR_nkL}. First, each node contains $\alpha = n-k = 4$ coded symbols. Secondly, suppose a node fails; as an example illustrated in Fig. \ref{codekL2}, let $N(1,1)$ fail. Then, $n_I - 1 = 1$ node in the same cluster (with the failed node) transmits $\beta_I=4$ coded symbols, $c_2, c_8, c_{14}, c_{20}$. Moreover, $n-n_I = 4$ nodes in other clusters transmit $\beta_c =1$ symbol each, corresponding to $c_3, c_{10}, c_{17}, c_{24}$, respectively.
Using the received $\gamma = 8$ coded symbols, we can exactly regenerate the failed node as follows. Note that the source symbols $u_1, u_2$ can be recovered from received coded symbols $c_2, c_3$, by decoding a $(6,2)-$MDS code. In a similar way, we can recover all source symbols $\{u_i\}_{i=1}^8$. Thus, we can exactly regenerate $c_1, c_7, c_{13}, c_{19}$ contained in the failed node using the source symbols.
Finally, we confirm the data reconstruction property. By contacting arbitrary $k=2$ nodes, we obtain two red coded symbols which are generated from the first $(6,2)-$MDS code. Similarly, we obtain two yellow/green/blue coded symbols. Thus, we can successfully decode all four MDS codes, and obtain all source symbols $\{u_i\}_{i=1}^8$. 

\begin{figure}[!h]
	\centering
	\includegraphics[height=38mm]{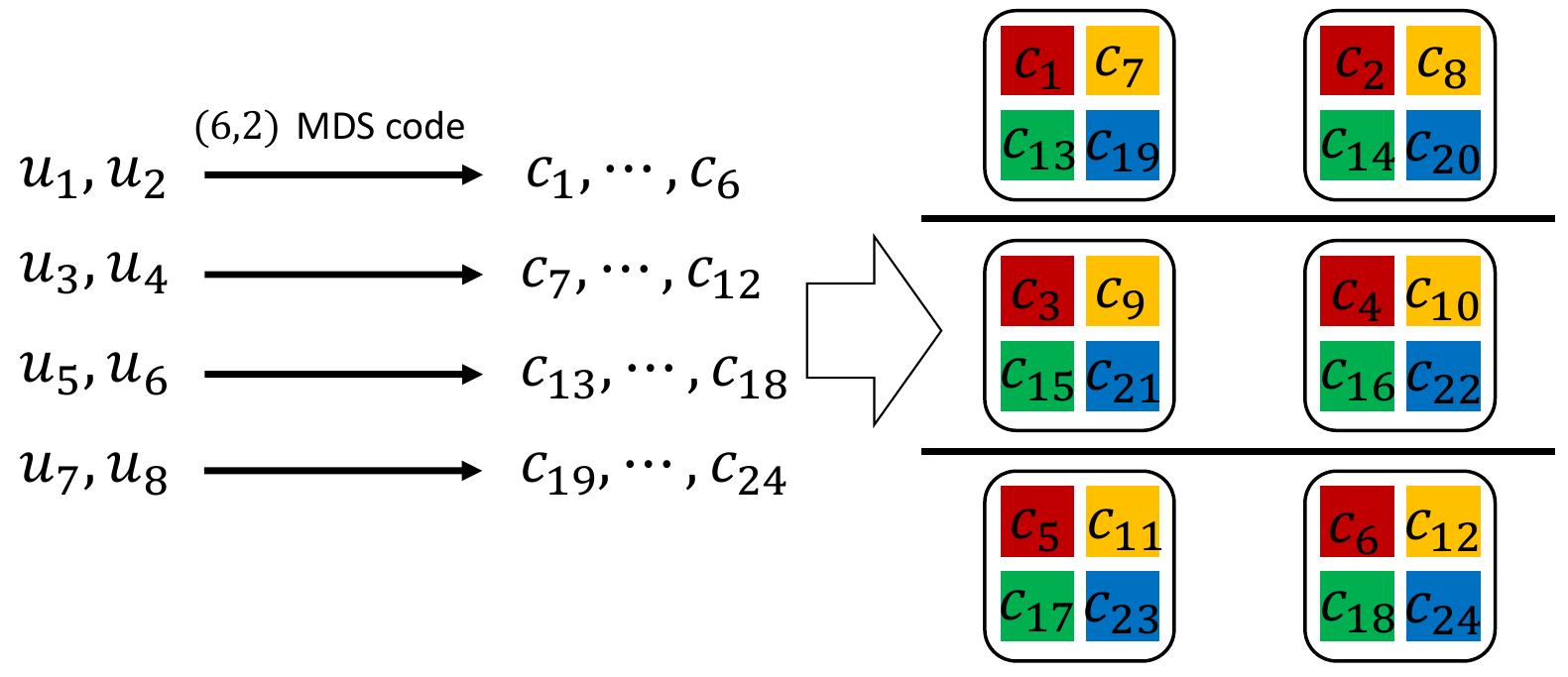}
	\caption{MSR code construction from Algorithm \ref{Algo:MSR_nkL}, when $n=6, k=2, L=3$ and $\epsilon=1/4$.}
	\label{codekL}
\end{figure}

\begin{figure}[!h]
	\centering
	\includegraphics[height=38mm]{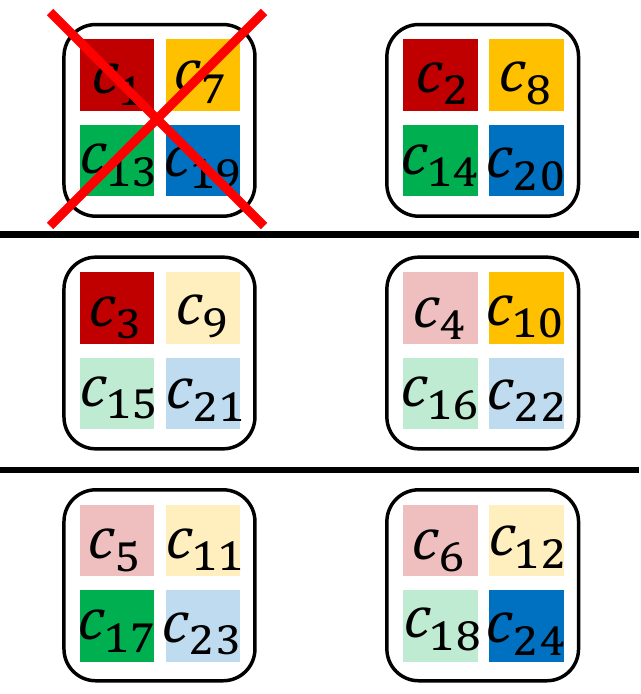}
	\caption{The node repair process of the code in Fig. \ref{codekL}.  When node $N(1,1)$ fails, the node in the same cluster transmits $\beta_I=4$ coded symbols, while nodes in other clusters send $\beta_c = 1$ coded symbol. }
	\label{codekL2}
\end{figure}

\begin{remark}\label{Rmk:MSR_FieldSizeCompare}
Here we compare two MSR codes when $\epsilon=1/(n-k)$ and $n=kL$ hold: one is a modified version of an existing code\footnote{According to Theorem \ref{Thm:MSR_large_epsilon} and Remark \ref{Rmk:NC_MSR_ex}, the MSR code in \cite{suh2011exact} can be applied to $[n,k,L]-$clustered DSS with $\frac{1}{n-k} \leq \epsilon \leq 1$, with a slight modification specified in Theorem \ref{Thm:MSR_large_epsilon}.} in \cite{suh2011exact} and the other is the code suggested in Algorithm \ref{Algo:MSR_nkL}. The required field size of these codes are compared as follows.
Recall that the code in \cite{suh2011exact} can be applied under the setting of $n \geq 2k$. Since we consider the case of $n=kL$, the constraint above is equivalent to $L\geq 2, n=kL$. 
Note that the code in \cite{suh2011exact} requires a field size of at most $2(n-k)$, while Algorithm \ref{Algo:MSR_nkL} needs a field size of at most $n$ since it is based on $(n,k)-$MDS codes. Thus, the code in Algorithm \ref{Algo:MSR_nkL} has a smaller field size when $L\geq3$, compared to the code in \cite{suh2011exact}.
\end{remark}


%

\section{Conclusion}
Focusing on clustered distributed storage systems which reflect the structures of real data centers and wireless storage networks, MBR and MSR coding schemes which achieve capacity have been presented.
The proposed coding schemes satisfy two key requirements: 1) exact regeneration of an arbitrary failed node using minimum system resources,
and 2) data reconstruction by contacting arbitrary $k$ nodes. 

In the first part of this paper, MBR codes for arbitrary parameter values of $n,k,L,\beta_I,\beta_c$ are suggested. The construction of the proposed MBR code is based on the repair-by-transfer scheme suggested in \cite{rashmi2009explicit}, while here we modified the scheme to reflect the clustered nature of storage nodes with limited available cross-cluster bandwidths. 
The proposed construction rule depends on the $\epsilon=\beta_c/\beta_I$ regime. When $\epsilon=0$, the coded symbols are shared within each cluster, so that any failed node can be exactly regenerated without using any cross-cluster repair bandwidths.
The MBR code proposed for the $\epsilon = 0$ case requires a much smaller field size, compared to existing local MBR codes in \cite{kamath2014TIT, kamath2013ISIT}. When the cross-cluster repair bandwidths are allowed, i.e., $0 < \epsilon \leq 1$, the coded symbols are partitioned into two subsets called \textit{local symbols} and \textit{global symbols}: the local symbols are shared among nodes in the same cluster, while the global symbols are shared either among the nodes in the same cluster or among the nodes in other clusters. 

In the second part of this paper, MSR codes for clustered distributed storage are suggested for two important cases: 1) the $\epsilon=0$ case which only uses intra-cluster communication for repairing a failed node, and 2) the scenario of $\frac{1}{n-k} \leq \epsilon \leq 1$ which allows achieving the minimum node storage overhead of $\alpha= \mathcal{M}/k $. When $\epsilon=0$, two existing LRCs \cite{papailiopoulos2014locally, tamo2016optimal} are shown to be the MSR codes for clustered distributed storage, under the settings of $n_I \divides k$ and $n_I \notdivides k$, respectively.
When $\frac{1}{n-k} \leq \epsilon \leq 1$, it is shown that utilizing existing MSR codes for non-clustered distributed storage
(with a slight modification in the repair process) achieves the MSR point of clustered distributed storage.
Finally, for the scenarios satisfying $\epsilon=\frac{1}{n-k}$ and $n=kL$, we propose a simple MSR code which requires a small field size than the existing code for $L\geq3$.






\appendices
\numberwithin{equation}{section}

\section{Proof of Theorem \ref{Thm:CodeZero}}\label{Proof:Thm:CodeZero}

The first statement is directly proved by using Lemma \ref{Prop:MBR for zero gammac}\textendash\ref{Prop:MBR for zero gammac_fourth}. Now we move on to the second statement for the exact regeneration of a failed node. Suppose that $N(l_0,j_0)$, the $j_0^{th}$ storage node in the $l_0^{th}$ cluster, is out of order. From Lemma \ref{Prop:MBR for zero gammac}\textendash\ref{Prop:MBR for zero gammac_first} and Lemma \ref{Prop:MBR for zero gammac}\textendash\ref{Prop:MBR for zero gammac_second}, any coded symbol in $N(l_0,j_0)$ is also stored in another node in the $l_0^{th}$ cluster. In other words, the set $\cup_{j\neq j_0} \{N(l_0,j)\}$ contains all symbols stored in $N(l_0,j_0)$. Thus, $N(l_0,j_0)$ can be exactly regenerated by contacting other nodes in the $l_0^{th}$ cluster only.
From Lemma \ref{Prop:MBR for zero gammac}\textendash\ref{Prop:MBR for zero gammac_third}, we can confirm that the node $N(l_0,j_0)$ can be exactly regenerated by downloading $\beta_I = 1$ symbol from each of the $n_I-1$ nodes in $\cup_{j\neq j_0} \{N(l_0,j)\}$. Moreover, the regeneration process does not incur cross-cluster repair traffic, i.e., $\beta_c = 0$.

We now prove the third statement on the data reconstruction property.
Consider a data collector (DC) which connects to arbitrary $k$ out of $n$ nodes to recover the original source symbol vector $\mathbf{s} = [s_1, s_2, \cdots, s_\mathcal{M}]^T$. Here, we prove that contacting arbitrary $k$ nodes can obtain $\mathcal{M}$ coded symbols $\{c_i\}$, which are sufficient for recovering the vector $\mathbf{s} $ of $\mathcal{M}$ source symbols according to the MDS property of the $[\theta, \mathcal{M}]$ code used in Algorithm \ref{Algo:MBR_code_zero}.
For given $k$ nodes contacted by DC, define the corresponding \textit{contact vector} $\boldsymbol{\omega} = [\omega_1, \cdots, \omega_L]$ where $\omega_l$ represents the number of contacted nodes in the $l^{th}$ cluster. Then, the set of possible contact vectors is expressed as 
\begin{equation}
\Omega= \left\{\boldsymbol{\omega} = [\omega_1, \cdots, \omega_L]: \sum_{l=1}^L \omega_l = k, \omega_l \in \{0, 1, \dots, n_I\} \right\}.
\end{equation}
Let $n(\boldsymbol{\omega})$ be the number of distinct coded symbols obtained by contacting $k$ nodes with the corresponding contact vector being $\boldsymbol{\omega}$.
Then, we establish the following bound on $n(\boldsymbol{\omega})$.
\begin{lemma}\label{Prop:Lower_bound on the number of retrievable coded symbols for zero gammac}	
	Consider an $[n,k,L]-$clustered DSS with the code in Algorithim \ref{Algo:MBR_code_zero} applied. 
	Let a DC contacts arbitrary $k$ nodes with the corresponding contact vector of $\boldsymbol{\omega}$. Then, the number of distinct coded symbols $\{c_i\}$ retrieved by DC is lower bounded by $\mathcal{M}$ in (\ref{Eqn:Capacity for gamma_c = 0}). In other words, 
	\begin{equation}\label{Eqn:Lower_bound on the number of retrievable coded symbols_gammac0 case}
	n(\boldsymbol{\omega}) \geq \mathcal{M}  \ \ \   \forall \boldsymbol{\omega}\in \Omega.
	\end{equation}
\end{lemma}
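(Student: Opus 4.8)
The plan is to count the distinct coded symbols directly from the combinatorial structure of the code, exploiting Lemma \ref{Prop:MBR for zero gammac}. Fix a contact vector $\boldsymbol{\omega} = [\omega_1, \dots, \omega_L]$ with $\sum_l \omega_l = k$. Because nodes in different clusters share no coded symbols (Lemma \ref{Prop:MBR for zero gammac}\textendash\ref{Prop:MBR for zero gammac_second}), the count decomposes cluster by cluster: $n(\boldsymbol{\omega}) = \sum_{l=1}^L f(\omega_l)$, where $f(\omega)$ is the number of distinct coded symbols seen when contacting $\omega$ nodes inside a single cluster. Within one cluster the code is exactly the Rashmi \textit{et al.} RBT code on $n_I$ nodes: each of the $\binom{n_I}{2}$ local symbols sits on a distinct pair of nodes, so $\omega$ nodes touch $\binom{n_I}{2} - \binom{n_I - \omega}{2}$ distinct symbols (the symbols touched are exactly those whose edge has at least one endpoint among the chosen $\omega$; the missed ones are the $\binom{n_I-\omega}{2}$ edges lying entirely in the unchosen set). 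Thus
\begin{equation}
n(\boldsymbol{\omega}) = \sum_{l=1}^L \left[ \binom{n_I}{2} - \binom{n_I - \omega_l}{2} \right].
\end{equation}

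Next I would connect this to $\mathcal{M} = \sum_{i=1}^k (n_I - h_i)$ from \eqref{Eqn:Capacity for gamma_c = 0}. The idea is that the capacity formula corresponds to the \emph{worst-case} contact vector, i.e. the one minimizing $n(\boldsymbol{\omega})$ over $\Omega$. Since $\binom{n_I - \omega}{2}$ is convex in $\omega$, the function $f(\omega) = \binom{n_I}{2} - \binom{n_I-\omega}{2}$ is concave, so $\sum_l f(\omega_l)$ subject to $\sum_l \omega_l = k$ is minimized by making the $\omega_l$ as ``spread out'' / unequal as the constraint $\omega_l \le n_I$ allows — concretely, by filling clusters to capacity $n_I$ one at a time: $q$ clusters with $\omega_l = n_I$, one cluster with $\omega_l = r$, and the rest with $\omega_l = 0$, where $q,r$ are as in \eqref{Eqn:quotient}\textendash\eqref{Eqn:remainder}. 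This is precisely the majorization argument the introduction advertises: the minimizer is the majorization-maximal feasible point, and concavity (Schur-concavity of the separable sum) gives the inequality. I would then verify by a short direct computation that plugging this extremal $\boldsymbol{\omega}$ into $\sum_l f(\omega_l)$ yields exactly $\sum_{i=1}^k (n_I - h_i)$: a full cluster ($\omega_l = n_I$) contributes $\binom{n_I}{2}$, matching the $h_i = 1, \dots, n_I$ terms summing to $\sum_{t=1}^{n_I}(n_I - t) = \binom{n_I}{2}$ when that cluster is the $m$-th filled one with $g_m = n_I$; and the partially filled cluster with $\omega_l = r$ contributes $\binom{n_I}{2} - \binom{n_I - r}{2} = \sum_{t=1}^{r}(n_I - t)$, matching the remaining $r$ terms. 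Summing over all $k$ indices reproduces $\mathcal{M}$ exactly, which simultaneously shows the bound is tight.

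The main obstacle is the majorization/convexity step — making rigorous that the capacity expression \eqref{Eqn:Capacity for gamma_c = 0}, written as a sum over $i \in [k]$ with the awkward $h_i$ indices, really equals $\min_{\boldsymbol{\omega} \in \Omega} \sum_l f(\omega_l)$. One must carefully match the double-indexed structure of \eqref{Eqn:Capacity of clustered DSS_rev}\textendash\eqref{Eqn:g_m} (over $i \in [n_I]$, $j \in [g_i]$) against the per-cluster edge count, and confirm that the $g_m$ pattern in \eqref{Eqn:g_m} is exactly the multiset of cluster-occupancies $\{n_I, \dots, n_I, r, 0, \dots\}$ read off in the right order. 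Everything else is bookkeeping with binomial coefficients, but this identification — together with citing the Schur-concavity of a separable concave sum under the simplex-with-box constraint to rule out all other contact vectors — is where the real content lies.
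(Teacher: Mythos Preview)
Your approach is essentially the paper's: both count $n(\boldsymbol{\omega})$ cluster-by-cluster (your expression $\binom{n_I}{2}-\binom{n_I-\omega_l}{2}$ is algebraically identical to the paper's $\omega_l\alpha-\binom{\omega_l}{2}$ with $\alpha=n_I-1$), identify the extremal contact vector $\boldsymbol{\omega}^*=(n_I,\dots,n_I,r,0,\dots,0)$ via a majorization/Schur argument, and then verify by direct computation that $n(\boldsymbol{\omega}^*)=\mathcal{M}$. One small slip: in your verification paragraph you write ``$g_m=n_I$,'' but the $g_m$ of \eqref{Eqn:g_m} are $q$ or $q{+}1$, not $n_I$; the $(g_m)$ sequence is the \emph{conjugate} partition of $(\omega_l^*)$, and the paper sidesteps this by simply reducing both $n(\boldsymbol{\omega}^*)$ and $\mathcal{M}$ to the common closed form $kn_I-\tfrac{1}{2}(qn_I^2+r^2+k)$.
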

\begin{proof}
	See Appendix \ref{Proof:Lower bound on the number of retrievable coded symbols_gammac0 case}.
\end{proof}
Therefore, for an arbitrary contact of $k$ nodes, the suggested coding scheme guarantees at least $\mathcal{M}$ distinct coded symbols to be retrieved. Using the MDS property of the $[\theta, \mathcal{M}]$ code in Algorithm \ref{Algo:MBR_code_zero}, we can confirm that the original source symbol $\mathbf{f}$ can be obtained from the retrieved $\mathcal{M}$ distinct coded symbols. This completes the proof for the data reconstruction property.

\section{Proof of Theorem \ref{Thm:CodeNonzero}}\label{Proof:Thm:CodeNonzero}

The first statement is obtained directly from Lemma \ref{Prop:MBR}\textendash\ref{Prop:MBR_fourth} and the definition of $\chi= 1/\epsilon$. The second statement which claims the exact regeneration property is proved as follows.
Consider $N(l_0,j_0)$, the $j_0^{th}$ storage node in the $l_0^{th}$ cluster, is broken. 
From Lemma \ref{Prop:MBR}\textendash\ref{Prop:MBR_first} and Lemma \ref{Prop:MBR}\textendash\ref{Prop:MBR_third}, each survived node in the $l_0^{th}$ cluster contains $\chi$ distinct coded symbols which are stored in  $N(l_0,j_0)$. Therefore, the set $\cup_{j\neq j_0} \{N(l_0,j)\}$ contains $(n_I-1)\chi$ symbols stored in $N(l_0,j_0)$.
Similarly, from Lemma \ref{Prop:MBR}\textendash\ref{Prop:MBR_first} and Lemma \ref{Prop:MBR}\textendash\ref{Prop:MBR_second}, each survived node \textit{not} in the $l_0^{th}$ cluster contains one distinct coded symbol which is stored in $N(l_0,j_0)$. Therefore, the set $\cup_{l\neq l_0} \{N(l,j)\}$ contains $(n-n_I)$ symbols stored in $N(l_0,j_0)$.
In summary, $\alpha = (n_I-1)\chi + (n-n_I)$ coded symbols stored in the failed node $N(l_0,j_0)$ can be recovered by contacting $n_I-1$ nodes in the $l_0^{th}$ cluster and $n-n_I$ nodes in other ($l \neq l_0$) clusters; $n_I-1$ nodes within the same cluster transmit $\chi$ coded symbols each, while $n-n_I$ nodes in other clusters contribute one coded symbol each. Therefore, this process satisfies $\beta_I = \chi, \beta_c = 1$, as described in Section \ref{Section:parameter for MBR}. Moreover, the total repair bandwidth is expressed as $\gamma = (n_I-1)\chi+(n-n_I) = (n_I-1)/\epsilon + (n-n_I) = \gamma_{\text{mbr}}^{(\epsilon)}$.

Finally, the third statement for data reconstruction is proved as follows. We here use the notations $n(\boldsymbol{\omega})$ and $\Omega$, which are defined in Appendix \ref{Proof:Thm:CodeZero}.
Similar to Lemma \ref{Prop:Lower_bound on the number of retrievable coded symbols for zero gammac}, we have the following Lemma for the case of $0 < \epsilon \leq 1$. This Lemma below completes the proof for the data reconstruction property, in a similar way that Lemma \ref{Prop:Lower_bound on the number of retrievable coded symbols for zero gammac} completes the proof of Theorem \ref{Thm:CodeZero} for $\epsilon=0$.

\begin{lemma}\label{Prop:Lower_bound on the number of retrievable coded symbols}	
	Consider an $[n,k,L]-$clustered DSS with the code in Algorithm \ref{Algo:CodeNonzero} applied. Let a DC contacts arbitrary $k$ nodes with $\boldsymbol{\omega}$ being the corresponding contact vector. Then, the number of distinct coded symbols $\{c_i\}$ retrieved by DC is lower bounded by $\mathcal{M}$ in (\ref{Eqn:Capacity for MBR}). In other words, 
	\begin{equation}\label{Eqn:Lower_bound on the number of retrievable coded symbols}
	n(\boldsymbol{\omega}) \geq \mathcal{M}  \ \ \   \forall \boldsymbol{\omega}\in \Omega.
	\end{equation}
\end{lemma}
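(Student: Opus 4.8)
The plan is to evaluate $n(\boldsymbol{\omega})$ in closed form, to minimize it over $\Omega$ by a majorization argument, and to verify that the minimum equals $\mathcal{M}$. By Lemma~\ref{Prop:MBR}, the $\theta$ coded symbols split into $\binom{n}{2}$ \emph{global} symbols, each stored in the two nodes at the endpoints of an edge of the complete graph on all $n$ nodes, and, for each cluster $l$ and each layer $t\in[\chi-1]$, a family of $\binom{n_I}{2}$ \emph{local} symbols stored according to the edges of the complete graph on the $n_I$ nodes of cluster $l$; a symbol is retrieved precisely when at least one of its two host nodes is contacted. Counting, for each such family, the edges incident to the contacted set of $k$ nodes, and using that the global and local families are disjoint, I would obtain
\[ n(\boldsymbol{\omega}) = \binom{n}{2}-\binom{n-k}{2} + (\chi-1)\sum_{l=1}^{L}\left[\binom{n_I}{2}-\binom{n_I-\omega_l}{2}\right], \]
where the global contribution $\binom{n}{2}-\binom{n-k}{2}$ depends on $k$ only and not on how the $k$ contacts are distributed across the clusters.

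Since $\chi-1\ge 0$, minimizing $n(\boldsymbol{\omega})$ over $\Omega$ is equivalent to maximizing $\sum_{l=1}^{L}\binom{n_I-\omega_l}{2}$. The function $x\mapsto\binom{n_I-x}{2}$ is convex, so $\boldsymbol{\omega}\mapsto\sum_l\binom{n_I-\omega_l}{2}$ is a separable symmetric sum of a convex function and hence Schur-convex, i.e.\ non-decreasing with respect to the majorization order on $\Omega$. Therefore the maximum is attained at the feasible vector that majorizes every element of $\Omega$: the vector $\boldsymbol{\omega}^{\ast}$ having $q=\lfloor k/n_I\rfloor$ coordinates equal to $n_I$, one coordinate equal to $r=k-qn_I$, and the remaining coordinates equal to $0$ (when $r=0$, simply $q$ coordinates equal to $n_I$ and the rest $0$). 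This gives $n(\boldsymbol{\omega})\ge n(\boldsymbol{\omega}^{\ast})$ for all $\boldsymbol{\omega}\in\Omega$, which matches the intuition of the worked example: packing the $k$ contacted nodes into as few clusters as possible retrieves the fewest distinct symbols.

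At $\boldsymbol{\omega}^{\ast}$ the inner sum evaluates directly: the $q$ fully-contacted clusters each contribute $\binom{n_I}{2}$, the $r$-cluster contributes $\binom{n_I}{2}-\binom{n_I-r}{2}$, and the untouched clusters contribute $0$. Substituting this together with $\binom{n-k}{2}$ and $k=qn_I+r$ reduces $n(\boldsymbol{\omega}^{\ast})$ to a polynomial in $n,k,n_I,\chi,q,r$, and it then remains to check the algebraic identity that this equals $\mathcal{M}$ in~\eqref{Eqn:Capacity for MBR}, namely $k\alpha-\tfrac12(\chi-1)(qn_I^2+r^2-k)-\tfrac{k(k-1)}{2}$ with $\alpha=(n_I-1)\chi+(n-n_I)$, via Proposition~\ref{Prop:MBR_for_nonzero_beta_c}. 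This is a routine expansion and, as a by-product, shows that the capacity bound of~\cite{sohn2018capacity} is met with equality here. Combining the three steps yields $n(\boldsymbol{\omega})\ge\mathcal{M}$ for every $\boldsymbol{\omega}\in\Omega$.

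The main obstacle is the minimization step: one must confirm that the integer vector $\boldsymbol{\omega}^{\ast}$ really is the majorization-maximal point of the box-polytope $\Omega$ (taking care of integrality and the $r=0$ edge case) and that Schur-convexity is being applied in the correct direction — this is precisely where the majorization machinery mentioned in the introduction is needed. The concluding binomial identity is mechanical but must be carried out carefully to avoid sign errors in the bookkeeping.
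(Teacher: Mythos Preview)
Your proposal is correct and follows essentially the same approach as the paper: compute $n(\boldsymbol{\omega})$ in closed form, apply Schur-convexity to identify the majorization-maximal $\boldsymbol{\omega}^{\ast}$ as the minimizer, and verify $n(\boldsymbol{\omega}^{\ast})=\mathcal{M}$. The only cosmetic difference is that you count $n(\boldsymbol{\omega})$ as edges incident to the contacted vertex set, obtaining $\binom{n}{2}-\binom{n-k}{2}+(\chi-1)\sum_{l}\bigl[\binom{n_I}{2}-\binom{n_I-\omega_l}{2}\bigr]$, whereas the paper writes it via inclusion--exclusion as $k\alpha-\binom{k}{2}-(\chi-1)\sum_{l}\binom{\omega_l}{2}$; these two expressions are identically equal, so the remaining majorization and algebraic steps coincide.
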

\begin{proof}
	See Appendix \ref{Proof:Lower bound on the number of retrievable coded symbols}.
\end{proof}

\section{Proof of Theorem \ref{Thm:LRC1_achieves_MSR}}\label{Section:proof_of_LRC1_achieves_MSR}

We focus on code $\mathds{C}$, the explicit ($n,r,d,\mathcal{M},\alpha$)-LRC constructed in Section V of \cite{papailiopoulos2014locally}. This code has the parameters
\begin{equation}\label{Eqn:LRC1_param}
(n,r,d=n-k+1,\mathcal{M},\alpha=\frac{r+1}{r}\frac{\mathcal{M}}{k}),
\end{equation}
where $r$ is the repair locality and $d$ is the minimum distance, and other parameters ($n,\mathcal{M},\alpha$) have physical meanings identical to those in the present paper.
By setting $r=n_I-1$, the code has a node capacity of
\begin{equation}\label{Eqn:alpha_MSR_small_epsilon}
\alpha = \frac{n_I}{n_I-1}\frac{\mathcal{M}}{k} = \frac{\mathcal{M}}{k(1-1/n_I)}=\frac{\mathcal{M}}{k-q}
\end{equation}
where the last equality holds from the $n_I \divides k$ condition and the definition of $q$ in (\ref{Eqn:quotient}).

We first prove that any node failure can be exactly regenerated by using the system parameters in \eqref{Eqn:parameters_for_small_epsilon}.
According to the description in Section V-B of \cite{papailiopoulos2014locally}, \textit{any} node is contained in a unique corresponding repair group of size $r+1=n_I$, so that a failed node can be exactly repaired by contacting $r=n_I-1$ other nodes in the same repair group. This implies that a failed node does not need to contact other repair groups in the exact regeneration process. 
By setting each repair group as a cluster (note that each cluster contains $n_I=n/L$ nodes), we can achieve 
\begin{equation}\label{Eqn:betac_MSR_small_epsilon}
\beta_c = 0.
\end{equation}
Moreover, Section V-B of \cite{papailiopoulos2014locally} illustrates that the exact regeneration of a failed node is possible by contacting the \textit{entire} symbols contained in $r=n_I-1$ nodes in the same repair group, and applying the XOR operation. This implies $\beta_I = \alpha$, which result in
\begin{equation}\label{Eqn:betai_MSR_small_epsilon}
\gamma = (n_I-1)\beta_I = (n_I-1) \frac{\mathcal{M}}{k-q},
\end{equation}
combined with \eqref{Eqn:gamma} and \eqref{Eqn:alpha_MSR_small_epsilon}.
From (\ref{Eqn:alpha_MSR_small_epsilon}) and (\ref{Eqn:betai_MSR_small_epsilon}), we can conclude that code $\mathds{C}$ satisfies the exact regeneration of any failed node using the parameters in \eqref{Eqn:parameters_for_small_epsilon}.

Now we prove that contacting any $k$ nodes suffices to recover original data in the clustered DSS with code $\mathds{C}$ applied. Note that the minimum distance is $d=n-k+1$ from (\ref{Eqn:LRC1_param}). Thus, the information from $k$ nodes suffices to pick the correct codeword. This completes the proof of Theorem \ref{Thm:LRC1_achieves_MSR}.

\section{Proof of Theorem \ref{Thm:LRC2_achieves_MSR}}\label{Section:proof_of_LRC2_achieves_MSR}

We first prove that code $\mathds{C}$ has a minimum distance of $d=n-k+1$, which implies that the original file of size $\mathcal{M}=k-q$ can be recovered by contacting arbitrary $k$ nodes. Second, we prove that any failed node can be exactly regenerated under the setting of (\ref{Eqn:parameters_for_small_epsilon}).
Recall that the $[n_0,k_0,r_0]-$LRC constructed in \cite{tamo2016optimal} has the following property:

\begin{lemma}[Theorem 1 of \cite{tamo2016optimal}]\label{Lemma:Result_of_Tamo}
	The code constructed in \cite{tamo2016optimal} has locality $r_0$ and optimal minimum distance $d=n_0-k_0-\ceil{\frac{k_0}{r}}+2$, when $(r_0+1)\divides n_0$.
\end{lemma}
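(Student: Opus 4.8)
Since this statement is verbatim Theorem~1 of \cite{tamo2016optimal}, the plan is to reproduce the Tamo--Barg argument. First I would recall the construction. Choose a field $\mathbb{F}_q$ with $q>n_0$ and $(r_0+1)\mid(q-1)$, let $H\le\mathbb{F}_q^{*}$ be the cyclic subgroup of order $r_0+1$, and take as evaluation set $\mathcal{A}$ the union of $n_0/(r_0+1)$ cosets of $H$; this union has exactly $n_0$ elements precisely because $(r_0+1)\mid n_0$. The key object is the \emph{good polynomial} $g(x)=x^{\,r_0+1}$, which is constant on each coset of $H$ (if $A=\gamma H$ then $a^{r_0+1}=\gamma^{r_0+1}$ for every $a\in A$). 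The codewords are the evaluation vectors $\bigl(f(a)\bigr)_{a\in\mathcal{A}}$ of the polynomials $f(x)=\sum_{(i,j)\in\mathcal{I}}a_{ij}\,x^{i}g(x)^{j}$, where the index set $\mathcal{I}$ has size $k_0$, with $0\le i\le r_0-1$ throughout: all $r_0$ values of $i$ are used for the first $\ceil{k_0/r_0}-1$ values of $j$, and only the smallest $k_0-r_0(\ceil{k_0/r_0}-1)$ values of $i$ are used for the last value of $j$.

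Next I would establish the three facts that make up the lemma. \emph{Dimension $k_0$}: the monomials $x^{i}g(x)^{j}$ for $(i,j)\in\mathcal{I}$ have pairwise distinct degrees $i+j(r_0+1)$, hence are linearly independent, and the evaluation map is injective because every such $f$ has degree at most $D:=k_0+\ceil{k_0/r_0}-2<n_0$ and therefore cannot vanish on all $n_0$ points of $\mathcal{A}$; so the code has dimension exactly $k_0$. \emph{Locality $r_0$}: restricted to a single coset $A$, where $g\equiv c_A$ is constant, $f$ agrees on $A$ with the univariate polynomial $\sum_{i}\bigl(\sum_{j}a_{ij}c_A^{\,j}\bigr)x^{i}$ of degree at most $r_0-1$; since $|A|=r_0+1$, any $r_0$ of the coordinates indexed by $A$ determine this polynomial by interpolation and hence recover the remaining one, so every coordinate has locality $r_0$. \emph{Optimal distance}: a nonzero codeword comes from an $f$ of degree at most $D$, which has at most $D$ zeros in $\mathcal{A}$, so its weight is at least $n_0-D=n_0-k_0-\ceil{k_0/r_0}+2$, giving $d\ge n_0-k_0-\ceil{k_0/r_0}+2$. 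The matching upper bound is the Singleton-type bound for codes with all-symbol locality $r_0$ from \cite{gopalan2012locality}, namely $d\le n_0-k_0-\ceil{k_0/r_0}+2$; combining the two inequalities yields the claimed equality, which is what ``optimal minimum distance'' means.

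The routine but error-prone part — the step I expect to demand the most care — is the degree bookkeeping behind the non-rectangular index set $\mathcal{I}$ in the regime $r_0\nmid k_0$: one must verify simultaneously that $|\mathcal{I}|=k_0$, that the monomial degrees stay pairwise distinct (so the dimension is exactly $k_0$, not smaller), and that the maximum degree over $\mathcal{I}$ is exactly $D=k_0+\ceil{k_0/r_0}-2$ and not larger, since the entire distance computation rests on this value. The auxiliary ingredient, existence of a field of the required size with $(r_0+1)\mid(q-1)$, is immediate (take $q$ a sufficiently large prime congruent to $1$ modulo $r_0+1$), so no real obstacle arises there.
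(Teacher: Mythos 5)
The paper does not prove this lemma at all: it is imported verbatim as Theorem~1 of \cite{tamo2016optimal}, so there is no internal proof to compare against, and your task was effectively to reconstruct the Tamo--Barg argument. Your reconstruction is correct and is the standard one — the good polynomial $g(x)=x^{r_0+1}$ constant on cosets of the order-$(r_0+1)$ subgroup, the non-rectangular monomial set of size $k_0$ with maximum degree exactly $k_0+\ceil{k_0/r_0}-2$ (your bookkeeping for $r_0 \nmid k_0$ checks out), interpolation within a coset for locality $r_0$, the degree/zero-count bound for $d\ge n_0-k_0-\ceil{k_0/r_0}+2$, and the Singleton-type bound of \cite{gopalan2012locality} for the matching upper bound; the only implicit assumption worth flagging is $k_0+\ceil{k_0/r_0}-2<n_0$, which is equivalent to the claimed distance being at least $1$ and hence part of the admissible parameter regime.
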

Note that we consider code $\mathds{C}$ of optimal $[n_0,k_0,r_0]=[n,k-q,n_I-1]-$LRC. Since $r_0+1=n_I$ divides $n_0=n$, Lemma \ref{Lemma:Result_of_Tamo} can be applied. 
The result of Lemma \ref{Lemma:Result_of_Tamo} implies that the minimum distance of $\mathds{C}$ is
\begin{align}\label{Eqn:dmin_tamo}
d &= n-(k-q)-\left\lceil\dfrac{k-q}{n_I-1}\right\rceil+2.
\end{align}
Since we consider the $n_I \notdivides k$ case, we have
\begin{equation}\label{Eqn:k_case2}
k = qn_I + m, \quad \quad (0 < m \leq n_I-1)
\end{equation}
from (\ref{Eqn:remainder}). 
Inserting (\ref{Eqn:k_case2}) into (\ref{Eqn:dmin_tamo}), we have
\begin{align}\label{dmin_case2}
d &= n-(k-q)-\left\lceil\dfrac{(n_I-1)q+m}{n_I-1}\right\rceil+2 \nonumber\\
&= n-(k-q)-(q+1)+2 = n-k+1,
\end{align}
where the second last equality holds since $0 < m \leq n_I-1$ from (\ref{Eqn:k_case2}). Thus, this proves that contacting arbitrary $k$ nodes suffices to recover the original source file.

\begin{figure}[!t]
	\centering
	\includegraphics[width=90mm]{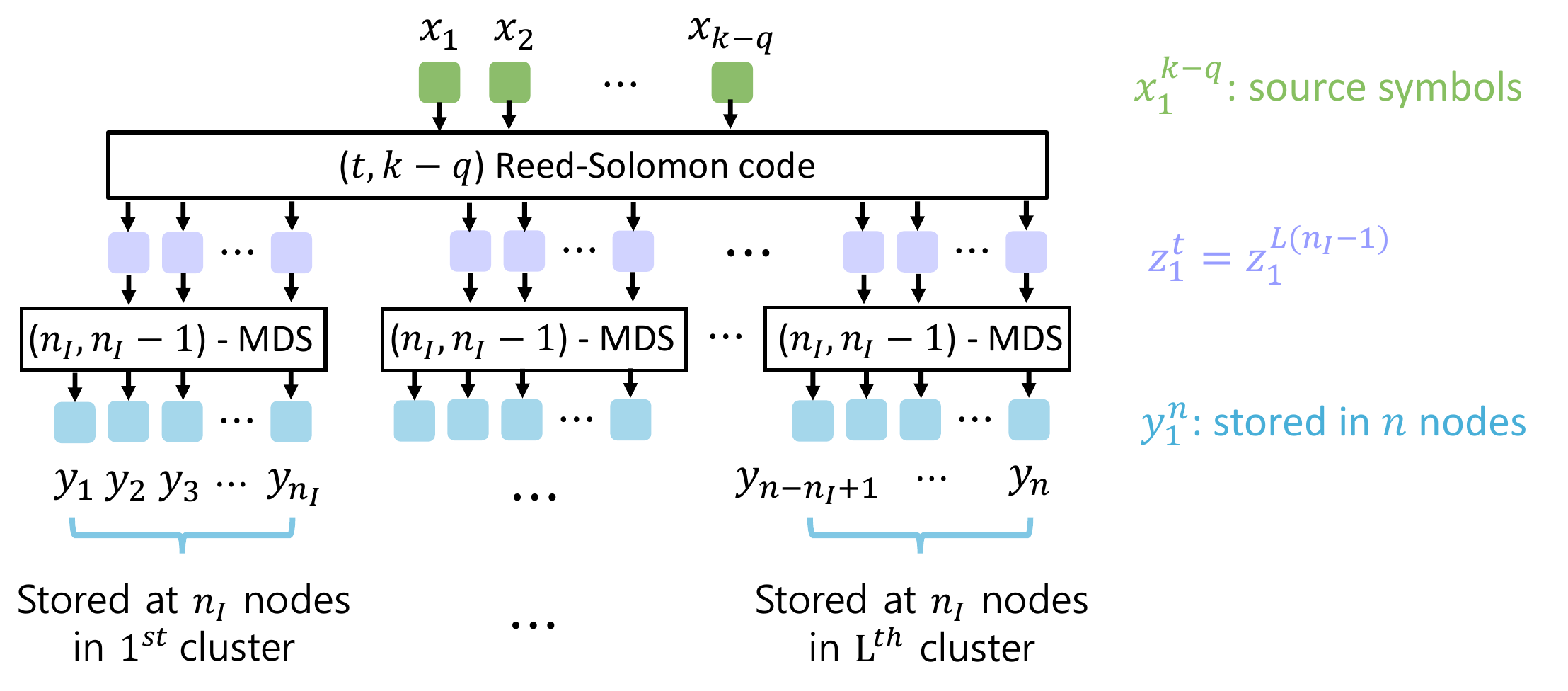}
	\caption{Code construction for $\epsilon=0, n_I \notdivides k$ case}
	\label{Fig:Epsilon_nondivisible}
\end{figure}

Now, all we need to prove is that any failed node can be exactly regenerated under the setting of system parameters specified in Proposition \ref{Prop:parameter_for_small_epsilon}.
According to the rule illustrated in \cite{tamo2016optimal}, the construction of code $\mathds{C}$ can be shown as in Fig. \ref{Fig:Epsilon_nondivisible}.
First, we have $\mathcal{M}=k-q$ source symbols $\{x_i\}_{i=1}^{k-q}$ to store reliably. By applying a $(T,k-q)$ Reed-Solomon code to the source symbols, we obtain $\{z_i\}_{i=1}^{t}$ where $T \coloneqq L(n_I-1)$. Then, we partition $\{z_i\}_{i=1}^T$ symbols into $L$ groups, where each group contains $(n_I-1)$ symbols. Next, each group of $\{z_i\}$ symbols is encoded by an $(n_I, n_I-1)-$MDS code, which result in a group of $n_I$ symbols of $\{y_i\}$. Finally, we store symbol $y_{n_I(l-1)+j}$ in node $N(l,j)$. By this allocation rule, $y_i$ symbols in the same group are located in the same cluster.

Assume that $N(l,j)$, the $j^\text{th}$ node in the $l^\text{th}$ cluster,  containing $y_{n_I(l-1)+j}$ symbol fails
for $l \in [L]$ and $j \in [n_I]$.
From Fig. \ref{Fig:Epsilon_nondivisible}, we know that $(n_I-1)$ symbols of $\{y_{n_I(l-1)+s}\}_{s=1, s\neq j}^{n_I}$ stored in the $l^{th}$ cluster can decode the $(n_I, n_I-1)-$MDS code for group $l$. Thus, the contents of $y_{n_I(l-1)+j}$ can be recovered by retrieving symbols from nodes in the the $l^{th}$ cluster (i.e., the same cluster where the failed node is in). This proves the ability of exactly regenerating an arbitrary failed node. 
The regeneration process satisfies
\begin{equation}\label{Eqn:Epsilon0_Regeneration}
\beta_c = 0, \beta_I = \alpha.
\end{equation}
Moreover, note that the code in Fig. \ref{Fig:Epsilon_nondivisible} has
\begin{equation}\label{Eqn:Epsilon0_capacity}
\mathcal{M} = (k-q)\alpha
\end{equation}
source symbols. 
Since parameters obtained in (\ref{Eqn:Epsilon0_Regeneration}) and (\ref{Eqn:Epsilon0_capacity}) are consistent with Proposition \ref{Prop:parameter_for_small_epsilon}, we can confirm that code $\mathds{C}$ achieves a valid MSR point under the conditions $\epsilon = 0$ and $n_I \notdivides k$.

\section{Proof of Theorem \ref{Thm:MSR_large_epsilon}}\label{Section:proof_of_MSR_large_epsilon}

Here we prove that $\mathds{C}$, an MSR code for $[n,k,d=n-1]$ non-clustered DSS also achieves the MSR point of $[n,k,L]-$clustered DSS with $\epsilon \in [\frac{1}{n-k}, 1]$.
Thus, all we need to check is whether or not code $\mathds{C}$ satisfies Condition \ref{Condition:code_general}.

First, according to Proposition \ref{Prop:parameter_for_large_epsilon}, we have $\alpha_{\text{msr}}^{(\epsilon)} = n-k$ for $\epsilon \in [\frac{1}{n-k}, 1]$. Moreover, code $\mathds{C}$ has $\alpha = \mathcal{M}/k = k(d-k+1)/k = n-k$ as in \cite{suh2011exact}.
Thus, code $\mathds{C}$ satisfies the node storage capacity condition for the MSR point with $\epsilon \in [\frac{1}{n-k}, 1]$.
Second, we check the data reconstruction condition. According to Proposition \ref{Prop:parameter_for_large_epsilon}, we can extract $\mathcal{M}=k(n-k)$ symbols irrespective of $\epsilon \in [\frac{1}{n-k}, 1]$, under the setting of contacting arbitrary $k$ nodes. Thus, code $\mathds{C}$ satisfies the data reconstruction condition. 
Third, we check the exact regeneration condition. Note that when we apply code $\mathds{C}$ to $n$ storage nodes, it is guaranteed that any failed node can be exactly regenerated by contacting $n-1$ helper nodes, while each helper node transmits $\beta = 1$ symbol. Note that the case of $\epsilon \in [\frac{1}{n-k}, 1]$ has a more relaxed exact regeneration condition: the intra-cluster repair bandwidth may increase up to $\beta_I = 1/\epsilon \geq 1$. Thus, sending redundant information in the intra-cluster link (with the redundancy of $1/\epsilon$) is sufficient to achieve the exact regeneration condition.

\section{Proof of Theorem \ref{Thm:MSR_nkL}}\label{Section:proof_of_MSR_nkL}

The first condition on node storage capacity $\alpha=n-k$ is directly confirmed from Remark \ref{Rmk:MSRkL}.
Next, regarding the exact regeneration condition, consider the following repair process when node $N(l,j)$ fails.
\begin{itemize}
	\item \textbf{Intra-cluster transmission:} $(n_I-1)$ survived nodes in the $l^{th}$ cluster (i.e., the cluster which contains the failed node)  transmit $\beta_I = n-k=\alpha$ symbols to the failed node. In other words, they send all the symbols they have to $N(l,j)$.
	\item \textbf{Cross-cluster transmission:} Let $N_1', \cdots N_{n-k}'$ be $n-n_I = n-k$ nodes in other clusters. Recall that for arbitrary $t\in [n-k]$, each node contains a coded symbol generated from $(n,k)-$MDS code $\mathds{C}_t$, according to Remark \ref{Rmk:MSRkL}.
	Set node $N_t'$ to transmit coded symbol generated from code $\mathds{C}_t$ for $t \in [n-k]$. 
\end{itemize}
For each $t\in [n-k]$, the intra-cluster transmission provides 
$n_I - 1 = k -1$ coded symbols generated from $\mathds{C}_t$.
Moreover, the cross-cluster transmission gives 
one coded symbol generated from $\mathds{C}_t$, for each $t \in [n-k]$.
Thus, total $k$ coded symbols (generated from $\mathds{C}_t$) are retrievable by using intra and cross-cluster communications. This can decode $(n-k)$ MDS codes $\{\mathds{C}_t\}_{t=1}^{n-k}$ and exactly regenerate the symbols stored in the failed node $N(l,j)$.


Finally, we show the data reconstruction property. From Remark \ref{Rmk:MSRkL}, we can retrieve 
$k$ coded symbols generated from $\mathds{C}_t$ for each $t \in [n-k]$, by contacting arbitrary $k$ nodes.
Thus, all source symbols $\{s_i\}_{i=1}^{k(n-k)}$ can be reconstructed by decoding $n-k$ MDS codes $\{\mathds{C}_t\}_{t=1}^{n-k}$, respectively.

\section{Proof of Lemma \ref{Prop:MBR for zero gammac}}\label{Proof:Properties of suggested MBR codes gammac=0 case}

We first review four properties of incidence matrix $V_t$.

\begin{prop}\label{Prop:Incidence Matrix}
	The incidence matrix $V_t$ of a fully connected graph $G_t$ with $t$ vertices has the following four properties as summarized in \cite{rashmi2009explicit}:
	\begin{enumerate}[label=(\alph*)]
		\item Each element is either 0 or 1. \label{Prop:Incidence Matrix first}
		\item Each row has exactly ($t - 1$) 1's. \label{Prop:Incidence Matrix second}
		\item Each column has exactly two 1's. \label{Prop:Incidence Matrix third}
		\item Any two rows have exactly one section of 1's. \label{Prop:Incidence Matrix fourth}
	\end{enumerate}
\end{prop}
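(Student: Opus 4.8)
The plan is to verify each of the four properties directly from the combinatorial definition of $V_t$ as the incidence matrix of the complete graph $G_t$, under the identification of row $j$ with vertex $j$ and column $i$ with the $i^{th}$ edge of $G_t$. Property \ref{Prop:Incidence Matrix first} is immediate from the defining formula for $V_t(j,i)$, which only ever returns $1$ or $0$.

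For property \ref{Prop:Incidence Matrix second}, I would note that the number of $1$'s in row $j$ equals the number of edges incident to vertex $j$, i.e.\ the degree of $j$ in $G_t$; since $G_t$ is the complete graph on $t$ vertices, every vertex has degree $t-1$, so each row contains exactly $t-1$ ones. For property \ref{Prop:Incidence Matrix third}, I would use the dual observation that the number of $1$'s in column $i$ equals the number of endpoints of edge $i$, which is $2$; hence every column contains exactly two ones. (As a consistency check, counting ones in two ways gives $t(t-1) = 2\binom{t}{2}$.) For property \ref{Prop:Incidence Matrix fourth}, I would fix two distinct rows $j_1 \neq j_2$ and observe that a column $i$ carries a $1$ in both rows exactly when edge $i$ joins vertices $j_1$ and $j_2$; in a complete graph there is precisely one such edge, so the two rows agree with value $1$ in exactly one column, which is the claimed single section of $1$'s.

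There is no real obstacle here: the statement is a collection of elementary facts about the complete graph and its incidence matrix, already recorded in \cite{rashmi2009explicit}. The only point requiring a modicum of care is that the explicit matrix $V_t$ written out in Fig.~\ref{Fig:incidence_matrix} depends on a chosen ordering of the $\binom{t}{2}$ edges, but all four claimed properties are invariant under permutations of the columns, so this choice is immaterial to the proof.
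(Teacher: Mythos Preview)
Your proposal is correct. The paper does not actually supply a proof of this proposition: it simply records the four properties as facts ``summarized in \cite{rashmi2009explicit}'' and then uses them in the subsequent lemmas. Your direct combinatorial verification from the definition of the incidence matrix is exactly the standard argument and fills in what the paper leaves implicit.
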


Recall that for a given codeword $\mathbf{c} = [c_1, \cdots, c_\theta]$ with $\theta = {n_I \choose 2}L$ coded symbols, node $N(l,j)$ stores the symbol $c_{(l-1){n_I \choose 2}+i}$ if and only if $V_{n_I}(j,i) = 1$. 
Note that any natural number $s \in [\theta]$ can be uniquely expressed as an $(l_0,i_0)$ pair
where
\begin{align}
s &= (l_0-1) {n_I \choose 2} + i_0, \label{Eqn:unique expression}\\
l_0 &\in \{1,2,\cdots, L\}, \nonumber\\
i_0 &\in \{1,2,\cdots, {n_I \choose 2}\} \nonumber
\end{align}
holds.
Therefore, a coded symbol $c_s = c_{(l_0-1){n_I \choose 2}+i_0}$ is stored at node $N(l_0,j)$ if and only if $V_{n_I}(j,i_0) = 1$. From Proposition \ref{Prop:Incidence Matrix}\textendash\ref{Prop:Incidence Matrix third}, each column of $V_{n_I}$ has exactly two $1$'s. In other words, 
\begin{equation*}
V_{n_I}(j_1,i_0) = V_{n_I}(j_2,i_0) = 1
\end{equation*}
holds for some $j_1, j_2 \in [n_I]$. 
Therefore, nodes $N(l_0,j_1)$ and $N(l_0,j_2)$ store the coded symbol $c_s$. Note that no other nodes can store $c_s$ since (\ref{Eqn:unique expression}) is the unique expression of $s$ into $(l_0,i_0)$ pair. This proves Lemma \ref{Prop:MBR for zero gammac}\textendash \ref{Prop:MBR for zero gammac_first}. Note that the two nodes, $N(l_0,j_1)$ and $N(l_0,j_2)$, which share $c_s$ are located in the same cluster $l_0$. This proves Lemma \ref{Prop:MBR for zero gammac}\textendash \ref{Prop:MBR for zero gammac_second} and Lemma \ref{Prop:MBR for zero gammac}\textendash \ref{Prop:MBR for zero gammac_third}.  
Finally, according to Proposition \ref{Prop:Incidence Matrix}\textendash\ref{Prop:Incidence Matrix second} for $t=n_I$, each row of $V_{n_I}$ has $(n_I-1)$ number of $1's$. Thus, $V_{n_I}(j,i_p)=1$ holds for some $\{i_p\}_{p=1}^{n_I-1} \subseteq \{1,2,\cdots, {n_I \choose 2}\}$. Therefore, node $N(l,j)$ contains $(n_I-1)$ coded symbols of $\{c_{(l-1){n_I \choose 2}+i_p}\}_{p=1}^{n_I-1}$.
This proves Lemma \ref{Prop:MBR for zero gammac}\textendash \ref{Prop:MBR for zero gammac_fourth}.

\section{Proof of Lemma \ref{Prop:MBR}}\label{Proof:Properties of suggested MBR codes}

Recall that the suggested coding scheme obeys the following rule: for $l \in [L]$ and $j \in [n_I]$, node $N(l,j)$ stores
\begin{itemize}
	\item $c_{i_1}$ if and only if $V_n(n_I(l-1)+j,i_1)=1$, for $i_1 \in [{n \choose 2}]$
	\item $c_{{n \choose 2} + (\chi l - \chi - l + t){n_I \choose 2}+i_2}$ 
	if and only if $V_{n_I}(j,i_2)=1$, for $i_2 \in [{n_I \choose 2}]$. This rule holds for every $t\in [\chi-1]$.
\end{itemize}

Note that the first rule deals with storing $c_s$ with $s\in S_1$, and the second rule stores $c_s$ with $s\in S_2$, where
\begin{align}
S_1 & \coloneqq	\{1,2,\cdots, {n \choose 2}\} \\
S_2 & \coloneqq \{{n \choose 2}+1,{n \choose 2}+2,\cdots, \theta\} 
\end{align}
Here, 
\begin{equation}
\theta = \binom{n}{2} + (\chi-1)\binom{n_I}{2} L,
\end{equation}
as in (\ref{Eqn:theta}). 

We first focus on the coded symbols $c_s$ for $s\in S_1$. The mathematical results are summarized in the following remark, with proofs given below. 


\begin{remark}\label{Rmk:MBR_simple}
	Consider coded symbols $c_s$ for $s\in S_1$ only. Then,
	\begin{itemize}
		\item Each coded symbol is stored in exactly two different storage nodes.
		\item Nodes in different clusters share one coded symbol.
		\item Nodes in the same cluster share one coded symbol.
		\item Each node contains $n-1$ coded symbols. 
	\end{itemize}
\end{remark}
\begin{proof}
	The first statement is directly obtained from Proposition \ref{Prop:Incidence Matrix}\textendash\ref{Prop:Incidence Matrix third}, while the second and third statements are obtained from Proposition \ref{Prop:Incidence Matrix}\textendash\ref{Prop:Incidence Matrix fourth}. Finally, the last statement is from Proposition \ref{Prop:Incidence Matrix}\textendash\ref{Prop:Incidence Matrix second}. 
\end{proof}

Now we focus on the coded symbols $c_s$ for $s\in S_2$. First, note that we can represent $s = s' + {n \choose 2}$ for $s' \in S_2'$ where 
\begin{equation}
S_2' \coloneqq \{1, 2, \cdots,  (\chi-1) {n_I \choose 2} L \}.
\end{equation}
Moreover, $s' \in S_2'$ can be uniquely represented as $(l,t,i_2)$ tuple, through the following steps. Also refer to Fig. \ref{Fig:S_2_prime}.

\begin{enumerate}
	\item Divide $S_2'$ into $L$ partitions $P_1, P_2, \cdots, P_L$, where each partition $P_l$ has
	\begin{equation}
	\Delta \coloneqq (\chi-1){n_I \choose 2}
	\end{equation}
	elements. 
	To be specific, the $L$ partitions are
	\begin{align*}
	P_1 &= \{1, 2, \cdots, \Delta\}, \\
	P_2 &= \{\Delta + 1, \Delta + 2, \cdots, 2\Delta\}, \\
	& \vdots  \\
	P_L &= \{(L-1)\Delta + 1, (L-1)\Delta + 2, \cdots, L\Delta\}.
	\end{align*}
	For each $s'\in S_2'$, we can uniquely assign $l$, the index of partition which includes $s'$. For example, since $2\Delta \in P_2$, we assign $l=2$ to $s' = 2\Delta$.		 
	\item Consider a specific $P_l$ of size $\Delta$. Divide it into $(\chi-1)$ partitions $P_{l,1}, P_{l,2}, \cdots, P_{l,\chi-1}$, where each partition $P_{l,t}$ has 
	\begin{equation}
	\delta \triangleq {n_I \choose 2}
	\end{equation} 
	elements. To be specific, for $l \in [L]$, we have
	\begin{align*}
	P_{l,1} &= \{(l-1)\Delta + 1, \cdots, (l-1)\Delta + \delta\}, \\
	P_{l,2} &= \{(l-1)\Delta + \delta + 1, \cdots, (l-1)\Delta + 2\delta\}, \\
	& \vdots  \\
	P_{l,\chi-1} &= \{(l-1)\Delta + (\chi-2)\delta + 1, \cdots, l\Delta\}.
	\end{align*}
	For each $s'\in S_2'$, we can uniquely assign $(l,t)$, the index pair of the partition which includes $s'$. For example, since $2\Delta \in P_{2,(\chi-1)}$, we assign $(l,t)=(2,\chi-1)$ to $s' = 2\Delta$.	 
	\item Note that each $s'\in S_2'$ belongs to a specific $P_{l,t}$. Let $i_2$ be the position of $s'$ within the set $P_{l,t}$. For example, since $2\Delta$ is located in the last (\textit{i.e.}, $\delta^{th}$) element of $P_{2,(\chi-1)}$, we assign $i_2 = \delta$ to $s' = 2\Delta$. Therefore, $s'=2\Delta$ can be uniquely expressed as $(l,t,i_2)=(2,\chi-1, \delta)$  tuple. 
\end{enumerate}

\begin{figure}[!t]
	\centering
	\includegraphics[width=80mm]{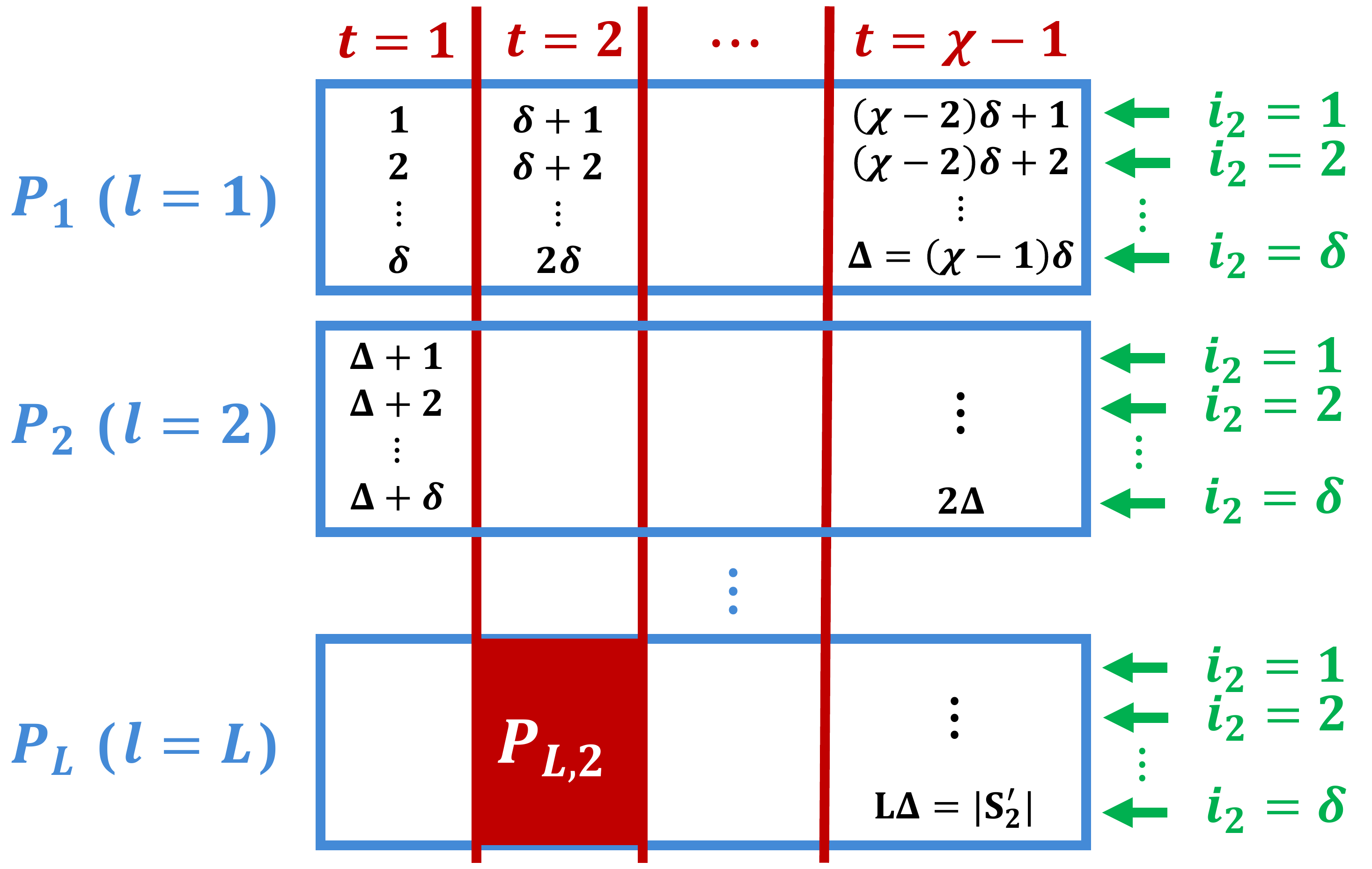}
	\caption{Representing $s' \in S_2'$ as a unique $(l,t,i_2)$ tuple}
	\label{Fig:S_2_prime}
\end{figure} 

Using the $(l,t,i_2)$ representation of $s' \in S_2'$, we can express
\begin{align}
s &= {n \choose 2} + s' \nonumber\\
&= {n \choose 2} + (l-1)(\chi-1){n_I \choose 2} + (t-1){n_I \choose 2} + i_2 \nonumber\\
&= {n \choose 2} + (\chi l-\chi-l+t){n_I \choose 2} + i_2 \label{Eqn:s_representation}
\end{align}
for $s \in S_2$. 
Note that there exists a one-to-one mapping between $s$ and $(l,t,i_2)$ tuple. Obtaining $s$ from $(l,t,i_2)$ is made clear in (\ref{Eqn:s_representation}), while the other direction is given as follows, which is from Fig. \ref{Fig:S_2_prime}.
\begin{align*}
s' &= s - {n \choose 2}, \quad l = \left\lceil{\frac{s'}{\Delta}}\right\rceil,
\quad t = \left\lceil{\frac{s'-(l-1)\Delta}{\delta}}\right\rceil, \\
i_2 &= s'-(l-1)\Delta-(t-1)\delta.
\end{align*}

Now we move onto our second remark, which is also proved. 

\begin{remark}\label{Rmk:MBR_complex}
	Consider coded symbols $c_s$ for $s\in S_2$ only. Then,
	\begin{itemize}
		\item Each coded symbol is stored in exactly two different storage nodes.
		\item Nodes in different clusters do not share any coded symbols.
		\item Nodes in the same cluster share $(\chi-1)$ coded symbols.
		\item Each node contains $(\chi-1)(n_I-1)$ coded symbols. 
	\end{itemize}
\end{remark}
\begin{proof}
	Consider an arbitrary coded symbol $c_s$ for $s \in S_2$. There exists an unique corresponding $(l,t,i_2)$ tuple. From Proposition \ref{Prop:Incidence Matrix}\textendash\ref{Prop:Incidence Matrix third}, there exists $j_1, j_2 \in [n_I]$ such that $V_{n_I}(j_1, i_2) = V_{n_I}(j_2, i_2) = 1$. Therefore, nodes $N(l,j_1)$ and $N(l,j_2)$ store $c_s$. Since the $(l,t,i_2)$ notation is unique for each $s \in S_2$, no other node can store $c_s$. This proves the first statement.  
	Moreover, since both $N(l,j_1)$ and $N(l,j_2)$ are in the $l^{th}$ cluster, each coded symbol $c_s$ is stored in exactly two different nodes in the same cluster. This proves the second statement.
	
	Consider arbitrary two nodes in the same cluster, denoted as $N(l,j_1)$ and $N(l,j_2)$. From Proposition \ref{Prop:Incidence Matrix}\textendash\ref{Prop:Incidence Matrix fourth}, there exists a unique $i_2 \in [{n_I \choose 2}]$ such that $V_{n_I}(j_1, i_2) = V_{n_I}(j_2, i_2) = 1$. Therefore, both the $N(l,j_1)$ and $N(l,j_2)$ nodes store
	\begin{equation*}
	c_{{n \choose 2} + (\chi l - \chi - l + t){n_I \choose 2}+i_2}
	\end{equation*}
	for $t=1,2,\cdots, \chi-1$. In other words, two nodes in the same cluster share $(\chi-1)$ coded symbols. This proves the third statement.
	
	Consider an arbitrary node $N(l,j)$. From Proposition \ref{Prop:Incidence Matrix}\textendash\ref{Prop:Incidence Matrix second}, there exist $\{i_2^{(v)}\}_{v=1}^{n_I-1}$ such that $V_{n_I}(j,i_2^{(v)})=1$ for $v=1,2,\cdots, n_I-1$. Thus, node $N(l,j)$ stores
	\begin{equation*}
	c_{{n \choose 2} + (\chi l - \chi - l + t){n_I \choose 2}+i_2^{(v)}}
	\end{equation*}
	for every $t \in [\chi-1]$ and $v \in [n_I-1]$. Therefore, each node stores $(\chi-1)(n_I-1)$ coded symbols, which completes the proof for the fourth statement.
\end{proof} 

From Remarks \ref{Rmk:MBR_simple} and \ref{Rmk:MBR_complex}, we obtain Lemma \ref{Prop:MBR}.

\section{Proof of Lemma \ref{Prop:Lower_bound on the number of retrievable coded symbols for zero gammac}}\label{Proof:Lower bound on the number of retrievable coded symbols_gammac0 case}

For a given contact vector $\boldsymbol{\omega}=[\omega_1, \omega_2, \cdots, \omega_L]$, we know that $\omega_1$ nodes are contacted from the $1^{st}$ cluster, $\omega_2$ nodes are contacted from the $2^{nd}$ cluster, and so on. Moreover, the total number of contacted nodes is $\sum_{l=1}^L \omega_l =k$ from the definition of the contact vector. From Lemma \ref{Prop:MBR for zero gammac}\textendash \ref{Prop:MBR for zero gammac_second} and Lemma \ref{Prop:MBR for zero gammac}\textendash \ref{Prop:MBR for zero gammac_third}, we have
\begin{equation} \label{Eqn:number of retrieved symbols for zero gammac}
n(\boldsymbol{\omega}) = k\alpha - \sum_{l=1}^L {\omega_l \choose 2}
\end{equation}
for a given $\boldsymbol{\omega} \in \Omega$.
Here, the first term represents the total number of coded symbols retrieved from $k$ nodes, each containing $\alpha$ symbols. The second term represents the number of symbols which are retrieved twice.

Consider $\boldsymbol{\omega}^* = [\omega_1^*,\cdots, \omega_L^*]$ where 
\begin{equation}\label{Eqn:optimal_a}
\omega_i^* = 
\begin{cases}
n_I,  & i \leq  	\lfloor k/n_I\rfloor \\
k-\left\lfloor\dfrac{k}{n_I}\right\rfloor n_I,  & i = \lfloor k/n_I\rfloor + 1 \\
0, & otherwise.
\end{cases}
\end{equation}
Given the sequence $\omega_1^*,\cdots, \omega_L^*$, let $\omega_{(i)}^*$ be the permuted sequence such that 
\begin{equation*}
\omega_{(1)}^* \geq \omega_{(2)}^* \geq \cdots \geq \omega_{(L)}^*
\end{equation*}
holds. Moreover, for a given arbitrary $\boldsymbol{\omega} = [\omega_1, \cdots, \omega_L] \in \Omega$, define $\omega_{(i)}$ as the permuted sequence such that
\begin{equation*}
\omega_{(1)} \geq \omega_{(2)} \geq \cdots \geq \omega_{(L)}. 
\end{equation*}
Then,  we have
\begin{align}
\sum_{i=1}^{L} \omega_i^* &= \sum_{i=1}^{L} \omega_i = k, \label{Eqn:sum is const}\\
\sum_{i=1}^{t} \omega_{(i)}^* &\geq \sum_{i=1}^{t} \omega_{(i)}  \text{  for } t=1,2,\cdots,L. \nonumber
\end{align} 
In other words, for arbitrary  $ \boldsymbol{\omega} \in \Omega$, we can conclude that $\boldsymbol{\omega}^*$ majorizes $\boldsymbol{\omega} $ (the definition of majorization is in \cite{vaidyanathan2010signal}), which is denoted as 
\begin{equation} \label{Eqn:majorization}
\boldsymbol{\omega}^* \succ \boldsymbol{\omega}.
\end{equation}
Note that $g(x) \coloneqq x^2$ is convex for real number $x$. Then, from Theorem 21.3 of \cite{vaidyanathan2010signal}, 
\begin{equation*}
f(\boldsymbol{\omega}) \coloneqq \sum_{i=1}^{L} g(\omega_i) =  \sum_{i=1}^{L} \omega_i^2 
\end{equation*}
is a schur-convex function on $\mathbb{R}^L$, where $\mathbb{R}$ is the set of real numbers. 
From the definition of Schur-convexity (definition 21.4 of \cite{vaidyanathan2010signal}), $\mathbf{x} \succ \mathbf{y}$ implies $f(\mathbf{x}) \geq f(\mathbf{y})$. Thus, from (\ref{Eqn:majorization}), 
\begin{equation*}
f(\boldsymbol{\omega}^*) = \sum_{i=1}^L (\omega_i^*)^2 \geq \sum_{i=1}^L (\omega_i)^2 = f(\boldsymbol{\omega})
\end{equation*}
holds for arbitrary $\boldsymbol{\omega} \in \Omega$. 
Therefore, 
\begin{align}\label{Eqn:majorization_result}
\boldsymbol{\omega}^* &=\aggregate{argmax}{\boldsymbol{\omega}\in\Omega}  \sum_{i=1}^L (\omega_i)^2 = \aggregate{argmax}{\boldsymbol{\omega}\in\Omega}  \sum_{i=1}^L \frac{\omega_i (\omega_i - 1)}{2}
\end{align}
where the last equality is from (\ref{Eqn:sum is const}).
From (\ref{Eqn:number of retrieved symbols for zero gammac}), we have
\begin{align*}
n(\boldsymbol{\omega}) &= k\alpha - \sum_{i=1}^L \frac{\omega_i (\omega_i - 1)}{2} \geq k\alpha - \sum_{i=1}^L \frac{\omega_i^* (\omega_i^* - 1)}{2} \nonumber\\
&= k\alpha - \frac{1}{2} ( {\boldsymbol{\omega}^*}^T \boldsymbol{\omega}^* - k).
\end{align*}
Using 
(\ref{Eqn:optimal_a}) and (\ref{Alpha_MBR_GammaC0}), this can be reduced as	
\begin{align}
n(\boldsymbol{\omega}) &\geq k (n_I-1) - \frac{1}{2} (q n_I^2 + r^2 - k) \nonumber\\
&= kn_I - \frac{1}{2}(q n_I^2 + r^2 + k). \label{Eqn:optimal_a_choose_2}
\end{align}
Moreover, using the definition in \eqref{Eqn:h_i}, we have
\begin{equation}\label{Eqn:h_i_expand}
h_i =
\begin{cases}
1, & \text{ if } i \in [g_1],\\
2, & \text{ if } i-g_1 \in [g_2],\\
\vdots & \\
n_I, & \text{ if } i- \sum_{l=1}^{n_I-1} g_l \in [g_{n_I}].
\end{cases}
\end{equation}
From \eqref{Alpha_MBR_GammaC0}, \eqref{Eqn:Capacity for gamma_c = 0} and \eqref{Eqn:h_i_expand}, the file size can be expressed as
\begin{align}
\mathcal{M} &= \sum_{i=1}^k (n_I-h_i)=\sum_{l=1}^{n_I} g_l (n_I-l) = kn_I - \sum_{l=1}^{n_I} l g_l \nonumber\\
&= kn_I - \frac{1}{2}(q n_I^2 + r^2 + k) \label{Eqn:file_size_zero_epsilon}
\end{align}
where the last two equalities are from \eqref{Eqn:Property1} and \eqref{Eqn:Property2}. Combining \eqref{Eqn:optimal_a_choose_2} and \eqref{Eqn:file_size_zero_epsilon}, we have $n(\boldsymbol{\omega}) \geq \mathcal{M}$ for all $\boldsymbol{\omega} \in \Omega$, which completes the proof.

\section{Proof of Lemma \ref{Prop:Lower_bound on the number of retrievable coded symbols}}\label{Proof:Lower bound on the number of retrievable coded symbols}
For a given contact vector $\boldsymbol{\omega}=[\omega_1, \omega_2, \cdots, \omega_L]$, we know that $\omega_1$ nodes are contacted in the $1^{st}$ cluster, $\omega_2$ nodes are contacted in the  $2^{nd}$ cluster, and so on. Moreover, the total number of contacted nodes is $\sum_{l=1}^L \omega_l =k$ from the definition of the contact vector. 
From Lemma \ref{Prop:MBR}\textendash\ref{Prop:MBR_second} and Lemma \ref{Prop:MBR}\textendash\ref{Prop:MBR_third}, we obtain 
\begin{equation} \label{Eqn:number of retrieved symbols}
n(\boldsymbol{\omega}) = k\alpha - {k \choose 2}- (\chi-1) \sum_{l=1}^L {\omega_l \choose 2},
\end{equation}
for a given $\boldsymbol{\omega} \in \Omega$.
Here, the first term represents the total number of coded symbols retrieved from $k$ nodes, each containing $\alpha$ symbols. Since any two distinct nodes share one coded symbol, we subtract the second term. Moreover, since nodes in the same cluster share $(\chi-1)$ extra symbols, we subtract the third term. 

In Appendix \ref{Proof:Lower bound on the number of retrievable coded symbols_gammac0 case}, it has been shown that the vector $\boldsymbol{\omega}^{*}$ defined in (\ref{Eqn:optimal_a}) satisfies equation (\ref{Eqn:majorization_result}), which says
\begin{equation*}
\boldsymbol{\omega}^* = \aggregate{argmax}{\boldsymbol{\omega}\in\Omega}  \sum_{i=1}^L {\omega_i \choose 2}.
\end{equation*}
Therefore, combining with (\ref{Eqn:number of retrieved symbols}), we have 
\begin{align*}
n(\boldsymbol{\omega}) &= k\alpha - {k \choose 2}- (\chi-1) \sum_{l=1}^L {\omega_l \choose 2}\\
&\geq  k\alpha - {k \choose 2}- (\chi-1) \sum_{l=1}^L {\omega_l^* \choose 2} \\
& = k\alpha - {k \choose 2}- \frac{1}{2}(\chi-1) (q n_I^2 + r^2 - k)= \mathcal{M}
\end{align*}
for all $\boldsymbol{\omega} \in \Omega$, where the second last equality is from (\ref{Eqn:optimal_a}), and the last equality is from (\ref{Eqn:Capacity for MBR}). 
This completes the proof of Lemma \ref{Prop:Lower_bound on the number of retrievable coded symbols}.

\section{Proof of Propositions}

\subsection{Proof of Proposition \ref{Prop:MBR_for_nonzero_beta_c}}\label{Proof:Capacity for MBR}

We begin with three properties, which help proving Proposition \ref{Prop:MBR_for_nonzero_beta_c}. Here, we use several definitions: $g_i, q$ and $r$ are defined in (\ref{Eqn:g_m}), (\ref{Eqn:quotient}), and (\ref{Eqn:remainder}), respectively. 

\textit{Property 1}: 
\begin{equation}\label{Eqn:Property1}
\sum_{i=1}^{n_I}g_i = k.
\end{equation}
\begin{proof}
	Note that 
	\begin{equation}\label{Eqn:g_i_brand_new}
	g_i = 
	\begin{cases}
	q+1, & i \leq r \\
	q, & \text{otherwise}
	\end{cases}
	\end{equation}
	Therefore, 
	\begin{equation*}
	\sum_{i=1}^{n_I}g_i = (q+1)r + q(n_I-r) = r + qn_I = k,
	\end{equation*}
	where the last equality is from (\ref{Eqn:remainder}).
\end{proof}

\textit{Property 2}: 
\begin{equation}\label{Eqn:Property2}
\sum_{i=1}^{n_I}ig_i = \frac{1}{2} (qn_I^2+r^2+k).
\end{equation}
\begin{proof}
	From (\ref{Eqn:g_i_brand_new}), 
	\begin{align*}
	\sum_{i=1}^{n_I}ig_i &= \sum_{i=1}^{r}(q+1)i + \sum_{i=r+1}^{n_I}qi = q \sum_{i=1}^{n_I}i + \sum_{i=1}^{r}i\\
	&= q \frac{n_I(n_I+1)}{2} + \frac{r(r+1)}{2} \\
	&= \frac{1}{2} (qn_I^2+r^2+qn_I+r)=	\frac{1}{2} (qn_I^2+r^2+k)
	\end{align*}
	where the last equality is from (\ref{Eqn:remainder}).
\end{proof}

\textit{Property 3}: 
\begin{equation}\label{Eqn:Property3}
\sum_{i=1}^{n_I}\sum_{j=1}^{g_i}\{ \sum_{m=1}^{i-1}g_m + j \} = \frac{k}{2} + \frac{k^2}{2}.
\end{equation}
\begin{proof}
	\begin{align*}
	(LHS) &= \sum_{i=1}^{n_I} \{g_i \sum_{m=1}^{i-1}g_m + \frac{g_i(g_i+1)}{2}\} \\
	&= \frac{1}{2} \{2\sum_{i=1}^{n_I}\sum_{m=1}^{i-1}g_ig_m +\sum_{i=1}^{n_I}g_i^2\} + \frac{1}{2} \sum_{i=1}^{n_I}g_i \\
	&= \frac{1}{2} \{2\sum_{i=1}^{n_I}\sum_{m=1}^{i-1}g_ig_m +\sum_{i=1}^{n_I}g_i^2\} + \frac{k}{2} = \frac{k^2}{2} + \frac{k}{2},
	\end{align*}
	where the second-last equality is from (\ref{Eqn:Property1}), and the last equality is from 
	\begin{equation*}
	k^2 = \{\sum_{i=1}^{n_I}g_i\}^2 = \sum_{i=1}^{n_I}g_i^2 + 2\sum_{i=1}^{n_I}\sum_{m=1}^{i-1}g_ig_m.
	\end{equation*}
\end{proof}

Note that we have 
\begin{equation}
\rho_i \beta_I + (n-\rho_i - (\sum_{m=1}^{i-1}g_m) - j) \beta_c \leq \gamma,\quad  \quad \forall i \in [n_I], \forall j \in [g_i]  
\end{equation}
according to Proposition 2 of \cite{sohn2018capacity}. 
Using $\alpha = \gamma$ from \eqref{Eqn:resource_for_epsilon_nonzero}, 
the capacity expression in (\ref{Eqn:Capacity of clustered DSS_rev}) reduces to
\begin{equation}\label{Eqn:capacity_simple}
\mathcal{M} = \sum_{i=1}^{n_I} \sum_{j=1}^{g_i} ( \rho_i\beta_I + (n-\rho_i - (\sum_{m=1}^{i-1}g_m) - j) \beta_c).
\end{equation}
Combining (\ref{Eqn:gamma}) and (\ref{Eqn:capacity_simple}), we have
\begin{align*}
\mathcal{M} &= \sum_{i=1}^{n_I} \sum_{j=1}^{g_i} \{\gamma - (i-1)\beta_I  - (j - i + \sum_{m=1}^{i-1}g_m ) \beta_c \}.
\end{align*}
Since $\beta_I = \chi$, $\beta_c = 1$ and $\alpha = \gamma$ from \eqref{Eqn:resource_for_epsilon_nonzero}, the capacity expression reduces to
\begin{align*}
\mathcal{M} = 
\sum_{i=1}^{n_I} & \sum_{j=1}^{g_i} \{\alpha - \chi(i-1)  + i -j - \sum_{m=1}^{i-1}g_m  \}\\
= \sum_{i=1}^{n_I} & (\alpha + \chi)g_i - \sum_{i=1}^{n_I} (\chi-1)ig_i\\
&- \sum_{i=1}^{n_I} \sum_{j=1}^{g_i} \{j + \sum_{m=1}^{i-1}g_m \}. 
\end{align*}
Using (\ref{Eqn:Property1}), (\ref{Eqn:Property2}), and (\ref{Eqn:Property3}), this in turn reduces to
\begin{align*}
\mathcal{M} = 
k&(\alpha+\chi) - (\chi-1)\frac{1}{2}(qn_I^2 + r^2 + k) - (\frac{k}{2} + \frac{k^2}{2})  \\
= k&\alpha + (\chi-1)k + k \\
& - (\chi-1)\frac{1}{2}(qn_I^2 + r^2 + k) - (\frac{k}{2} + \frac{k^2}{2}) \\
= k&\alpha - (\chi-1)\frac{1}{2}(qn_I^2 + r^2 - k) + k - (\frac{k}{2} + \frac{k^2}{2})\\
= k&\alpha - (\chi-1)\frac{1}{2}(qn_I^2 + r^2 - k) - {k \choose 2},
\end{align*}
which completes the proof.

\subsection{Proof of Proposition \ref{Prop:parameter_for_small_epsilon}}\label{Section:proof_of_prop_param_small_epsilon}
From Corollary 2 of \cite{sohn2018capacity}, the MSR point for $\epsilon = 0$ is
\begin{equation}\label{Eqn:MSR_point_review}
(\alpha, \gamma) = 
\left(\frac{\mathcal{M}}{\tau + \sum_{i=\tau + 1}^{k} z_i}, \frac{\mathcal{M}}{\tau + \sum_{i=\tau + 1}^{k} z_i} \frac{\sum_{i=\tau+1}^{k} z_i}{s_{\tau}} \right),
\end{equation}
where 
\begin{align}
\tau &= \max \{ t \in \{0,1,\cdots, k-1\} : z_t \geq 1 \}, \label{Eqn:tau}\\
z_t &= n_I - h_t, 
\label{Eqn:z_t}\\
h_t &= \min \{s \in [n_I] : \sum_{l=1}^s g_l \geq t \}, \label{Eqn:h_t}\\
s_t &= \frac{\sum_{i=t+1}^{k}z_i}{(n_I-1)+\epsilon(n-n_I)}, 
\label{Eqn:s_t}
\end{align}
where $g_l$ is in \eqref{Eqn:g_m}.
Note that 
\begin{align}\label{Eqn:tau_sum}
\tau + \sum_{i=\tau+1}^k z_i = k-q,
\end{align}
which can be proved as follows. 

First, from \eqref{Eqn:h_t}, we have 
\begin{equation}
\begin{cases}
h_t = n_I, & \quad \quad t > \sum_{l=1}^{n_I-1} g_l = k - g_{n_I} = k - q,\\
h_t < n_I, & \quad \quad  1 \leq t \leq k-q.
\end{cases}
\end{equation}
Thus, from \eqref{Eqn:z_t}, we have 
\begin{equation}
\begin{cases}
z_t = 0, & \quad \quad k - q < t \leq k,\\
z_t \geq 1, & \quad \quad  1 \leq t \leq k-q,
\end{cases}
\end{equation}
which results in 
\begin{equation}\label{Eqn:tau_result}
\tau = k-q.
\end{equation}
Since $z_i = 0$ for $i \geq \tau+1$ from the definition of $\tau$, we directly obtain \eqref{Eqn:tau_sum} from \eqref{Eqn:tau_result}.

Next, from the definition of $s_{t}$ in \eqref{Eqn:s_t} and the setting of $\epsilon=0$, we have
\begin{equation}\label{Eqn:s_tau}
s_{\tau} = \frac{\sum_{i=\tau+1}^{k} z_i}{n_I-1}.
\end{equation}
Combining \eqref{Eqn:MSR_point_review},\eqref{Eqn:tau_sum} and \eqref{Eqn:s_tau} obtains \eqref{Eqn:parameters_for_small_epsilon}. Moreover, since $\beta_c=0$, the equation \eqref{Eqn:gamma} for $\gamma$ reduces to  $\gamma = (n_I-1)\beta_I + (n-n_I)\beta_c = (n_I-1)\beta_I$. Note that $\gamma = (n_I-1)\alpha$ holds from \eqref{Eqn:parameters_for_small_epsilon}. Thus, we have $\beta_I=\alpha$, which completes the proof.

\subsection{Proof of Proposition \ref{Prop:parameter_for_large_epsilon}}\label{Section:proof_of_prop_param_large_epsilon}

We consider the $\beta_c = 1$ case without losing generality. This implies
\begin{equation}\label{Eqn:beta_I_large_epsilon}
\beta_I = 1/\epsilon
\end{equation}
according to the definition $\epsilon=\beta_c/\beta_I$. 
From Corollary 2 of \cite{sohn2018capacity}, the MSR point for $\frac{1}{n-k} \leq \epsilon \leq 1$ is given as
\begin{equation}\label{Eqn:MSR_point_review2}
(\alpha, \gamma) = (\frac{\mathcal{M}}{k}, \frac{\mathcal{M}}{k}\frac{1}{s_{k-1}}),
\end{equation}
where 
\begin{align}\label{Eqn:s_seq}
s_{k-1} &= \frac{(n-k)\epsilon}{(n_I-1) + \epsilon (n-n_I)} = \frac{n-k}{\frac{n_I-1}{\epsilon} + n-n_I}
\end{align}
from the definition of $\{s_i\}$ in \cite{sohn2018capacity}. 
Combining (\ref{Eqn:MSR_point_review2}) and (\ref{Eqn:s_seq}) results in (\ref{Eqn:parameters_for_large_epsilon}).
Note that $\gamma$ in (\ref{Eqn:gamma}) can be expressed as
\begin{align}
\gamma &= (n-n_I)\beta_c + (n_I-1)\beta_I = (n-n_I) + (n_I-1)/\epsilon , \label{Eqn:gamma_epsilon}
\end{align}
where the last equality holds due to \eqref{Eqn:beta_I_large_epsilon}.
Combining \eqref{Eqn:parameters_for_large_epsilon} and (\ref{Eqn:gamma_epsilon}), we obtain
\begin{equation*}
\gamma = \frac{\mathcal{M}}{k}\frac{\gamma}{n-k},
\end{equation*}
which gives
\begin{equation}\label{Eqn:Capacity_MSR_point}
\mathcal{M} = k(n-k).
\end{equation}
Using $\alpha=\mathcal{M}/k$ in (\ref{Eqn:MSR_point_review2}),
we have
\begin{equation}
\alpha = n-k.
\end{equation}
This completes the proof.

\bibliographystyle{IEEEtran}
\bibliography{IEEEabrv,TIT_Code}

\begin{thebibliography}{10}
\providecommand{\url}[1]{#1}
\csname url@samestyle\endcsname
\providecommand{\newblock}{\relax}
\providecommand{\bibinfo}[2]{#2}
\providecommand{\BIBentrySTDinterwordspacing}{\spaceskip=0pt\relax}
\providecommand{\BIBentryALTinterwordstretchfactor}{4}
\providecommand{\BIBentryALTinterwordspacing}{\spaceskip=\fontdimen2\font plus
\BIBentryALTinterwordstretchfactor\fontdimen3\font minus
  \fontdimen4\font\relax}
\providecommand{\BIBforeignlanguage}[2]{{%
\expandafter\ifx\csname l@#1\endcsname\relax
\typeout{** WARNING: IEEEtran.bst: No hyphenation pattern has been}%
\typeout{** loaded for the language `#1'. Using the pattern for}%
\typeout{** the default language instead.}%
\else
\language=\csname l@#1\endcsname
\fi
#2}}
\providecommand{\BIBdecl}{\relax}
\BIBdecl

\bibitem{sohn2018class}
J.-y. Sohn, B.~Choi, and J.~Moon, ``A class of {MSR} codes for clustered
  distributed storage,'' in \emph{2018 IEEE International Symposium on
  Information Theory (ISIT)}.\hskip 1em plus 0.5em minus 0.4em\relax IEEE,
  2018, pp. 2366--2370.

\bibitem{dimakis2010network}
A.~G. Dimakis, P.~B. Godfrey, Y.~Wu, M.~J. Wainwright, and K.~Ramchandran,
  ``Network coding for distributed storage systems,'' \emph{IEEE Transactions
  on Information Theory}, vol.~56, no.~9, pp. 4539--4551, 2010.

\bibitem{rashmi2011optimal}
K.~V. Rashmi, N.~B. Shah, and P.~V. Kumar, ``Optimal exact-regenerating codes
  for distributed storage at the msr and mbr points via a product-matrix
  construction,'' \emph{IEEE Transactions on Information Theory}, vol.~57,
  no.~8, pp. 5227--5239, 2011.

\bibitem{rashmi2009explicit}
K.~Rashmi, N.~B. Shah, P.~V. Kumar, and K.~Ramchandran, ``Explicit construction
  of optimal exact regenerating codes for distributed storage,'' in
  \emph{Communication, Control, and Computing, 2009. Allerton 2009. 47th Annual
  Allerton Conference on}.\hskip 1em plus 0.5em minus 0.4em\relax IEEE, 2009,
  pp. 1243--1249.

\bibitem{shah2012TIT}
N.~B. Shah, K.~V. Rashmi, P.~V. Kumar, and K.~Ramchandran, ``Distributed
  storage codes with repair-by-transfer and nonachievability of interior points
  on the storage-bandwidth tradeoff,'' \emph{IEEE Transactions on Information
  Theory}, vol.~58, no.~3, pp. 1837--1852, 2012.

\bibitem{suh2011exact}
C.~Suh and K.~Ramchandran, ``Exact-repair {MDS} code construction using
  interference alignment,'' \emph{IEEE Transactions on Information Theory},
  vol.~57, no.~3, pp. 1425--1442, 2011.

\bibitem{cadambe2013asymptotic}
V.~R. Cadambe, S.~A. Jafar, H.~Maleki, K.~Ramchandran, and C.~Suh, ``Asymptotic
  interference alignment for optimal repair of mds codes in distributed
  storage,'' \emph{IEEE Transactions on Information Theory}, vol.~59, no.~5,
  pp. 2974--2987, 2013.

\bibitem{ernvall2014codes}
T.~Ernvall, ``Codes between {MBR} and {MSR} points with exact repair
  property,'' \emph{IEEE Transactions on Information Theory}, vol.~60, no.~11,
  pp. 6993--7005, 2014.

\bibitem{Goparaju2017TIT}
S.~Goparaju, A.~Fazeli, and A.~Vardy, ``Minimum storage regenerating codes for
  all parameters,'' \emph{IEEE Transactions on Information Theory}, vol.~63,
  no.~10, pp. 6318--6328, Oct 2017.

\bibitem{ye2017explicit}
M.~Ye and A.~Barg, ``Explicit constructions of high-rate {MDS} array codes with
  optimal repair bandwidth.'' \emph{IEEE Trans. Information Theory}, vol.~63,
  no.~4, pp. 2001--2014, 2017.

\bibitem{sohn2016capacity}
J.~y.~Sohn, B.~Choi, S.~W. Yoon, and J.~Moon, ``Capacity of clustered
  distributed storage,'' in \emph{2017 IEEE International Conference on
  Communications (ICC)}, May 2017.

\bibitem{prakash2017storage}
N.~Prakash, V.~Abdrashitov, and M.~M{\'e}dard, ``The storage vs
  repair-bandwidth trade-off for clustered storage systems,'' \emph{arXiv
  preprint arXiv:1701.04909}, 2017.

\bibitem{hu2017optimal}
Y.~Hu, X.~Li, M.~Zhang, P.~P. Lee, X.~Zhang, P.~Zhou, and D.~Feng, ``Optimal
  repair layering for erasure-coded data centers: From theory to practice,''
  \emph{arXiv preprint arXiv:1704.03696}, 2017.

\bibitem{sohn2018capacity}
J.-y. Sohn, B.~Choi, S.~W. Yoon, and J.~Moon, ``Capacity of clustered
  distributed storage,'' \emph{IEEE Transactions on Information Theory},
  vol.~65, no.~1, pp. 81--107, 2019.

\bibitem{prakash2018TIT}
N.~Prakash, V.~Abdrashitov, and M.~Médard, ``The storage vs repair-bandwidth
  trade-off for clustered storage systems,'' \emph{IEEE Transactions on
  Information Theory}, pp. 1--1, 2018.

\bibitem{choi2017secure}
B.~Choi, J.-y. Sohn, S.~W. Yoon, and J.~Moon, ``Secure clustered distributed
  storage against eavesdroppers,'' in \emph{2017 IEEE International Conference
  on Communications (ICC)}.\hskip 1em plus 0.5em minus 0.4em\relax IEEE, 2017,
  pp. 1--6.

\bibitem{choi2019secure}
------, ``Secure clustered distributed storage against eavesdropping,''
  \emph{IEEE Transactions on Information Theory}, 2019.

\bibitem{papailiopoulos2014locally}
D.~S. Papailiopoulos and A.~G. Dimakis, ``Locally repairable codes,''
  \emph{IEEE Transactions on Information Theory}, vol.~60, no.~10, pp.
  5843--5855, 2014.

\bibitem{tamo2016optimal}
I.~Tamo, D.~S. Papailiopoulos, and A.~G. Dimakis, ``Optimal locally repairable
  codes and connections to matroid theory,'' \emph{IEEE Transactions on
  Information Theory}, vol.~62, no.~12, pp. 6661--6671, 2016.

\bibitem{vaidyanathan2010signal}
P.~Vaidyanathan, S.-M. Phoong, and Y.-P. Lin, \emph{Signal processing and
  optimization for transceiver systems}.\hskip 1em plus 0.5em minus 0.4em\relax
  Cambridge University Press, 2010.

\bibitem{gopalan2012locality}
P.~Gopalan, C.~Huang, H.~Simitci, and S.~Yekhanin, ``On the locality of
  codeword symbols,'' \emph{IEEE Transactions on Information Theory}, vol.~58,
  no.~11, pp. 6925--6934, 2012.

\bibitem{kamath2013ISIT}
G.~M. Kamath, N.~Silberstein, N.~Prakash, A.~S. Rawat, V.~Lalitha, O.~O.
  Koyluoglu, P.~V. Kumar, and S.~Vishwanath, ``Explicit {MBR} all-symbol
  locality codes,'' in \emph{2013 IEEE International Symposium on Information
  Theory}, July 2013, pp. 504--508.

\bibitem{kamath2014TIT}
G.~M. Kamath, N.~Prakash, V.~Lalitha, and P.~V. Kumar, ``Codes with local
  regeneration and erasure correction,'' \emph{IEEE Transactions on Information
  Theory}, vol.~60, no.~8, pp. 4637--4660, Aug 2014.

\bibitem{rawat2014optimal}
A.~S. Rawat, O.~O. Koyluoglu, N.~Silberstein, and S.~Vishwanath, ``Optimal
  locally repairable and secure codes for distributed storage systems,''
  \emph{IEEE Transactions on Information Theory}, vol.~60, no.~1, pp. 212--236,
  2014.

\bibitem{silberstein2018locality}
N.~Silberstein, T.~Etzion, and M.~Schwartz, ``Locality and availability of
  array codes constructed from subspaces,'' \emph{IEEE Transactions on
  Information Theory}, 2018.

\bibitem{tebbi2014code}
M.~A. Tebbi, T.~H. Chan, and C.~W. Sung, ``A code design framework for
  multi-rack distributed storage,'' in \emph{Information Theory Workshop (ITW),
  2014 IEEE}.\hskip 1em plus 0.5em minus 0.4em\relax IEEE, 2014, pp. 55--59.

\bibitem{chen2019explicit}
Z.~Chen and A.~Barg, ``Explicit constructions of {MSR} codes for clustered
  distributed storage: The rack-aware storage model,'' \emph{arXiv preprint
  arXiv:1901.04419}, 2019.

\bibitem{sahraei2017increasing}
S.~Sahraei and M.~Gastpar, ``Increasing availability in distributed storage
  systems via clustering,'' \emph{arXiv preprint arXiv:1710.02653}, 2017.

\bibitem{ahmad2014shufflewatcher}
F.~Ahmad, S.~T. Chakradhar, A.~Raghunathan, and T.~Vijaykumar,
  ``Shufflewatcher: Shuffle-aware scheduling in multi-tenant mapreduce
  clusters,'' in \emph{2014 USENIX Annual Technical Conference (USENIX ATC
  14)}, 2014, pp. 1--13.

\end{thebibliography}

\renewenvironment{IEEEbiography}[1]
{\IEEEbiographynophoto{#1}}
{\endIEEEbiographynophoto}

\begin{IEEEbiography}{Jy-yong Sohn}
	(S'15) received the B.S. and M.S. degrees in electrical engineering from the Korea Advanced Institute of Science and Technology (KAIST), Daejeon, Korea, in 2014 and 2016. He is currently pursuing the Ph.D. degree in KAIST. His research interests include coding for distributed storage/computing, distributed learning and information theory. He received the KAIST EE Best Research Achievement Award in 2018, the IEEE International Conference on Communications (ICC) Best Paper Award in 2017, and the Qualcomm Innovation Award in 2015.
\end{IEEEbiography}

\begin{IEEEbiography}{Beongjun Choi}
	(S'17) received the B.S. and M.S. degrees in mathematics and electrical engineering from the Korea Advanced Institute of Science and Technology (KAIST), Daejeon, Korea, in 2014 and 2017. He is currently pursuing the electrical engineering Ph.D degree in KAIST. His research interests include blockchain, coding for distributed storage system and information theory. He is a co-recipient of the IEEE international conference on communications (ICC) best paper award in 2017.
\end{IEEEbiography}

\begin{IEEEbiography}{Jaekyun Moon}
	(F'05) received the Ph.D degree in electrical and computer engineering at Carnegie Mellon University, Pittsburgh, Pa, USA. He is currently a Professor of electrical engineering at KAIST. From 1990 through early 2009, he was with the faculty of the School of Electrical and Computer Engineering at the University of Minnesota, Twin Cities. He consulted as Chief Scientist for DSPG, Inc. from 2004 to 2007. He also worked as Chief Technology Officer at Link-A-Media Devices Corporation. His research interests are in the area of channel characterization, signal processing and coding for data storage and digital communication. Prof. Moon received the McKnight Land-Grant Professorship from the University of Minnesota. He received the IBM Faculty Development Awards as well as the IBM Partnership Awards. He was awarded the National Storage Industry Consortium (NSIC) Technical Achievement Award for the invention of the maximum transition run (MTR) code, a widely used error-control/modulation code in commercial storage systems. He served as Program Chair for the 1997 IEEE Magnetic Recording Conference. He is also Past Chair of the Signal Processing for Storage Technical Committee of the IEEE Communications Society. He served as a guest editor for the 2001 IEEE JSAC issue on Signal Processing for High Density Recording. He also served as an Editor for IEEE TRANSACTIONS ON MAGNETICS in the area of signal processing and coding for 2001-2006. He is an IEEE Fellow.
\end{IEEEbiography}\vfill

\end{document}